\newcommand{\beq}{\begin{equation}}
\newcommand{\eeq}{\end{equation}}
\newcommand{\beqs}{\begin{eqnarray}}
\newcommand{\eeqs}{\end{eqnarray}}
\newtheorem{theorem}{Theorem}[section]
\newtheorem{corollary}{Corollary}[section]
\newtheorem{lemma}{Lemma}[section]
\newtheorem{proposition}{Proposition}[section]
\DeclareMathOperator{\sign}{sign}
\titleformat*{\section}{\centering \large\bfseries}
\titleformat*{\subsection}{\centering \large\itshape}
\title{The designs and deformations of rigidly and flat-foldable quadrilateral mesh origami}
\author[1]{Fan Feng}
\author[2]{Xiangxin Dang}
\author[3]{Richard D. James}
\author[4]{Paul Plucinsky}
\affil[1]{Cavendish Laboratory, University of Cambridge, UK}
\affil[2]{State Key Laboratory for Turbulence and Complex System, Department of Mechanics and Engineering Science, College of Engineering, Peking University, Beijing 100871, China}
\affil[3]{Aerospace Engineering and Mechanics, University of Minnesota, USA} 
\affil[4]{Aerospace and Mechanical Engineering, University of Southern California, USA}
\begin{document}

\maketitle 

\begin{abstract}
Rigidly and flat-foldable quadrilateral mesh origami is the class of quadrilateral mesh crease patterns with one fundamental property:  the patterns can be folded from flat to fully-folded flat by a continuous one-parameter family of piecewise affine deformations that do not stretch or bend the mesh-panels.   In this work, we explicitly characterize the designs and deformations of all possible rigidly and flat-foldable quadrilateral mesh origami.  Our key idea is a rigidity theorem (Theorem \ref{DesignTheorem}) that characterizes compatible crease patterns surrounding a single panel and enables us to march from panel to panel to compute the pattern and its corresponding deformations explicitly.  The marching procedure is computationally efficient.  So we use it to formulate the inverse problem: to design a crease pattern to achieve a targeted shape along the path of its rigidly and flat-foldable motion. The initial results on the inverse problem are promising and suggest a broadly useful engineering design strategy for shape-morphing with origami.

%
\end{abstract}

\section{Introduction}

Origami is a popular Japanese artform that aims to achieve complex shape by the intricate folding of an initially flat piece of paper.  The state of the art in origami design incorporates mathematical principles for packing problems and well-known fold operations into algorithms capable of producing crease patterns that can be folded into a wide range of targeted shapes \cite{demaine2000folding, demaine2017origamizer, lang1996computational, tachi2009origamizing}.  While striking in their achievements, the crease patterns emerging from these algorithms need to be folded by a delicate sequence of (often numerous) steps to reach the final shape.  As an artform, this is hardly an issue to the skilled Origamist. However, the use of origami in engineering involves modalities of folding such as simple external loading, motors, active materials, hinge offsets for thick origami panels \cite{Chen396}, and hinge mechanisms \cite{schenk2011origami, FILIPOV201726}; none of which  demonstrate capabilities anywhere close to that of an Origamist. As a result, the utility of these algorithms for practical engineering is limited to the simplest of designs and shapes \cite{felton2014method}.

In contrast, the famed Miura-Ori  pattern shown in Fig.\;\ref{fig:MiuraOri} was adapted as a concept for packaging and deploying large space membranes \cite{miura1985method}, and has since been the topic of numerous investigations in the engineering literature.  This pattern is made of a repeating corrugated unit cell (in bold) whose crease ``mountain-valley assignments" are biased in the flat state (red/blue).   Because of the geometry of the unit cell and this bias,  the Miura-Ori has a unique (up to rigid motions), continuous one-parameter family of deformations that do not stretch or bend any of the panels; that of a uniform compressive motion from flat to folded flat.  The Miura-Ori crease pattern is  also inherently confined to a thin sheet.  Thus, geometry and elasticity \cite{conti2008confining,friesecke2002theorem} dictate that a carefully designed Miura-Ori, involving stiff panels and flexible creases, can achieve this large coordinated motion by a variety of stimuli in a fashion robust to non-idealities.  In fact, every modality discussed above has proven effective at folding a Miura-Ori\textemdash most at a variety of length and time-scales\textemdash and this made the Miura-Ori a popular template for exploring functionality enabled by shape-morphing: for tunable and novel properties in  compression \cite{silverberg2014using}, bending \cite{wei2013geometric}, and as a meta-material \cite{schenk2013geometry}; for self-folding at the microscale \cite{na2015programming} and mesoscale \cite{tolley2014self}; for physically driven self-origanization \cite{mahadevan2005self}; for a fast and controlled response in soft-robotics \cite{kim2018printing}; and for ease of deployment in space structure applications \cite{miura1985method, pellegrino2014deployable}.  

\begin{figure}
\centering
\includegraphics[width = 6.5in]{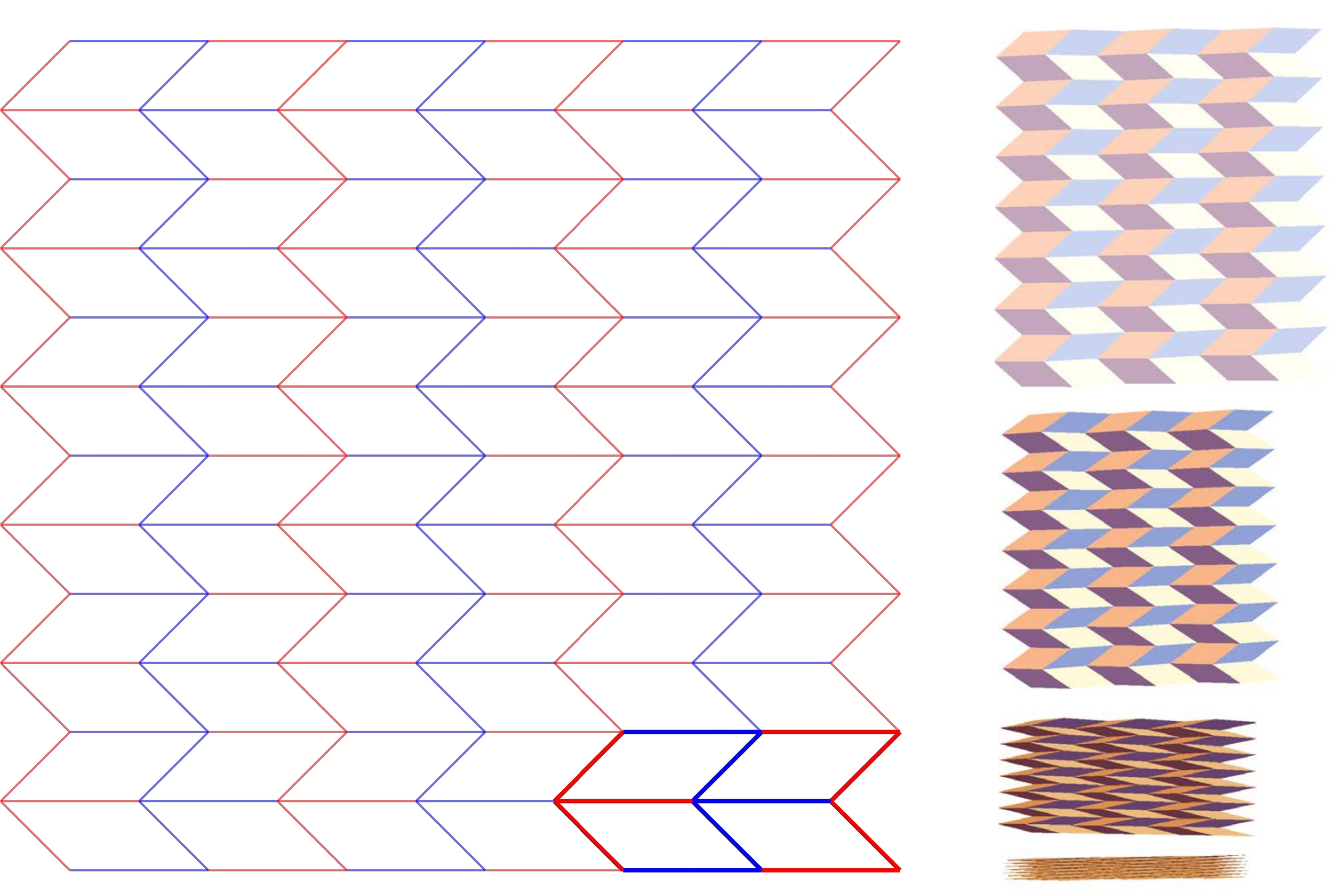}
\caption{The Miura-Ori crease pattern and its rigidly foldable motion.}
\label{fig:MiuraOri}
\end{figure}

All this notwithstanding, only so much functionality can be squeezed out of the shape-morphing capability of a Miura-Ori.  Yet, existing predictive algorithms for producing complex shapes by sequential folding, as discussed above, are quite limited.  We aim to fill the gap.  To this point, we note that the Miura-Ori is but the most basic example in the class of origami termed \textit{rigidly and flat-foldable quadrilateral mesh origami} (RFFQM). This class is composed of quadrilateral mesh crease patterns with the defining property: when the crease pattern is suitably biased in the flat state, by imposing the correct mountain-valley assignments, the pattern can fold from flat to fully-folded flat by a unique (up to rigid motion) and continuous one-parameter family of piecewise-affine deformations that do not stretch or bend the mesh-panels.  As this shape-morphing capability is desirable in  many of the aforementioned applications, we find it natural to examine the configuration space of RFFQM in detail. 

The  understanding of RFFQM has evolved significantly within the past several years.  Kawasaki  \cite{kawasaki1991relation} asserted a necessary condition for flat-foldability at a single degree-4 vertex, proved by Hull \cite{hull1994mathematics}.  Huffman parameterized the kinematics at flat-foldable degree-4 vertices \cite{huffman1976curvature}. Belacastro and Hull \cite{hull2002modelling} formalized conditions of compatibility for origami using piecewise affine transformations.  Tachi \cite{tachi2009generalization} revitalized the subject by isolating the fundamental questions of compatibility and solving them in special cases by perturbative methods, yielding surprising examples.   Finally, Lang and Howell \cite{lang2018rigidly} made clever use of observations due to Tachi \cite{tachi2010geometric} to derive an algorithm that parameterizes the configuration space of RFFQM.

In this paper, we develop some new lines of thinking, while expanding on many ideas introduced previously. On the latter, it turns out that every fundamental question in origami design\footnote{Except for open issues of global invertibility, which we do not discuss.} can be reduced to  the question of ``how to link up neighboring folding angles by parameterizations that depend, in a delicate way, on the geometry of the crease pattern".   To the adept Origamist, like those referenced previously, this is apparently a natural matter of fact.  As such, it serves as a starting point to any development in the literature.  From the point of view of continuum mechanics, however, the more natural observation is: \textit{any} origami\textemdash in the purest sense of the definition\textemdash is the result of an explicit deformation of an initially flat crease pattern.  Thus, questions and answers about designing origami crease patterns and parameterizing their kinematics can be addressed by working directly with deformations; specifically, by deriving and characterizing the local compatibility conditions on deformation gradients that are necessary and sufficient for continuity of the overall deformation.  

The use of the deformation as the basic kinematic object is an underdeveloped aspect of origami design.  We develop it in Section \ref{sec:LocalKinematics} for the following purposes:
\begin{itemize}
\item  to derive and characterize  \textit{vertex compatibility}, i.e., the conditions under which the crease pattern around a single vertex can be folded as origami;
\item  to derive \textit{panel compatibility}, i.e., the conditions under which a crease pattern surrounding a single panel can folded as RFFQM;
\item to explicitly characterize the full kinematics of RFFQM when such compatibility is achieved.
\end{itemize}
The last point is a uniquely beneficial aspect of working directly with deformations. By this characterization, we fully determine  the deformations surrounding a single vertex and surrounding a single panel up to an overall rigid body motion of the entire crease pattern.  The global deformation of the overall crease pattern must then be consistent with this local deformation. 


Formalizing these necessary and sufficient conditions for compatibility brings several observations and results new to the origami literature. We first show that parameterizations of the folding angles  that solve vertex compatibility belong to an underlying Abelian group. We then use properties of this group to characterize panel compatibility in full generality.  The key ideas  are: 1) Panel compatibility is a condition that a certain composition of maps in this group is the identity map, and 2) a fundamental property of the group is that composition of maps is, in a precise sense, the same as multiplication of parameters.  Thus, an apparently hard problem of characterizing a composition of maps is reduced to an easier problem of characterizing a multiplication of parameters.   From this reduction, we derive a design theorem for RFFQM which, stated informally, addresses the following question: If we know the sector angles and mountain-valley assignments at three of four vertices surrounding the panel, can we design the fourth vertex so that the panel's crease pattern can deform as RFFQM?  The answer is generically yes. In fact, there is a unique characterization of the fourth vertex, which we provide via explicit design formulas in Theorem \ref{DesignTheorem}.  

Because this   theorem can be iterated, it enables a broad engineering design principle for RFFQM (Section \ref{sec:Applications}).  
 First, we explicitly parameterize the deformations of all possible RFFQM by marching from panel to panel and repeatedly applying the design theorem.  This procedure results in a computationally efficient marching algorithm for the configuration space formally similar to Lang and Howell's \cite{lang2018rigidly}, but with distinctly different parameterization.  Our parameterization involves purely referential quantities of the flat crease pattern (sector angles and mountain-valley assignments) whereas theirs involves kinematic variables (folding angles).  We find that the referential quantities are natural variables to use in formulating the inverse problem. Specifically, we incorporate this algorithm into an optimization scheme for producing a crease pattern to achieve a targeted shape along the path of its rigidly and flat-foldable motion.  Finally, we illustrate the power of this approach by studying slight perturbations of the  Miura-Ori crease pattern in this context. Note, Dudte et al.\;\cite{dudte2016programming} achieved a variety of curvatures by perturbed Miura-Ori, but their patterns are not RFFQM and can only be deployed by slightly stressing the panels.  In contrast, our strategy works with perturbations consistent with RFFQM only, yielding patterns that are (ideally) stress-free in deployment.  Importantly, even under this deployability consideration, we are still able to achieve striking examples of  origami that approximate  surfaces of positive Gaussian curvature, negative Gaussian curvature and changes in Gaussian curvature. 

With this work, we seek to introduce a continuum mechanics approach to origami and bring general and easily implementable methodologies for engineering design to the forefront. Our overarching motivation is that applications ranging from  space structure design \cite{arya2016packaging,arya2016ultralight},  to biomedical devices \cite{kuribayashi2006self} and soft robotics \cite{kim2018printing} often seek a targeted final shape and the ability to achieve it by  \textit{controlled deployment} from an easily manufactured or highly compact state.   In RFFQM, we have an easily manufactured state (the flat state), a highly compact state (the folded flat state), and controlled deployment by a single degree of freedom motion; in other words, we have a template for shape-morphing in engineering design.   While we are in the early stages of examining the versatility and utility of these methods, the results here\textemdash and those in forthcoming work on the inverse problem \cite{dang_inverse}\textemdash point towards a broad engineering design tool for shape-morphing with RFFQM.

\textbf{Notation.} Throughout, we deal with real valued scalars $\gamma \in \mathbb{R}$, vectors $\mathbf{v} \in \mathbb{R}^3$ and matrices  $\mathbf{F} \in \mathbb{R}^{3\times3}$.  We let $\mathbb{S}^2$ denote the set of all unit vectors on $\mathbb{R}^3$ and $SO(3)$ denote the set of all rotation matrices in $\mathbb{R}^{3\times3}$, called simply rotations below.  The vectors $\{ \mathbf{e}_1, \mathbf{e}_2, \mathbf{e}_3\}$ will always refer to the standard orthonormal basis on $\mathbb{R}^3$.   We study exclusively deformations which do not bend or stretch the mesh-panels within a crease pattern, i.e., we study \textit{origami deformations}.  Denote an overall crease pattern by $\Omega \subset \mathbb{R}^2$ and let $\mathbf{y} \colon \Omega \rightarrow \mathbb{R}^3$ be an origami deformation of this crease pattern.  Then, $\mathbf{y}$  is defined by the following two properties:
\begin{itemize}
\item If we isolate any single mesh-panel $\mathcal{P} \subset \Omega$, then the deformation on this panel has the form 
\begin{equation}
\begin{aligned}\label{eq:origamiDefinition}
\mathbf{y}(x_1, x_2) = \mathbf{R}  \mathbf{x} + \mathbf{b}, \quad \mathbf{x} = x_1 \mathbf{e}_1 + x_2 \mathbf{e}_2 \in \mathcal{P}
\end{aligned}
\end{equation}
for some rotation $\mathbf{R} \in SO(3)$ and translation $\mathbf{b} \in \mathbb{R}^3$.
\item  The rotation and translation can change from panel to panel but only in such a way that the deformation is continuous on the entire crease pattern $\Omega$.
\end{itemize}
We often refer to the rotations in the formula (\ref{eq:origamiDefinition}) as \textit{deformation gradients}.  This is a slight abuse of notation.  Since $\mathbf{y}$ maps a subset of $\mathbb{R}^2$ to $\mathbb{R}^3$, the deformation gradients of origami are actually linear transformations of the form $\widetilde{\mathbf{R}} \in O(2,3) := \{ \widetilde{\mathbf{F}} \in \mathbb{R}^{3\times2} \colon \widetilde{\mathbf{F}}^T \widetilde{\mathbf{F}} = \mathbf{I} \}$.   For our purposes, however, this distinction is unimportant, since
a $3 \times 2$ component of the rotation matrix determines uniquely its third column as the cross product of the first two.  So we prefer to work consistently here with vectors on $\mathbb{R}^3$ and matrices on $\mathbb{R}^{3 \times3}$.  We refer the interested reader to Conti and Maggi \cite{conti2008confining} for a  precise description of the definitions and function spaces for origami deformations.

\section{Local kinematics}\label{sec:LocalKinematics}

\subsection{Compatibility at a single vertex.}

We begin with a local description of the kinematics of degree-4 vertices.  To motivate this study, note that, by isolating a small ball around a single degree-4 vertex within an overall quad-mesh crease pattern (Fig.\;\ref{fig:SingleVertex}), we can deduce local properties on the kinematics of RFFQM.  It is the consistency of these local conditions that allows us to systematically investigate the global kinematics of the crease pattern.  

\begin{figure}
\centering
\includegraphics[width = 6.5in]{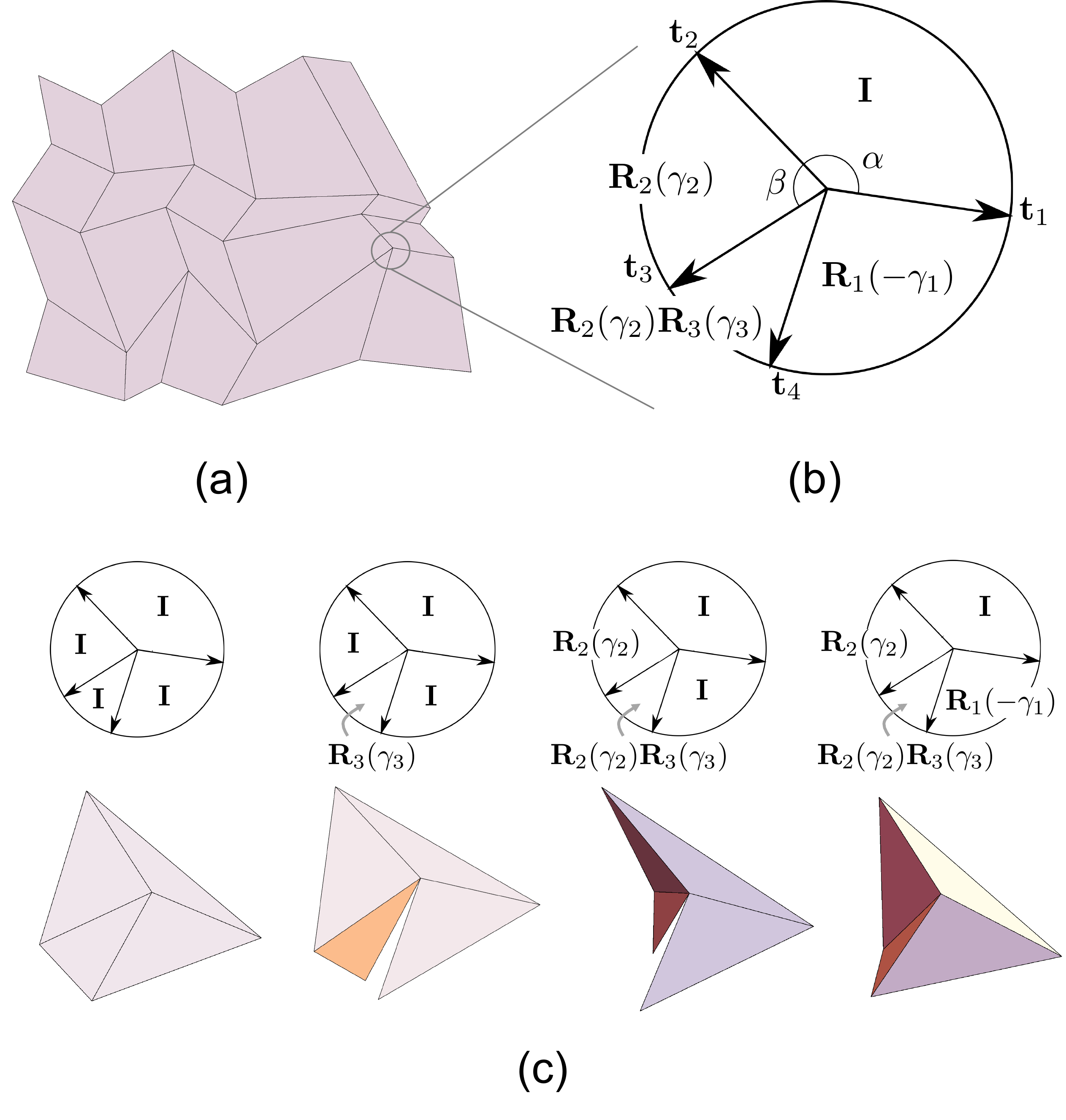}
\caption{A description of vertex compatibility. (a) Isolate a single vertex within an overall crease pattern.  (b) Its kinematics is, up to an overall rigid motion, necessarily described by rotations about crease tangent as depicted, and these rotations are subject to a vertex compatibility condition. (c)  This  can be understood by the following thought experiment: 1) cut the pattern at a crease, 2) apply sequences of rotations along creases to this cut pattern as shown, and 3) parameterize the corresponding rotation angles so that the pattern can be glued back together again.}
\label{fig:SingleVertex}
\end{figure}

As is well-known in the origami literature \cite{hull1994mathematics,lang2011origami}, the necessary and sufficient condition for RFFQM of an initially flat crease pattern composed of a isolated degree-4 vertex (Fig.\;\ref{fig:SingleVertex}(b)) is \textit{Kawasaki's condition} \cite{kawasaki1991relation}, i.e., the condition that opposite sector angles sum to $\pi$.  Since we are exclusively interested in RFFQM in this work, we therefore make repeated use of the following local description of the crease pattern: We label the four crease tangents of this vertex as $\mathbf{t}_i \in \mathbb{S}^2$ with $\mathbf{t}_i \cdot \mathbf{e}_3 = 0$ in a counter-clockwise fashion  as depicted\footnote{We will also use a clock-wise labeling when convenient.}, we define the sector angles as
\begin{equation}
\begin{aligned}
\alpha := \arccos (\mathbf{t}_1 \cdot \mathbf{t}_2), \quad \beta := \arccos (\mathbf{t}_2 \cdot \mathbf{t}_3),
\end{aligned}
\end{equation}
and we enforce Kawasaki's condition
\begin{equation}
\begin{aligned}\label{eq:Kawasakis}
\alpha , \beta \in (0,\pi), \quad \arccos (\mathbf{t}_3 \cdot \mathbf{t}_4) = \pi - \alpha, \quad  \arccos (\mathbf{t}_4 \cdot \mathbf{t}_1) = \pi - \beta.
\end{aligned}
\end{equation}
Note that this is simply a necessary condition on the sector angles for the overall crease pattern to admit RFFQM.

The rigid-folding kinematics of an isolated degree-4 vertex under Kawasaki's condition are also well-known in the literature \cite{lang2011origami, tachi2009generalization}.  The earliest derivation, due to Huffman \cite{huffman1976curvature}, employs spherical trigonometry.  Following the \textit{continuum mechanics} approach 
adopted here, we write explicitly the underlying deformation (from flat).   That is, we impose that the deformation gradient satisfies a jump \textit{compatibility condition} across each crease.  Though less familiar to the origami community\footnote{See, however, \cite{hull2002modelling}.}, this approach has found longstanding application in the continuum mechanics modeling of microstructure for materials undergoing the martensitic phase transformation \cite{ball1989fine, bhattacharya2003microstructure,cui2006combinatorial, song2013enhanced}.   It is also a highly generalizable; the basic structure of the derivation is unchanged if, say, we relax Kawasaki's condition, introduce additional creases, and/or consider non-Euclidean vertices\footnote{These are vertices whose sector angles do not sum to $2\pi$.} \cite{berry2019topological,plucinsky2018patterning, plucinsky2016programming, waitukaitis2019non}. For these reasons, we provide a brief exposition on the topic.

The continuum mechanics approach to the kinematics can be understood heuristically as follows: Rigidly foldable origami describes deformations of the crease pattern that do not  involve any bending or stretching of the panels, i.e, the kinematics are localized at the creases via perfect hinge mechanisms. We can therefore keep one of the panels fixed in its initially flat state by applying $\mathbf{I}$ as the deformation gradient of this region\footnote{As we have done, for instance, with the $\alpha$-sector in Fig.\;\ref{fig:SingleVertex}(b).}.  This has the effect of 1)\;eliminating the degeneracy of rigid body motion and 2)\;providing a convenient reference on which to build the kinematics through deformation gradients that are products of rotations about crease tangents.  Specifically, we will be dealing with rotations $\mathbf{R}_i (\gamma_i) \in SO(3)$ that satisfy
\begin{equation}
\begin{aligned}\label{eq:localRots}
\mathbf{R}_i(\gamma_i) := \mathbf{t}_i \otimes \mathbf{t}_i + \cos \gamma_i \mathbf{P}_i + \sin \gamma_i \mathbf{W}_i \quad \text{ for } \quad \begin{cases}
\mathbf{P}_i := \mathbf{I} - \mathbf{t}_i \otimes \mathbf{t}_i, \\
\mathbf{W}_i := \mathbf{e}_3 \otimes \mathbf{t}_i^{\perp} -  \mathbf{t}_i^{\perp} \otimes \mathbf{e}_3,
\end{cases}
\end{aligned}
\end{equation}
and for $\mathbf{t}_i^{\perp} \in \mathbb{S}^2$ defined by $\mathbf{t}_i^{\perp} := -(\mathbf{t}_i \cdot \mathbf{e}_2) \mathbf{e}_1 + (\mathbf{t}_i \cdot \mathbf{e}_1) \mathbf{e}_1$.  These definitions yield four right-hand orthonormal frames $\{ \mathbf{t}_i, \mathbf{t}_i^\perp, \mathbf{e}_3\}$ and four rotations that satisfy $\mathbf{R}_i(\gamma_i) \mathbf{t}_i = \mathbf{t}_i$.  Accordingly, each rotation describes a right-hand rotation about an axis $\mathbf{t}_i$ by an angle $\gamma_i \in [-\pi,\pi]$. 

 To build the kinematics from these rotations, we follow the sketch in Fig.\;\ref{fig:SingleVertex}(c). From the flat state (far-left), we apply a rotation $\mathbf{R}_3(\gamma_3)$ to the $(\pi - \alpha)$-sector to obtain the configuration (middle-left) which cuts the pattern along the tangent $\mathbf{t}_4$ but otherwise keeps the panels rigid and continuous along every other crease.  We then apply a rotation $\mathbf{R}_2(\gamma_2)$ to the $\beta$-sector and deformed $(\pi - \alpha)$-sector.   This yields a configuration (middle-right) which again keeps the panels rigid and continuous across each of the $\mathbf{t}_{1,2,3}$ creases but  still leaves an opening at the crease with tangent $\mathbf{t}_4$.   We therefore apply a final rotation $\mathbf{R}_1(-\gamma_1)$ to the $(\pi - \beta)$-sector in an attempt to glue the overall rigidly-folded crease pattern back together again.  This question of gluing naturally leads to  jump compatibility conditions on deformation gradients.   Indeed, as the deformation gradients to the left and right of $\mathbf{t}_4$ are $\mathbf{R}_2(\gamma_2) \mathbf{R}_3(\gamma_2)$ and $\mathbf{R}_1(-\gamma_1)$, respectively, the necessary and sufficient conditions for gluing are that the crease tangent $\mathbf{t}_4$ deforms in exactly the same way under the actions of these deformation gradients, i.e., that $\mathbf{R}_2(\gamma_2) \mathbf{R}_3(\gamma_3)\mathbf{t}_4 = \mathbf{R}_1(-\gamma_1) \mathbf{t}_4$.  After some algebra, this equation can be put in the more revealing (completely equivalent) form; that of a vertex compatibility condition
 \begin{equation}
 \begin{aligned}\label{eq:loop}
 \gamma_1, \gamma_2, \gamma_3, \gamma_4 \in [-\pi, \pi]  \quad \text{such that} \quad  \mathbf{R}_1(\gamma_1) \mathbf{R}_2(\gamma_2) \mathbf{R}_3(\gamma_3) \mathbf{R}_4 (\gamma_4)  =\mathbf{I},
 \end{aligned}
 \end{equation}
that concretely links the four folding angles and four compatible deformation gradients at this vertex.

The solutions $ \mathbf{R}_1(\gamma_1), \mathbf{R}_2(\gamma_2),  \mathbf{R}_3(\gamma_3),  \mathbf{R}_4(\gamma_4)$  to vertex compatibility (\ref{eq:loop}) furnish origami deformations of the four panels that merge to make the degree-4 vertex.  These deformations have the form
\begin{equation}
\begin{aligned}\label{eq:localDef}
\mathbf{y}(x_1, x_2) = \begin{cases}
\mathbf{x}  & \text{ if } \mathbf{x} \in \mathcal{P}_{\alpha},  \\
\mathbf{R}_2(\gamma_2) \mathbf{x} & \text{ if } \mathbf{x} \in \mathcal{P}_{\beta}, \\ 
\mathbf{R}_2(\gamma_2) \mathbf{R}_3(\gamma_3) \mathbf{x} & \text{ if } \mathbf{x} \in \mathcal{P}_{\pi - \alpha}, \\
\mathbf{R}_1(-\gamma_1) \mathbf{x} & \text{ if } \mathbf{x} \in \mathcal{P}_{\pi - \beta},
\end{cases}
\end{aligned}
\end{equation}
up to an overall rigid rotation and translation.   Here, $\mathbf{x} := x_1 \mathbf{e}_1 + x_2 \mathbf{e}_2$, and  $\mathcal{P}_{\alpha}$ denotes the panel with sector angle $\alpha$ at the vertex (likewise $\mathcal{P}_{\beta}$, \ldots\; respectively $\beta$, \ldots\; etc.), as sketched in Fig.\;\ref{fig:SingleVertex}(b).  Furthermore, deformations of the form (\ref{eq:localDef}) are continuous if and only if the folding angles satisfy (\ref{eq:loop}). To this point, \textit{every vertex} of a RFFQM  obeys the compatibility condition (\ref{eq:localDef}) for its corresponding parameters\footnote{Note, the four folding angles  $\gamma_1,\gamma_2, \gamma_3, \gamma_4$ and two sector $\alpha, \beta$ angles can change from vertex to vertex.}.  Thus, a complete characterization of the folding angles that satisfy (\ref{eq:loop})  is needed for the designs and deformations of RFFQM.

Before proceeding to the characterization, let us comment on \textit{mountain-valley assignments}.   Physically,  when a deformed crease pattern is viewed from an orientation consistent with the unfolded state, a fold is a valley-fold if ``it looks like a valley" and a mountain-fold if ``it looks like a mountain".  This has a straightforward mathematical interpretation in our formalism\footnote{A counter-clockwise labeling of crease tangents for which the rotations and deformations are defined by these tangents as above.}.  The mountain-valley assignments simply correspond to the domains of the folding angles as  
\begin{equation}
\begin{aligned}\label{eq:MVAngles}
(\text{valley-fold:}) \quad \gamma_i \in (0,\pi), \qquad (\text{mountain-fold:}) \quad \gamma_i \in (-\pi, 0).
\end{aligned}
\end{equation}
Some crease pattern have degenerate folding kinematics where the pattern can be folded-in-half and folded-in-half again\footnote{For instance, a simple checkerboard crease pattern can be folded from flat to folded flat in numerous ways.}.   For RFFQM, we exclude these cases by strictly focusing on the kinematics for which every folding angle is participating in the  deformation by being either a mountain or valley fold.  This is equivalent to enforcing the non-degeneracy condition
\begin{equation}
\begin{aligned}\label{eq:nonDegeneracy}
\sin \gamma_1 \sin \gamma_2 \sin \gamma_3 \sin \gamma_4 \neq 0.
\end{aligned}
\end{equation}


\begin{figure}
\centering
\includegraphics[width = 5in]{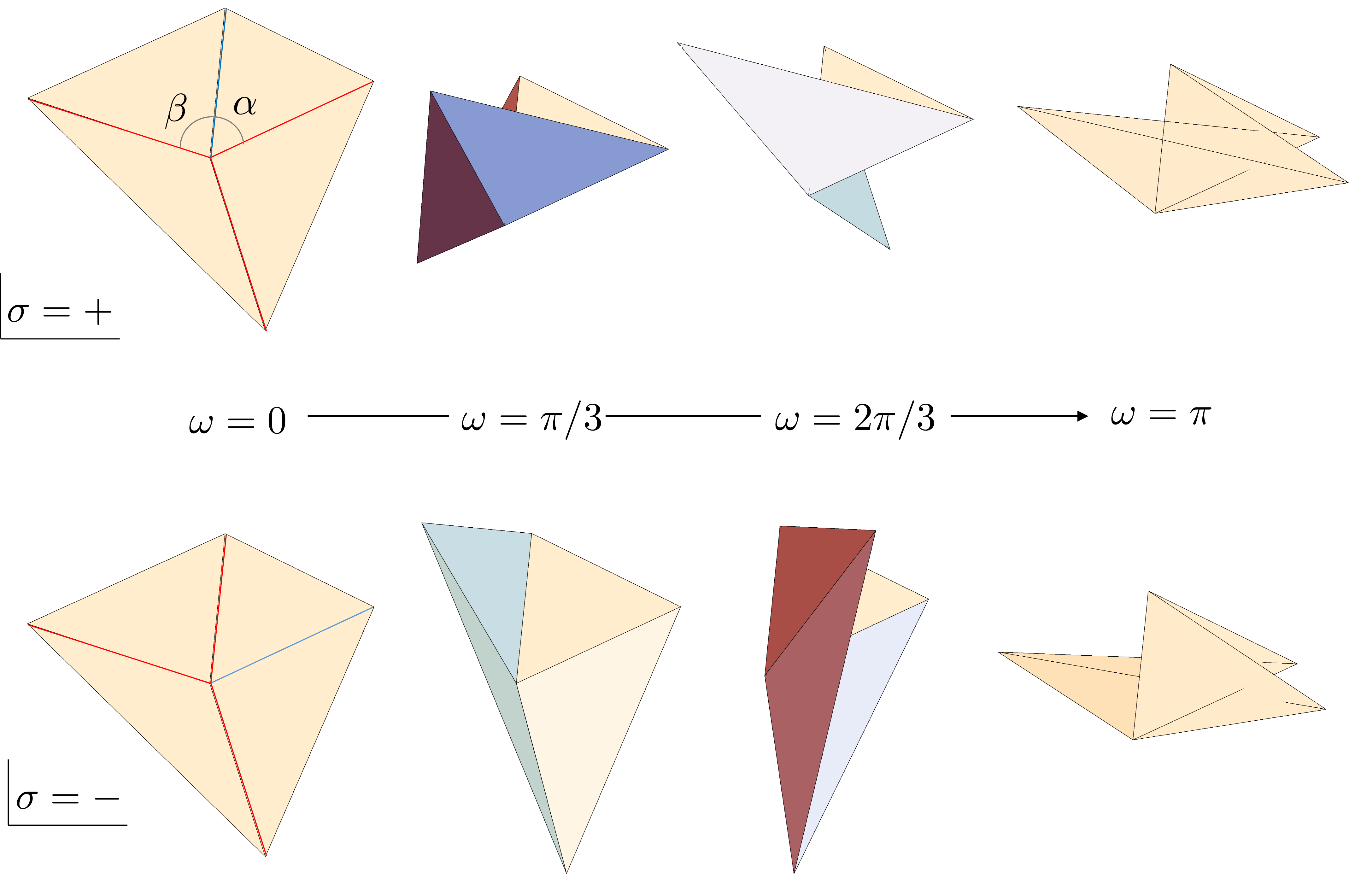}
\caption{The rigid folding kinematic at a degree-4 vertex satisfying Kawasaki's condition.  The origami can fold rigidly along (generically) two mountain valley assignments (red/blue as indicated). It can also be folded from flat ($\omega = 0$) to completely folded flat ($\omega = \pi$) monotonically along these assignments.}
\label{fig:KinSingVert}
\end{figure}

In this setting,  we obtain a precise characterization theorem of the rigid folding kinematics at a single vertex.
\begin{theorem}\label{VertexTheorem}
Assume the sector angles satisfy Kawasaki's condition in (\ref{eq:Kawasakis}) and that $(\alpha, \beta) \neq (\pi/2, \pi/2)$.  Then the folding angles $\gamma_1, \gamma_2, \gamma_3, \gamma_4$ solve vertex compatibility in (\ref{eq:loop}) subject to the non-degeneracy condition (\ref{eq:nonDegeneracy}) if and only if
\begin{equation}
\begin{aligned}\label{eq:foldParams}
&\gamma_1 = \omega, \quad \gamma_2 = \bar{\gamma}_2^{\sigma}(\omega; \alpha, \beta), \quad \gamma_3 = -\sigma \omega, \quad \gamma_4 =  \sigma \bar{\gamma}_2^{\sigma}(\omega; \alpha ,\beta) \\
&\qquad  \text{ for } \quad |\omega| \in (0, \pi) \quad \text{ and } \quad \sigma \in \mathcal{MV}(\alpha, \beta) := \left\{ \begin{array}{l} -  \qquad \text{ if } \alpha = \beta \neq \pi/2 \\ +   \qquad \text{ if } \alpha = \pi - \beta \neq \pi/2  \\ \pm  \qquad  \text{ if }  \alpha \neq \beta \neq \pi - \beta \end{array} \right\} .
\end{aligned}
\end{equation}
In addition, the folding angle function $\bar{\gamma}_2^{\sigma} \colon [-\pi, \pi] \rightarrow [-\pi, \pi]$, $\sigma \in \mathcal{MV}(\alpha, \beta)$, satisfies the explicit relationship
\begin{equation}
\begin{aligned}
\bar{\gamma}_2^{\sigma}(\omega; \alpha , \beta) := \sign \Big((\sigma \cos \beta - \cos \alpha) \omega \Big) \arccos \Big( \frac{(-\sigma1 + \cos \alpha \cos \beta)\cos \omega + \sin \alpha \sin \beta }{-\sigma 1 + \cos \alpha \cos \beta + \sin \alpha \sin \beta \cos \omega } \Big),
\end{aligned}
\end{equation}
and its inverse $\bar{\gamma}_1^{-\sigma} \colon [-\pi, \pi] \rightarrow [-\pi, \pi]$ satisfies the explicit relationship
\begin{equation}
\begin{aligned}\label{eq:inverse}
\bar{\gamma}_1^{-\sigma}(\omega; \alpha, \beta) := \bar{\gamma}_2^{-\sigma}(\omega; \alpha, \pi- \beta),
\end{aligned}
\end{equation}
i.e., $\bar{\gamma}_1^{-\sigma}(\bar{\gamma}_2^{\sigma}(\omega; \alpha , \beta); \alpha, \beta) = \omega$.
Alternatively, if the sector angles satisfy Kawasaki's condition and $(\alpha, \beta) = (\pi/2, \pi/2)$, then there is no solution to (\ref{eq:loop}) under the constraint (\ref{eq:nonDegeneracy}).
\end{theorem}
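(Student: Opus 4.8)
\noindent\emph{Proof strategy.} The plan is to work directly with the deformation gradients: reduce the loop condition (\ref{eq:loop}) to a small scalar system in a convenient frame and solve it explicitly. As observed in the text, (\ref{eq:loop}) is equivalent to $\mathbf{R}_1(\gamma_1)\mathbf{R}_2(\gamma_2)\mathbf{R}_3(\gamma_3)\mathbf{t}_4 = \mathbf{t}_4$, i.e.\ to the statement that the rotation $\mathbf{Q} := \mathbf{R}_1(\gamma_1)\mathbf{R}_2(\gamma_2)\mathbf{R}_3(\gamma_3)$ fixes the axis $\mathbf{t}_4$. Under (\ref{eq:nonDegeneracy}) one has $\mathbf{Q}\neq\mathbf{I}$, so $\mathbf{Q} = \mathbf{R}_4(\theta)$ for a unique $\theta$, and the loop closes precisely when in addition $\gamma_4 = -\theta$. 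So the task splits into: (i) characterizing when $\mathbf{Q}$ fixes $\mathbf{t}_4$, which constrains $\gamma_1,\gamma_2,\gamma_3$; and (ii) identifying $\theta$, hence $\gamma_4$.

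For (i) I would use the rigid-motion freedom to set $\mathbf{t}_1 = \mathbf{e}_1$; Kawasaki's condition (\ref{eq:Kawasakis}) then fixes $\mathbf{t}_2,\mathbf{t}_3,\mathbf{t}_4$ and their perpendiculars in terms of $\alpha,\beta$, and expanding both $\mathbf{R}_2(\gamma_2)\mathbf{R}_3(\gamma_3)\mathbf{t}_4$ and $\mathbf{R}_1(-\gamma_1)\mathbf{t}_4$ in the frame $\{\mathbf{t}_2,\mathbf{t}_2^{\perp},\mathbf{e}_3\}$ via (\ref{eq:localRots}) is routine. The $\mathbf{t}_2$-component collapses to $\cos\gamma_3\,\sin\alpha\sin\beta = \cos\gamma_1\,\sin\alpha\sin\beta$, hence $\cos\gamma_3 = \cos\gamma_1$ (since $\alpha,\beta\in(0,\pi)$) and so $\gamma_3 = \pm\gamma_1$; I would then set $\sigma\in\{+1,-1\}$ by $\gamma_3 = -\sigma\gamma_1$, well-defined because (\ref{eq:nonDegeneracy}) forces $\sin\gamma_1\neq0$. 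The remaining $\mathbf{t}_2^{\perp}$- and $\mathbf{e}_3$-components then constitute a $2\times2$ linear system for $(\sin\gamma_2,\cos\gamma_2)$ whose matrix has ``scaled-rotation'' form; its solvability is automatic once one invokes the Kawasaki identity $\sin(\alpha+\beta)\sin(\alpha-\beta) = \sin^2\alpha - \sin^2\beta$, and solving it yields $\cos\gamma_2$ and $\sin\gamma_2$ as explicit rational-trigonometric functions of $\cos\gamma_1$, $\sin\gamma_1$, $\alpha$, $\beta$, $\sigma$, with $\cos\gamma_2$ depending on $\gamma_1$ only through $\cos\gamma_1 =: \cos\omega$.

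The main obstacle is the simplification that turns this into the stated closed form. Viewed as quadratics in $\cos\gamma_1$, the numerator and denominator of $\cos\gamma_2$ share a common linear factor; the key observation making the factorization go through is that the discriminant of the relevant quadratic collapses to $1$ under Kawasaki. Cancelling the common factor produces exactly the argument of $\arccos$ in $\bar\gamma_2^{\sigma}(\omega;\alpha,\beta)$, while the sign left over in $\sin\gamma_2$ is $\sign\big((\sigma\cos\beta-\cos\alpha)\omega\big)$, giving the prefactor. I expect the Weierstrass substitution $t_i := \tan(\gamma_i/2)$ to streamline this, since it reduces the relation to $t_2 = \pm t_1\sqrt{(\cos(\alpha+\beta)-\sigma)/(\cos(\alpha-\beta)-\sigma)}$. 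For (ii) and for $\gamma_4$, I would cyclically rotate the labels: $(\mathbf{t}_2,\mathbf{t}_3,\mathbf{t}_4,\mathbf{t}_1)$ is again a Kawasaki vertex with parameters $(\beta,\pi-\alpha)$, and running the argument above on it gives $\gamma_4 = \sigma\gamma_2$ once one checks that its mode must be $-\sigma$ (forced by consistency because $\gamma_1\neq0$). The same comparison, together with the fact that $\bar\gamma_2^{\tau}(\,\cdot\,;\beta,\pi-\alpha)$ and $\bar\gamma_2^{\tau}(\,\cdot\,;\alpha,\pi-\beta)$ have the same $\arccos$-argument (both see the sector angles only through $-\cos\alpha\cos\beta$ and $\sin\alpha\sin\beta$), yields the inverse identity (\ref{eq:inverse}).

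Finally I would extract the mountain--valley set and the exceptional case. Since $\sin\gamma_2$ carries the factor $(\sigma\cos\beta-\cos\alpha)\sin\gamma_1$, it vanishes identically exactly when $\sigma\cos\beta=\cos\alpha$, i.e.\ when $\sigma=+1$ and $\alpha=\beta$, or $\sigma=-1$ and $\alpha=\pi-\beta$; in those modes the candidate solution violates (\ref{eq:nonDegeneracy}), so the mode is excluded, which is precisely the definition of $\mathcal{MV}(\alpha,\beta)$. In particular $(\alpha,\beta)=(\pi/2,\pi/2)$ lies in both excluded families (indeed $\cos\gamma_2$ is then forced to the constant $\mp1$), so no non-degenerate solution exists there. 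Conversely, for $\sigma\in\mathcal{MV}(\alpha,\beta)$ and $|\omega|\in(0,\pi)$ the identity $(\text{den})^2-(\text{num})^2 = \sin^2\omega\,(\cos(\alpha+\beta)-\sigma)(\cos(\alpha-\beta)-\sigma)$ shows the $\arccos$-argument lies strictly in $(-1,1)$, so $\gamma_2\in(-\pi,0)\cup(0,\pi)$ and all four $\sin\gamma_i\neq0$; reversing the computation then verifies that these $\gamma_i$ solve (\ref{eq:loop}), which gives the ``if'' direction and completes the proof.
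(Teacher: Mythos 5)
Your proposal is correct and follows essentially the same route as the paper's proof: project the fixed-axis condition onto $\mathbf{t}_2$ to force $\cos\gamma_3=\cos\gamma_1$, solve for $\gamma_2$ in the plane orthogonal to $\mathbf{t}_2$, obtain $\gamma_4$ and the inverse identity by cyclically relabeling the vertex (you shift by one crease, the paper by two --- an immaterial difference), and read off $\mathcal{MV}(\alpha,\beta)$ and the degenerate case $(\pi/2,\pi/2)$ from the requirement $\sin\gamma_2\neq 0$. The only additions are computational conveniences (the Weierstrass substitution, the explicit $(\mathrm{den})^2-(\mathrm{num})^2$ factorization), which the paper subsumes under ``by explicit calculation.''
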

\noindent We briefly comment on a few properties related to this characterization. By fixing a $\sigma \in \mathcal{MV}(\alpha, \beta)$, the parameterization of the folding angles in (\ref{eq:foldParams}) leads to a continuous one parameter family of origami deformations (\ref{eq:localDef}) with $\omega$ as the folding parameter.  As indicated by the generic example in Fig.\;\ref{fig:KinSingVert}, this parameterization has several universal properties: 1)\;$\omega = 0$ corresponds to the flat state, 2)\;$\omega = \pi$ corresponds to the folded flat state, and 3)\;$\omega \in (0,\pi)$ evolves the folding angles continuously and monotonically from flat to folded flat along a fixed mountain-valley assignment.  Additionally, $\omega \mapsto -\omega$ simply reverses the signs of all the folding angles.  Finally, the solutions $\sigma \in \mathcal{MV}(\alpha, \beta)$ encapsulate all the valid mountain-valley assignments at a single degree-4 vertex.  These always correspond to Maekawa's well known restriction \cite{kasahara1998origami} on the crease assignment that the number of mountain folds differs by two from the number of valley folds, i.e., 
\begin{equation}
\begin{aligned}
\sin \gamma_1 \sin \gamma_2 \sin \gamma_3 \sin \gamma_4 < 0.  
\end{aligned}
\end{equation}
More specifically, a crease  (as highlighted in Fig.\;\ref{fig:KinSingVert}(far-left)) can be an assignment (blue) opposite to the other three creases (red) if and only if its adjacent sector angles sum to a value $< \pi$.  There are generically two such assignments, each indicated by a sign $\sigma =\pm$ associated to the folding angles in the parameterization.  Note though, there are also special cases of vertices ($\alpha = \beta \neq \pi/2$ or $\alpha = \pi - \beta \neq \pi/2$) that have only one valid assignment and completely degenerate vertices ($\alpha = \beta = \pi/2$) with zero valid assignments.  This is what gives rise to the sector angle dependence of the set $\mathcal{MV}(\alpha, \beta)$ in (\ref{eq:foldParams}). 

An early proof of this result is due to Huffman \cite{huffman1976curvature} using the methods of spherical trigonometry.  Espousing the continuum mechanics approach, we provide an alternative proof by working directly and algebraically with the rotation matrices in (\ref{eq:loop}).  
\begin{proof}
Since $\mathbf{R}_{i}(\gamma_i) \mathbf{t}_i = \mathbf{t}_i$ for each $i = 1,\ldots, 4$, a necessary condition for (\ref{eq:loop}) is that 
\begin{equation}
\begin{aligned}
N_1(\gamma_1, \gamma_3) := \mathbf{t}_2 \cdot\big( \mathbf{R}_3(\gamma_3) - \mathbf{R}_1(-\gamma_1)\big) \mathbf{t}_4 =  (\cos \gamma_3 - \cos \gamma_1) \sin \alpha \sin \beta 
\end{aligned}
\end{equation}
should vanish.  Under the non-degeneracy condition (\ref{eq:nonDegeneracy}), $N_1(\gamma_1, \gamma_3) = 0$ if and only if $(\gamma_1, \gamma_3) = (\omega, -\sigma \omega)$ for $|\omega| \in (0,\pi)$ and $\sigma \in \{ \pm \}$.  We assume this parameterization in the remainder as it is necessary for the characterization in the theorem.  Let $\mathbf{P}_2 := \mathbf{I} - \mathbf{t}_2 \otimes \mathbf{t}_2$.  Another necessary condition for (\ref{eq:loop}) under (\ref{eq:nonDegeneracy}) is that 
\begin{equation}
\begin{aligned}
N_2(\gamma_2, \omega, \sigma) := \mathbf{P}_2\Big( \mathbf{R}_2(\gamma_2) \mathbf{R}_3(-\sigma \omega) - \mathbf{R}_1(-\omega)\Big) \mathbf{t}_4 = \mathbf{R}_2(\gamma_2) \mathbf{P}_2  \mathbf{R}_3(-\sigma \omega) \mathbf{t}_4 - \mathbf{P}_2  \mathbf{R}_1(-\omega) \mathbf{t}_4
\end{aligned}
\end{equation}
should vanish.  In this formula, notice that $|\mathbf{P}_2   \mathbf{R}_3(-\sigma \omega) \mathbf{t}_4| = |\mathbf{P}_2 \mathbf{R}_1(-\omega) \mathbf{t}_4|$ since $\mathbf{R}_3(-\sigma \omega) \mathbf{t}_4$ and $ \mathbf{R}_1(-\omega) \mathbf{t}_4$ are unit vectors and since $N_1(\omega, -\sigma \omega) = 0$.   Notice also that $\mathbf{P}_2 \mathbf{R}_1(-\omega) \mathbf{t}_4 \neq 0$ since $|\omega| \in (0,\pi)$.  Thus,  $\mathbf{P}_2  \mathbf{R}_1(-\omega) \mathbf{t}_4$ and $\mathbf{P}_2  \mathbf{R}_3(-\sigma \omega) \mathbf{t}_4$ are two vectors of equal and non-zero magnitude in the plane with normal $\mathbf{t}_2$.   By standard trigonometric identities, we therefore obtain that  $N_2(\gamma_2, \omega, \sigma) = 0$ if and only if the angle $\gamma_2$ satisifies the parameterization 
 \begin{equation}
 \begin{aligned}\label{eq:cosineSineParam}
&\cos \gamma_2 =  \frac{\mathbf{P}_2  \mathbf{R}_1(-\omega) \mathbf{t}_4 \cdot \mathbf{P}_2  \mathbf{R}_3(-\sigma \omega) \mathbf{t}_4}{|\mathbf{P}_2  \mathbf{R}_1(-\omega) \mathbf{t}_4|^2}  = \frac{(-\sigma1 + \cos \alpha \cos \beta)\cos \omega + \sin \alpha \sin \beta }{-\sigma 1 + \cos \alpha \cos \beta + \sin \alpha \sin \beta \cos \omega },  \\
&\sin \gamma_2 = \frac{\mathbf{t}_2 \cdot ( \mathbf{P}_2 \mathbf{R}_3(-\sigma \omega) \mathbf{t}_4 \times  \mathbf{P}_2 \mathbf{R}_1(-\omega) \mathbf{t}_4)}{|\mathbf{P}_2  \mathbf{R}_1(-\omega) \mathbf{t}_4|^2} = \frac{(\sigma \cos \alpha -   \cos \beta) \sin \omega}{-\sigma 1 + \cos \alpha \cos \beta + \sin \alpha \sin \beta \cos \omega }.
 \end{aligned}
 \end{equation}
Here, the latter equalities are by explicit calculation.   For non-degeneracy (\ref{eq:nonDegeneracy}), $\sin \gamma_2$ cannot be zero.  From the parameterization above, this is evidently equivalent to $(\sigma \cos \alpha -   \cos \beta) \neq 0$ since $\sin \omega \neq 0$.   In other words, we conclude that $(\alpha, \beta) \neq (\pi/2, \pi/2)$ and $\sigma \in \mathcal{MV}(\alpha, \beta)$ are necessary for the characterization, as stated in the theorem.    Under these restrictions, we obtain $\gamma_2 = \bar{\gamma}_2^{\sigma}(\omega; \alpha, \beta)$ directly from (\ref{eq:cosineSineParam}).
 
 Now, since $N_1(\omega, -\sigma \omega) = 0$ and $N_2(\bar{\gamma}_2^{\sigma}(\omega; \alpha, \beta), \omega, \sigma) = 0$, we conclude that 
 \begin{equation}
 \begin{aligned}
 \mathbf{R}_1(\omega) \mathbf{R}_2(\bar{\gamma}_2^{\sigma}(\omega; \alpha, \beta)) \mathbf{R}_3(-\sigma \omega) \mathbf{t}_4 = \mathbf{t}_4.  
 \end{aligned}
 \end{equation}
As $\mathbf{R}_1(\omega) \mathbf{R}_2(\bar{\gamma}_2^{\sigma}(\omega; \alpha, \beta)) \mathbf{R}_3(-\sigma \omega)$ is a rotation that fixes $\mathbf{t}_4$, there is a unique $\gamma_4 \in (-\pi, \pi]$ such that $\mathbf{R}_1(\omega) \mathbf{R}_2(\bar{\gamma}_2^{\sigma}(\omega; \alpha, \beta)) \mathbf{R}_3(-\sigma \omega) \mathbf{R}_4(\gamma_4) = \mathbf{I}$.   To obtained the precise parameterization of $\gamma_4$, observe the following: We have parameterized $\gamma_2$ and $\gamma_3$ in terms of $\gamma_1$ by a systematic procedure. We can alternatively repeat the steps of this procedure to parameterize $\gamma_4$ and $\gamma_1$ in terms of $\gamma_3$.  In doing this, we obtain 
 \begin{equation}
 \begin{aligned}
 \gamma_3 = \tilde{\omega}, \quad \gamma_1 = - \tau \tilde{\omega}, \quad \gamma_4 = \bar{\gamma}_2^{\tau} (\tilde{\omega}; \pi - \alpha, \pi - \beta), \quad |\tilde{\omega}| \in (0,\pi), \quad \tau \in \mathcal{MV}(\pi - \alpha, \pi - \beta).
 \end{aligned}
 \end{equation}
 By matching these formula to our original parameterization, we deduce that $\tilde{\omega} = - \sigma \omega$, $\tau = \sigma$ and $\gamma_4 = \bar{\gamma}_2^{\sigma} (-\sigma \omega; \pi - \alpha, \pi - \beta) = \sigma \bar{\gamma}_2^{\sigma}(\omega; \alpha, \beta)$, as in the characterization of the theorem.  Finally, the inverse $\bar{\gamma}_1^{-\sigma}$  of $\bar{\gamma}_2^{\sigma}$ can be deduced by another reparameterization, or it can simply be verified explicitly. 
\end{proof}

\subsection{Compatibility at a single panel.}

Building on the characterization at a single vertex in Theorem \ref{VertexTheorem}, we derive the necessary and sufficient conditions for RFFQM about a single panel (Fig.\;\ref{fig:SinglePanel}).   This characterization of panel compatibility is well-known \cite{tachi2009generalization}, and it corresponds to a consistency condition on the folding angle functions when ``taking a loop" about the panel in question.   We first introduce the result in the form of a theorem.  Then, guided by the sketch in Fig.\;\ref{fig:SinglePanel}(b-d), we briefly provide a derivation using the continuum mechanics approach we have espoused thus far.

\begin{figure}
\centering
\includegraphics[width = 6.in]{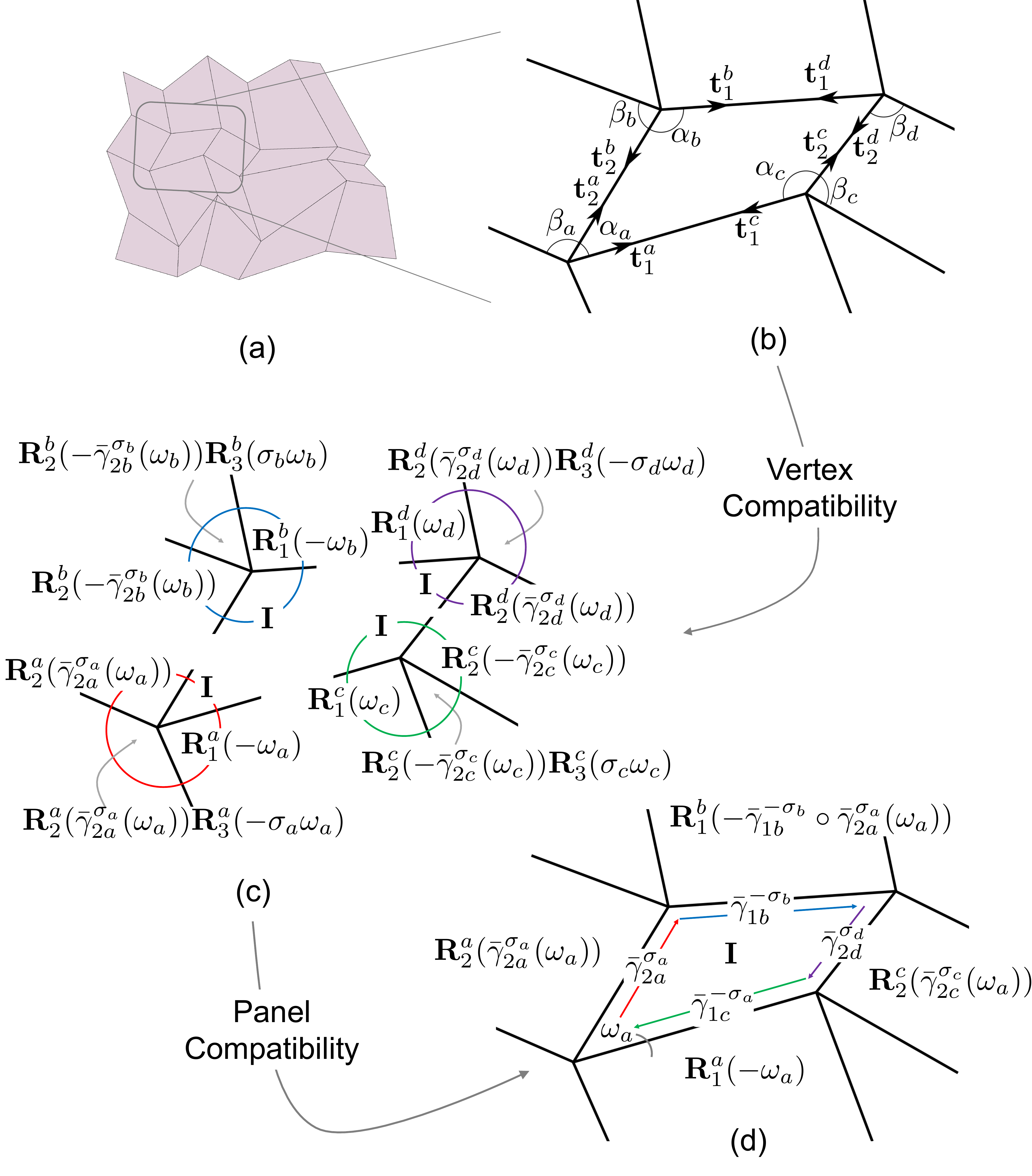}
\caption{Step by step schematic to derive compatibility at a single panel.  (a) Isolate a panel within an overall crease pattern.  (b) Introduce the depicted local notation for the sector angles and crease tangents at the panel.  (c) Directly apply the vertex characterization theorem (i.e., Theorem \ref{VertexTheorem}) to each vertex so as to be consistent with this notation.  (d) The panel compatibility condition is obtained by matching the deformation gradients on vertical and horizontal adjacent panels in (c).}
\label{fig:SinglePanel}
\end{figure}

To introduce the result, we first isolate a crease pattern surrounding a panel and enforce Kawasaki's condition at all the vertices.  Such crease patterns are, in the most general setting, characterized by exactly seven independent sector angles (Fig\;\ref{fig:SinglePanel}(b)).  These angles are restricted to the domains
\begin{equation}
\begin{aligned}\label{eq:sectorPanel}
&\alpha_a, \alpha_b, \alpha_c, \beta_a , \beta_b , \beta_c, \beta_d \in (0,\pi), \quad  2\pi - \alpha_a - \alpha_b - \alpha_c  \in (0, \pi), \\
&(\alpha_a, \beta_a),  (\alpha_b, \beta_b) ,  (\alpha_c, \beta_c), (2\pi -\alpha_a - \alpha_b - \alpha_c, \beta_d)  \neq (\pi/2, \pi/2). 
\end{aligned}
\end{equation}
The desired result then arises as a condition on the folding angle functions that are relevant to the vertices of this panel.  These functions have the form
\begin{equation}
\begin{aligned}
&\bar{\gamma}_{2a}^{\sigma_a}(\omega) := \bar{\gamma}_2^{\sigma_a}( \omega; \alpha_a, \beta_a), \quad \bar{\gamma}_{1a}^{-\sigma_a}(\omega) := \bar{\gamma}_1^{-\sigma_a}( \omega; \alpha_a, \beta_a), \quad \sigma_a \in \mathcal{MV}(\alpha_a, \beta_a), \\
&\bar{\gamma}_{2b}^{\sigma_b} (\omega) := \bar{\gamma}_2^{\sigma_b}( \omega; \alpha_b, \beta_b), \quad \bar{\gamma}_{1b}^{-\sigma_b}(\omega) := \bar{\gamma}_1^{-\sigma_b}( \omega; \alpha_b, \beta_b), \quad \sigma_b \in \mathcal{MV}(\alpha_b, \beta_b), \\
&\bar{\gamma}_{2c}^{\sigma_c} (\omega) := \bar{\gamma}_2^{\sigma_c}( \omega; \alpha_c, \beta_c), \quad \bar{\gamma}_{1c}^{-\sigma_c}(\omega) := \bar{\gamma}_1^{-\sigma_c}( \omega; \alpha_c, \beta_c), \quad \sigma_c \in \mathcal{MV}(\alpha_c, \beta_c), \\
&\bar{\gamma}_{2c}^{\sigma_d} (\omega) := \bar{\gamma}_2^{\sigma_d}( \omega; \alpha_d, \beta_d), \quad \bar{\gamma}_{1d}^{-\sigma_d}(\omega) := \bar{\gamma}_1^{-\sigma_d}( \omega; \alpha_d, \beta_d), \quad \sigma_d \in \mathcal{MV}(\alpha_d, \beta_d), \\
&\qquad \alpha_d := 2 \pi - \alpha_a - \alpha_b - \alpha_c, \\
\end{aligned}
\end{equation}
for $\omega \in [-\pi, \pi]$ and for the folding angle functions in Theorem \ref{VertexTheorem}.  Using these definitions, we obtain the following characterization.
\begin{theorem}\label{PanelTheorem}
Consider the crease pattern of a single panel, as depicted in Fig.\;\ref{fig:SinglePanel}(b).  Assume every vertex of this pattern satisfies Kawasaki's condition  with sector angles as in (\ref{eq:sectorPanel}).  Then the crease pattern admits non-degenerate\footnote{During the process of folding, the folding angles at each vertex must satisfy the constraint (\ref{eq:nonDegeneracy}) away from the flat and folded flat state.  That is, we will not be considering the degenerate cases involving folding-in-half along a crease outlining one of the sides of the panel.} RFFQM if and only if 
\begin{equation}
\begin{aligned}\label{eq:compatPanel}
&\bar{\gamma}_{1c}^{-\sigma_c} \circ \bar{\gamma}_{2d}^{\sigma_d} \circ \bar{\gamma}_{1b}^{-\sigma_b} \circ \bar{\gamma}_{2a}^{\sigma_a} = id  \quad \text{ for some } \quad   (\sigma_a, \sigma_b, \sigma_c, \sigma_d) \in \mathcal{MV}_{abcd},
\end{aligned}
\end{equation}
where $\mathcal{MV}_{abcd} := \mathcal{MV}(\alpha_a, \beta_a) \times \mathcal{MV}(\alpha_b, \beta_b) \times \mathcal{MV}(\alpha_c, \beta_c) \times \mathcal{MV}(2\pi - \alpha_a - \alpha_b - \alpha_c, \beta_d)$.
\end{theorem}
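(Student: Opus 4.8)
The plan is to realize the schematic of Fig.\;\ref{fig:SinglePanel}(b--d): pin the central panel, apply the single‑vertex characterization Theorem \ref{VertexTheorem} at each of the four corners, and then read off panel compatibility as the requirement that the four resulting vertex parameterizations glue consistently on the panels they share.

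First I would fix the deformation gradient of the central panel $\mathcal{P}$ to be $\mathbf{I}$, which eliminates the rigid‑motion degeneracy, and introduce the local notation of Fig.\;\ref{fig:SinglePanel}(b): the four corner vertices $a,b,c,d$ (labelled as in the figure), their sector angles — the four angles of $\mathcal{P}$ itself sum to $2\pi$, so $\alpha_d = 2\pi - \alpha_a - \alpha_b - \alpha_c$, and Kawasaki pins the remaining angle at each vertex — and the four creases bounding $\mathcal{P}$. Since $\mathcal{P}$ does not move, the panel across any one of these bounding creases is the image of $\mathbf{I}$ under a single rotation about that crease; equivalently it is governed by a single folding angle, which I will call $\Gamma$ on the crease in question. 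At every vertex $\bullet \in \{a,b,c,d\}$ the panel $\mathcal{P}$ occupies the sector of angle $\alpha_\bullet$, so I invoke Theorem \ref{VertexTheorem} there with $\mathcal{P}$ in the role of the fixed ($\mathbf{I}$) sector. This produces a sign $\sigma_\bullet \in \mathcal{MV}(\alpha_\bullet,\beta_\bullet)$ and the explicit star deformation (\ref{eq:localDef}); in particular the two folding angles on the two creases of $\mathcal{P}$ meeting at $\bullet$ are linked by $\bar\gamma_2^{\sigma_\bullet}(\,\cdot\,;\alpha_\bullet,\beta_\bullet)$ or, read the other way, by its inverse $\bar\gamma_1^{-\sigma_\bullet}$. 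Non‑degeneracy during folding is automatic here, since it is built into Theorem \ref{VertexTheorem}.

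Next I would glue. The four vertex stars jointly cover all nine panels of the corona of $\mathcal{P}$ (the panel, its four edge‑adjacent neighbours, and its four corner neighbours); two stars at adjacent corners overlap on exactly two panels — $\mathcal{P}$ (where they agree automatically) and the single edge‑adjacent panel between them — while two stars at opposite corners overlap only on $\mathcal{P}$. Consequently the corona admits a continuous origami deformation if and only if vertex compatibility (\ref{eq:loop}) holds at all four corners (already arranged via Theorem \ref{VertexTheorem}) and the four edge‑adjacent panels each receive a common deformation gradient from the two corners that see them. Rewriting each of these four matching conditions in terms of the folding angle $\Gamma$ on the appropriate bounding crease of $\mathcal{P}$, together with the vertex relations from the previous paragraph, turns them into a cyclic chain of four relations among the four bounding folding angles. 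Fixing the value $\omega$ on one bounding crease, propagating once around the chain, and returning to that crease, the four conditions collapse to the single requirement that a composition of the form $\bar\gamma_{1c}^{-\sigma_c} \circ \bar\gamma_{2d}^{\sigma_d} \circ \bar\gamma_{1b}^{-\sigma_b} \circ \bar\gamma_{2a}^{\sigma_a}$ fix every $\omega$, i.e.\ be $id$; this is (\ref{eq:compatPanel}) (beginning the loop at a different bounding crease merely permutes the composition cyclically, which does not affect whether it is the identity). For the ``only if'' direction: a non‑degenerate RFFQM motion of the corona restricts to the four vertex stars, forcing the Theorem \ref{VertexTheorem} form with a fixed tuple $(\sigma_a,\sigma_b,\sigma_c,\sigma_d) \in \mathcal{MV}_{abcd}$ (the discrete signs cannot jump along a continuous non‑degenerate fold), so the four matching conditions persist along the whole family, during which $\omega$ varies over a full interval from the flat to the folded‑flat state; hence the composition, being continuous and equal to the identity there, and being odd in $\omega$, equals $id$ on all of $[-\pi,\pi]$. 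For ``if'': when the composition is $id$, every $\omega \in [-\pi,\pi]$ is consistent, so it determines the folding angles on all four bounding creases, Theorem \ref{VertexTheorem} builds the four vertex deformations, and they glue to a continuous deformation of the corona by the overlap analysis above — non‑degenerate for $\omega \in (0,\pi)$ and reaching the flat ($\omega=0$) and folded‑flat ($\omega=\pi$) states by the universal properties of the single‑vertex parameterization, i.e.\ non‑degenerate RFFQM.

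I expect the main obstacle to be the orientation and sign bookkeeping that pins down the \emph{exact} form of the composition. With a fixed global convention (say, counter‑clockwise crease labelling) one must track which crease of $\mathcal{P}$ at each vertex is the ``$\gamma_1$''‑crease and which is the ``$\gamma_2$''‑crease, and with which sign the shared folding angle $\Gamma$ is read from each of its two vertices, so that the chain composes to precisely $\bar\gamma_{1c}^{-\sigma_c} \circ \bar\gamma_{2d}^{\sigma_d} \circ \bar\gamma_{1b}^{-\sigma_b} \circ \bar\gamma_{2a}^{\sigma_a}$ — with the alternating forward/inverse pattern and the sign set $\mathcal{MV}_{abcd}$ — using the inversion identity (\ref{eq:inverse}) to rewrite the inverse maps. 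Everything else is either a direct appeal to Theorem \ref{VertexTheorem}, the elementary overlap count for the corona, or the routine parity/continuity argument used to pass from identity on an interval to identity on $[-\pi,\pi]$.
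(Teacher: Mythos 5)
Your proposal follows essentially the same route as the paper: apply Theorem \ref{VertexTheorem} at each of the four corner vertices (with the appropriate clockwise/counter-clockwise labelling), match the deformation gradients on the shared edge-adjacent panels to obtain the cyclic chain of folding-angle relations (\ref{eq:forConsistency}), and observe that a full flat-to-folded-flat motion forces the resulting loop composition to be the identity rather than merely to have a fixed point. The only differences are cosmetic --- you pin the central panel with $\mathbf{I}$ and add an explicit continuity/oddness argument for upgrading ``identity on the swept interval of $\omega$'' to ``identity on all of $[-\pi,\pi]$'', a step the paper leaves implicit.
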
 
\noindent Here, we have employed the composition of maps notation, e.g., $\bar{\gamma}_{1b}^{-\sigma_b} \circ \bar{\gamma}_{2a}^{\sigma_a}(\omega) := \bar{\gamma}_{1b}^{-\sigma_b}( \bar{\gamma}_{2a}^{\sigma_a}(\omega))$.  

Let us comment briefly on some important aspects of this result  before turning to its derivation.  In the \textit{panel compatibility condition} (\ref{eq:compatPanel}), the left-hand-side is a composition of four folding angle functions that depend on quantities at an individual vertex: These functions depend on their vertex sector angles  $(\alpha_{(\cdot)}, \beta_{(\cdot)})$ in a highly non-linear way, and their structure is indicated by a sign $\sigma_{(\cdot)}$ corresponding to the vertex mountain-valley assignment.   As a result, the overall composition is a highly non-linear function of all the sector angles which make up the panel's crease pattern, and all its possible mountain-valley assignments.  So, satisfying (\ref{eq:compatPanel})  without simplifying assumptions on the parameters is non-trivial. This will be our task in the coming section. 


 For now, though, we  highlight another observation potentially relevant to applications: While it will turn out that most RFFQM crease patterns are foldable in exactly one way, this is by no means dictated by the theorem at hand.  For a given crease pattern, i.e., a given collection of sector angles $\alpha_a, \ldots, \beta_d$, it may turn out that multiple collections of signs $(\sigma_a, \sigma_b, \sigma_c, \sigma_d), (\tilde{\sigma}_a, \tilde{\sigma}_b, \tilde{\sigma}_c, \tilde{\sigma}_d), \ldots \in \mathcal{MV}_{abcd}$ can be used to establish the result (\ref{eq:compatPanel}).  In such cases, the panel's crease pattern can be folded from flat to folded flat along multiple distinct mountain-valley assignments, each corresponding to the signs $(\sigma_a, \sigma_b, \sigma_c, \sigma_d), (\tilde{\sigma}_a, \tilde{\sigma}_b, \tilde{\sigma}_c, \tilde{\sigma}_d), \ldots$ achieving panel compatibility.  We will discuss this multi-stability only in passing here.   We refer the interested reader to \cite{dieleman2020jigsaw} for a similar observation and some development on its applications. 
 
We now derive panel compatibility in Theorem \ref{PanelTheorem}.  We first isolate a single panel and its surrounding crease pattern (Fig.\;\ref{fig:SinglePanel}(a-b)).  On this panel, we assign crease tangents $\mathbf{t}_1^{a}, \ldots, \mathbf{t}_4^{a} \in \mathbb{S}^2$ to the $a$-vertex, $\mathbf{t}_1^b, \ldots,  \mathbf{t}_4^{b} \in \mathbb{S}^2$ to the $b$-vertex, $\mathbf{t}_1^{c}, \ldots, \mathbf{t}_4^{c} \in \mathbb{S}^2$ to the $c$-vertex, and $\mathbf{t}_1^d, \ldots,  \mathbf{t}_4^{d} \in \mathbb{S}^2$ to the $d$-vertex.  As indicated with Fig\;\ref{fig:SinglePanel}(b),  the crease assignments for the $a$ and $d$-vertex have a counter-clockwise labeling and the assignments for the $b$ and $c$-vertex have  a clockwise labeling.  

If the panel is to be folded as non-degenerate rigid origami, then each individual vertex must be folded as non-degenerate rigid origami. We therefore assign the deformation gradients surrounding each vertex as in Fig.\;\ref{fig:SinglePanel}(c).  Considering the notation, the $a$ and $d$-vertex assignments are obtained by direct application of Theorem \ref{VertexTheorem} using the vertex deformation (\ref{eq:localDef}).  Alternatively, since the tangents of the $b$ and $c$-vertex are labeled clockwise, we simply apply the theorem but replace all the folding angles with their minus.  This has the effect of keeping a consistent folding angle domain (\ref{eq:MVAngles}) associated to mountain and valley creases.  

Deriving (\ref{eq:compatPanel}) is then simply a question of consistency of the deformation gradients surrounding the panel (equivalently, consistency of the folding angles).   As the rotations $\mathbf{R}_{1,\ldots,4}^{a,\ldots,d}(\cdot)$ are $2\pi$ periodic, we take the primary branch for the folding angle domains without loss of generality. Then, this consistency is equivalent to 
\begin{equation}
\begin{aligned}\label{eq:forConsistency}
\omega_a = \omega_c, \quad \omega_d = \omega_b, \quad \bar{\gamma}_{2a}^{\sigma_a}(\omega_a) = \bar{\gamma}_{2b}^{\sigma_b}(\omega_b), \quad \bar{\gamma}_{2c}^{\sigma_c}(\omega_c) = \bar{\gamma}_{2d}^{\sigma_d}(\omega_d).
\end{aligned}
\end{equation}
For instance, the $a$-vertex has a rotation $\mathbf{R}_1^a(-\omega_a)$ on the same panel for which the $c$-vertex has rotation $\mathbf{R}_{1}^c(\omega_c)$.  The two rotations must therefore be the same; thus,  yielding the condition on the angles $\omega_a = \omega_c$ since $\mathbf{t}_1^{c} = -\mathbf{t}_1^{a}$. There are three analogous consistency conditions, so this reasoning leads to the four total conditions on the folding angle surrounding the panel in (\ref{eq:forConsistency}).  Panel compatibility (\ref{eq:compatPanel}) then emerges as a requirement  for these consistency conditions:  Recall that the inverse of $\bar{\gamma}_2^{\sigma}$ for $\sigma \in \mathcal{MV}(\alpha,\beta)$ is $\bar{\gamma}_1^{-\sigma}$. We can therefore substitute the first two equations in (\ref{eq:forConsistency}) into the latter two, apply the inverse $\bar{\gamma}_{1b}^{-\sigma_b}$ to the third and $\bar{\gamma}_{1c}^{-\sigma_c}$ to the fourth, and substitute the third equation into the fourth to arrive at the requirement 
\begin{equation}
\begin{aligned}\label{eq:prePanel}
\bar{\gamma}_{1c}^{-\sigma_c} \circ \bar{\gamma}_{2d}^{\sigma_d} \circ \bar{\gamma}_{1b}^{-\sigma_b} \circ \bar{\gamma}_{2a}^{\sigma_a} (\omega_a) = \omega_a.
\end{aligned}
\end{equation}

Now let us assume this requirement holds for some collection of sector angels $\alpha_a,\ldots, \beta_d$ in (\ref{eq:sectorPanel}), some choice of mountain-valley assignment $(\sigma_a, \ldots, \sigma_d) \in \mathcal{MV}_{abcd}$, and some folding parameter $\omega_a$.  It is then possible to solve (\ref{eq:forConsistency}).  Further, by substituting this solution into the deformation gradients in Fig.\;\ref{fig:SinglePanel}(c),  we obtain a rigid folding of the panel's crease pattern with a folding angle $\omega_a$ at the $\mathbf{t}_1^a$ crease, as indicated schematically with Fig.\;\ref{fig:SinglePanel}(d).  Note though, we are interested in RFFQM. So not only should (\ref{eq:prePanel}) hold for some $\omega_a$, it should hold \textit{for all} $\omega_a \in [-\pi, \pi]$.  In other words, this composition of maps should be the identity, i.e., yield $\omega_a$ when evaluated at $\omega_a$ regardless of the choice of this angle.  Consequently, we obtain panel compatibility (\ref{eq:compatPanel}) as the necessary and sufficient condition for non-degenerate RFFQM about this panel.

\section{Rigidity of the global kinematics}\label{sec:rigidSection}

A classification of all designs and deformations of RFFQM emerges from a rigidity result on the solutions to panel compatibility.    The key idea is to recognize that the folding angle functions and their compositions\textemdash however complicated is their sector angle and mountain-valley dependence\textemdash share an underlying group structure.  This group structure then reduces the difficult problem of characterizing these compositions to a much simpler problem of characterizing a corresponding multiplication of parameters.  From this reduction, we are able to obtain all solutions to panel compatibility.  The structure of these solutions, in turn, yields a concrete algorithm for the designs and deformations of all RFFQM.


\subsection{An Abelian group structure to folding angle functions}

In an effort to have a general classification of functions which ``look like" the folding angle functions defined in Theorem \ref{VertexTheorem}, we study the functions
\begin{equation}
\begin{aligned}\label{eq:gDef}
g(\omega; \mu) := -\sign \big(\mu \omega \big) \arccos\Big( \frac{(\mu^2 + 1) \cos \omega  + (\mu^2 - 1)}{(\mu^2 + 1) + (\mu^2 - 1) \cos \omega }  \Big), \quad \mu \in \mathbb{R} \setminus \{ 0\}.
\end{aligned}
\end{equation}
We first observe that any folding angle function can be written in this form for the suitable choice of $\mu \in \mathbb{R} \setminus \{ 0\}$. 
\begin{proposition}\label{firstProp}
Let $\alpha, \beta \in (0, \pi)$, $(\alpha, \beta) \neq (\pi/2, \pi/2)$, and $\sigma \in \mathcal{MV}(\alpha, \beta)$.  We have
\begin{equation}
\begin{aligned}\label{eq:gamma2Gfold}
\bar{\gamma}_2^{\sigma}(\omega; \alpha, \beta) = g\big( \omega; \tfrac{-\sigma 1 + \cos \alpha \cos \beta + \sin \alpha \sin \beta}{ \cos \beta - \sigma \cos \alpha} \big), \quad \bar{\gamma}_1^{-\sigma}(\omega; \alpha , \beta) = g \Big( \omega ;  \tfrac{\sigma 1 - \cos \alpha \cos \beta + \sin \alpha \sin \beta}{ -\cos \beta + \sigma \cos \alpha} \Big) .
\end{aligned}
\end{equation}
\end{proposition}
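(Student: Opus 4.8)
Both $g(\omega;\mu)$ and $\bar{\gamma}_2^{\sigma}(\omega;\alpha,\beta)$ have the shape ``signed prefactor times $\arccos$ of a M\"obius-type function of $\cos\omega$'', and the M\"obius-type function $\omega\mapsto \tfrac{A\cos\omega + B}{A + B\cos\omega}$ depends only on the ratio $B{:}A$ (the relevant denominators never vanish). So it suffices to check two things: (i) the coefficient ratios of the two $\arccos$-arguments agree, and (ii) the two signed prefactors agree. To organize the first step, write the candidate parameter as $\mu = N/D$ with
\[
N := \cos(\alpha-\beta) - \sigma = -\sigma + \cos\alpha\cos\beta + \sin\alpha\sin\beta, \qquad D := \cos\beta - \sigma\cos\alpha .
\]
First I would verify $\mu$ is a well-defined nonzero real: $D\neq 0$ since $D=0$ forces $\alpha=\beta$ (when $\sigma=+$) or $\alpha=\pi-\beta$ (when $\sigma=-$), both excluded by $\sigma\in\mathcal{MV}(\alpha,\beta)$; and $N\neq 0$ since $\cos(\alpha-\beta)-1<0$ when $\sigma=+$ (using $\alpha\neq\beta$, again from $\sigma\in\mathcal{MV}(\alpha,\beta)$) and $\cos(\alpha-\beta)+1>0$ when $\sigma=-$. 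This also records the fact $\sign N = -\sigma$, needed for the prefactor.

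\textbf{Matching the $\arccos$-arguments.} The $\arccos$-argument of $g(\omega;\mu)$ is $\tfrac{(\mu^2+1)\cos\omega+(\mu^2-1)}{(\mu^2+1)+(\mu^2-1)\cos\omega}$, with coefficient ratio $\tfrac{\mu^2-1}{\mu^2+1}=\tfrac{N^2-D^2}{N^2+D^2}$; the $\arccos$-argument of $\bar{\gamma}_2^{\sigma}(\omega;\alpha,\beta)$ is $\tfrac{a\cos\omega+b}{a+b\cos\omega}$ with $a:=\cos\alpha\cos\beta-\sigma$, $b:=\sin\alpha\sin\beta$, coefficient ratio $b/a$. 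The heart of the proof is the elementary identity
\[
D^2 = \big(\cos(\alpha+\beta) - \sigma\big)\, N,
\]
which follows by expanding $D^2 = \cos^2\alpha + \cos^2\beta - 2\sigma\cos\alpha\cos\beta$ and $(\cos(\alpha+\beta)-\sigma)(\cos(\alpha-\beta)-\sigma)$ and using $\cos(\alpha+\beta)\cos(\alpha-\beta)=\cos^2\alpha+\cos^2\beta-1$, $\cos(\alpha+\beta)+\cos(\alpha-\beta)=2\cos\alpha\cos\beta$, and $\sigma^2=1$. Given this, $N^2-D^2 = 2N\sin\alpha\sin\beta$ and $N^2+D^2 = 2N(\cos\alpha\cos\beta-\sigma)$, so cancelling the nonzero factor $N$ gives $\tfrac{\mu^2-1}{\mu^2+1} = \tfrac{\sin\alpha\sin\beta}{\cos\alpha\cos\beta-\sigma} = \tfrac{b}{a}$, which is exactly (i). (As a byproduct $\mu^2\neq 1$, and one checks $\cos\alpha\cos\beta-\sigma\neq 0$ for $\alpha,\beta\in(0,\pi)$, so all ratios and denominators above make sense.)

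\textbf{Matching the prefactors, and the second formula.} Since $\sigma^2=1$ we have $\sigma\cos\beta-\cos\alpha = \sigma D$, hence $\sign(\sigma\cos\beta-\cos\alpha)=\sigma\sign D$; meanwhile $\sign\mu = \sign(N)\sign(D) = -\sigma\sign D$ by the computation of $\sign N$ above. Therefore $-\sign(\mu\omega) = \sigma\,\sign(D)\,\sign\omega = \sign\big((\sigma\cos\beta-\cos\alpha)\omega\big)$, which is (ii), completing the first identity in $(\ref{eq:gamma2Gfold})$. For the second identity I would invoke the relation $\bar{\gamma}_1^{-\sigma}(\omega;\alpha,\beta) = \bar{\gamma}_2^{-\sigma}(\omega;\alpha,\pi-\beta)$ from $(\ref{eq:inverse})$ and apply the first identity with $\beta$ replaced by $\pi-\beta$ and $\sigma$ by $-\sigma$; this is legitimate because $-\sigma\in\mathcal{MV}(\alpha,\pi-\beta)$ iff $\sigma\in\mathcal{MV}(\alpha,\beta)$ and $(\alpha,\pi-\beta)\neq(\pi/2,\pi/2)$. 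Substituting $\cos(\pi-\beta)=-\cos\beta$, $\sin(\pi-\beta)=\sin\beta$ into the transformed parameter $\tfrac{-(-\sigma) + \cos\alpha\cos(\pi-\beta) + \sin\alpha\sin(\pi-\beta)}{\cos(\pi-\beta) - (-\sigma)\cos\alpha}$ yields exactly $\tfrac{\sigma - \cos\alpha\cos\beta + \sin\alpha\sin\beta}{-\cos\beta + \sigma\cos\alpha}$, as claimed.

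\textbf{Expected difficulty.} There is no serious obstacle: the entire content is the single trigonometric identity $D^2=(\cos(\alpha+\beta)-\sigma)N$, which is routine. The only real care needed is the sign bookkeeping and the handling of the degenerate sector-angle configurations, and these are precisely what membership $\sigma\in\mathcal{MV}(\alpha,\beta)$ controls: it is exactly the set of $\sigma$ for which $\mu$ is a well-defined nonzero real (and the various denominators in the argument are nonzero).
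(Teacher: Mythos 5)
Your proof is correct and follows essentially the same route as the paper: both arguments reduce to checking that the two $\arccos$-arguments are proportional M\"obius functions of $\cos\omega$ and that the signed prefactors agree, and your key identity $D^2=(\cos(\alpha+\beta)-\sigma)N$ is precisely the paper's identity $\eta^2-\delta^2=1$ after clearing the denominator $D^2$ and factoring into $\eta\pm\delta$. The deduction of the second formula from (\ref{eq:inverse}) is likewise identical to the paper's.
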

\begin{proof}
We only need to prove the result for $\bar{\gamma}_2^{\sigma}$ since $\bar{\gamma}_1^{-\sigma}$ is obtained from the identity in (\ref{eq:inverse}). In this direction, we notice that $\cos \alpha - \sigma \cos \beta \neq 0$ since $\sigma \in \mathcal{MV}(\alpha, \beta)$.  Therefore, we may define the parameters $\eta :=\tfrac{-\sigma 1 + \cos \alpha \cos \beta}{\cos \beta - \sigma \cos \alpha}$ and $\delta: = \tfrac{\sin \alpha \sin \beta}{\cos \beta - \sigma \cos \alpha}$, and introduce the parameterization 
\begin{equation}
\begin{aligned}
h(\omega; \eta, \delta) :=  -\sign\big( (\eta + \delta ) \omega  \big) \arccos \Big(\frac{\eta \cos \omega + \delta}{ \eta + \delta \cos \omega}\Big).
\end{aligned}
\end{equation} 
We claim that $\bar{\gamma}_2^{\sigma}(\omega; \alpha, \beta) = h(\omega; \eta, \delta)$.  Indeed, it is clear that the $\arccos(\cdot)$ part of this function is the same as that in $\bar{\gamma}_2^{\sigma}$; so we just need to show that $\sign( \eta + \delta) = \sign ( \cos \alpha - \sigma \cos \beta)$.  We observe that 
\begin{equation}
\begin{aligned}
\sign (\eta + \delta) = \sign\Big(\frac{-\sigma 1 + \cos \alpha \cos \beta + \sin \alpha \sin \beta }{\cos \beta - \sigma \cos \alpha} \Big) =   \sign\Big(\frac{-\sigma 1 }{\cos \beta - \sigma \cos \alpha} \Big),
\end{aligned}
\end{equation}
which yields the desired result after rearranging the latter.  Thus, we have $\bar{\gamma}_2^{\sigma}(\omega; \alpha, \beta) = h(\omega; \eta, \delta)$. 


Next, we observe that $\eta^2 - \delta^2 = 1$  by a direct calculation.  We therefore define $\mu := \eta + \delta$, and it follows that $\eta = \tfrac{\mu^2 + 1}{2\mu}$ and $\delta = \frac{\mu^2-1}{2\mu}$ using the aforementioned identity.  By expanding terms, we have
\begin{equation}
\begin{aligned}
\mu = \frac{-\sigma 1 + \cos \alpha \cos \beta + \sin \alpha \sin \beta}{ \cos \beta - \sigma \cos \alpha}.
\end{aligned}
\end{equation}
Finally, in combining all of these result, we  obtain the chain of equalities
\begin{equation}
\begin{aligned}
\bar{\gamma}_2^{\sigma}(\omega; \alpha, \beta) = h\Big(\omega; \frac{\mu^2 + 1}{2\mu}, \frac{\mu^2 - 1}{2\mu}\Big)  = g( \omega ; \mu).
\end{aligned}
\end{equation}
This completes the proof. 
\end{proof}

Next, for definiteness, we have: 
\begin{proposition}
The function $g(\omega; \mu)$ is a well-defined mapping from $[-\pi, \pi]$ to $[-\pi, \pi]$  for any $\mu \in \mathbb{R} \setminus \{ 0 \}$.  
\end{proposition}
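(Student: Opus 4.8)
The plan is to reduce the statement to two elementary inequalities about $\cos\omega$. Write $c := \cos\omega \in [-1,1]$ and set $p := \mu^2 + 1$ and $q := \mu^2 - 1$, so that
\[
g(\omega;\mu) = -\sign(\mu\omega)\,\arccos\!\Big(\frac{pc+q}{p+qc}\Big).
\]
Three things must be checked: (i) the denominator $p + qc$ never vanishes, so the fraction is defined for every $\omega \in [-\pi,\pi]$; (ii) the value $(pc+q)/(p+qc)$ lies in $[-1,1]$, so that $\arccos$ can be applied; and (iii) the prefactor $-\sign(\mu\omega)$ keeps the final output inside $[-\pi,\pi]$.

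The single observation driving (i) and (ii) is that $p > |q| \ge 0$ whenever $\mu \ne 0$. Indeed $p - |q| = (\mu^2 + 1) - |\mu^2 - 1|$ equals $2$ if $\mu^2 \ge 1$ and equals $2\mu^2 > 0$ if $0 < \mu^2 < 1$. Hence $|qc| \le |q| < p$, which gives $p + qc > 0$; this settles (i). For (ii), since $p + qc > 0$ I may clear denominators: the upper bound $(pc+q)/(p+qc) \le 1$ is equivalent to $(p-q)(c-1) \le 0$, which holds because $p - q = 2 > 0$ and $c \le 1$; the lower bound $(pc+q)/(p+qc) \ge -1$ is equivalent to $(p+q)(c+1) \ge 0$, which holds because $p + q = 2\mu^2 > 0$ and $c \ge -1$. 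Therefore the argument of $\arccos$ lies in $[-1,1]$, and $\arccos$ returns a value in $[0,\pi]$.

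Finally, $-\sign(\mu\omega) \in \{-1,0,1\}$ for $\omega \in [-\pi,\pi]$, so $g(\omega;\mu) \in [-\pi,\pi]$, which gives (iii). The only endpoint worth a word is $\omega = 0$: there the fraction equals $1$, so $\arccos$ of it is $0$, consistent with $\sign(0) = 0$, and $g(0;\mu) = 0$ unambiguously. There is no genuine obstacle in this argument; the only place care is needed is the case split $\mu^2 \ge 1$ versus $0 < \mu^2 < 1$ in the inequality $p > |q|$, which is precisely where the hypothesis $\mu \ne 0$ enters (if $\mu = 0$ the denominator vanishes at $\omega = 0$ and the map degenerates).
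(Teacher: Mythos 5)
Your proof is correct. It verifies the same two facts as the paper (the denominator never vanishes, and the argument of $\arccos$ lies in $[-1,1]$), but by a genuinely different route. The paper computes the difference of squares
\[
\big((\mu^2+1)+(\mu^2-1)\cos\omega\big)^2-\big((\mu^2+1)\cos\omega+(\mu^2-1)\big)^2=4\mu^2(1-\cos^2\omega)\ \geq\ 0,
\]
concludes that the ratio has magnitude $\leq 1$ unless it is of the form $0/0$, and then rules out $0/0$ by checking the denominator at the points where the inequality degenerates to equality (the paper inspects $\omega=0$; strictly it should also mention $\omega=\pm\pi$, where the denominator equals $2$). You instead prove the uniform lower bound $p+qc\geq p-|q|>0$ with $p=\mu^2+1$, $q=\mu^2-1$, which makes the denominator strictly positive everywhere at once, and then clear the (positive) denominator to reduce the two bounds $\pm 1$ to the linear inequalities $(p-q)(c-1)\leq 0$ and $(p+q)(c+1)\geq 0$. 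Your version buys a cleaner logical structure --- no case analysis over where $\sin\omega$ vanishes and no residual $0/0$ possibility to dispatch --- at the cost of one extra case split ($\mu^2\geq 1$ versus $\mu^2<1$) in establishing $p>|q|$. The paper's identity, on the other hand, is the same computation that reappears in the proof of Theorem 3.1 (there with $\sin\gamma_2$ in place of the slack $4\mu^2(1-\cos^2\omega)$), so it is the more ``reusable'' form within the paper. Both arguments are complete; your handling of the endpoint $\omega=0$ and of the sign prefactor is fine.
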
 
\begin{proof}
We simply have to verify that the argument in the $\arccos( \cdot)$ is well-defined (i.e., not dividing by zero) and has magnitude $\leq 1$.  In this direction, we observe that 
\begin{equation}
\begin{aligned}
&\big( (\mu^2 + 1) + (\mu^2 - 1) \cos \omega\big)^2  - \big( (\mu^2 + 1) \cos \omega + (\mu^2 - 1) \big)^2 \\
&\qquad \qquad = ((\mu^2+ 1)^2 - (\mu^2 - 1)^2)(1- \cos^2 \omega)  = 4 \mu^2 ( 1 - \cos^2\omega)  \geq 0.
\end{aligned}
\end{equation}
This establishes that the argument has magnitude $\leq 1$, or the argument is $0/0$.  The latter is, however, not a possibility.  Notice that the inequality above is strict unless $\omega= 0$ since $\mu \neq 0$.  However, also notice that $(\mu^2 + 1) + (\mu^2 - 1) \cos \omega = 2 \mu^2 \neq 0$ when $\omega =0$.  Thus, we never divide by zero in this parameterization.   
\end{proof}

In light of the above definiteness of the functions in (\ref{eq:gDef}), we can introduce without any ambiguity the collection of all such functions, i.e., 
\begin{equation}
\begin{aligned}\label{eq:GDef}
\mathcal{G} := \Big\{ g \colon [-\pi, \pi] \rightarrow [-\pi, \pi] \;\; \big| \;\;  g = g(\omega ; \mu ) \text{ as above, } \;\; \mu \in \mathbb{R} \setminus \{ 0\}  \Big\} .
\end{aligned}
\end{equation}
This collection has the following {remarkable} property:
\begin{lemma}\label{GroupLemma}
 $\mathcal{G}$ in (\ref{eq:GDef}) is an abelian group under the composition of maps product.  Specifically, it has the properties:
\begin{itemize}
\item The functions $g_1 := g(\omega;\mu_1)$, $g_2 := g(\omega; \mu_2) \in \mathcal{G}$  satisfy $g_1 = g_2$ if and only if $\mu_1 = \mu_2$.  
\item The identity map is $id = g(\omega; 1)$;  
\item  For any $g_1 := g(\omega; \mu_1)$ and $g_2(\omega; \mu_2) \in \mathcal{G}$, the group product satisfies 
\begin{equation}
\begin{aligned}\label{eq:productRule}
g_1 \circ g_2 = g_2 \circ g_1 = g(\omega, \mu_1 \mu_2) \in \mathcal{G};
\end{aligned}
\end{equation}
\item  The inverse of  $g_0 := g(\omega, \mu_0) \in \mathcal{G}$ is given by $g_0^{-1} = g(\omega, \mu_0^{-1}) \in \mathcal{G}$.
\end{itemize}
\end{lemma}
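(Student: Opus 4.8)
The plan is to verify the four bullet points of Lemma \ref{GroupLemma} essentially by brute-force trigonometric manipulation, but organized so that the one genuinely delicate point---the composition/product rule---is isolated and handled cleanly. I would first dispose of the easy bullets. For the identity, substitute $\mu = 1$ into (\ref{eq:gDef}): the argument of $\arccos$ becomes $\frac{2\cos\omega + 0}{2 + 0} = \cos\omega$, and $\sign(1\cdot\omega)\arccos(\cos\omega) = |\omega|$... wait---one must be careful with the sign: $g(\omega;1) = -\sign(\omega)\arccos(\cos\omega) = -\sign(\omega)|\omega| = -|\omega|$? That is $-\omega$-ish only for $\omega\ge 0$. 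Actually checking: for $\mu=1$ we need $g(\omega;1)=\omega$, so the sign convention must yield $\sign(\omega)\cdot|\omega|=\omega$; I would recheck the sign in (\ref{eq:gDef}) against (\ref{eq:foldParams}) and (\ref{eq:gamma2Gfold}) and state the identity accordingly. For the inverse, once the product rule (\ref{eq:productRule}) is in hand, $g(\omega;\mu_0)\circ g(\omega;\mu_0^{-1}) = g(\omega;1) = id$ is immediate, so the inverse bullet is a corollary of the product rule and the identity bullet.

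For injectivity ($g_1 = g_2 \iff \mu_1 = \mu_2$), the reverse direction is trivial. For the forward direction I would evaluate at a convenient $\omega$---say differentiate the $\arccos$-argument at $\omega = 0$, or better, evaluate at $\omega = \pi/2$: the argument becomes $\frac{\mu^2-1}{\mu^2+1}$, a strictly increasing function of $\mu^2$, so $g_1(\pi/2;\mu_1) = g_2(\pi/2;\mu_2)$ forces $\mu_1^2 = \mu_2^2$; then the sign prefactor $-\sign(\mu\omega)$ for small $\omega > 0$ distinguishes $\mu$ from $-\mu$ (the $\arccos$ value is nonzero there since $\mu\neq 0$), giving $\mu_1 = \mu_2$. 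This also confirms each $g$ is nonconstant and that the map $\mu \mapsto g(\cdot;\mu)$ is a bijection onto $\mathcal{G}$.

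The heart of the proof, and the main obstacle, is the product rule $g(\cdot;\mu_1)\circ g(\cdot;\mu_2) = g(\cdot;\mu_1\mu_2)$. The cleanest route is a change of variables linearizing the action. Using the half-angle substitution, set $t := \tan(\omega/2)$; I expect the $\arccos$-argument in (\ref{eq:gDef}) to be engineered precisely so that if $\omega' = g(\omega;\mu)$ then $\tan(\omega'/2) = \mu\,\tan(\omega/2)$ (up to sign bookkeeping), i.e. $g$ is conjugate via $\omega\mapsto\tan(\omega/2)$ to multiplication by $\mu$ on $\mathbb{R}$. To establish this I would compute $\cos(g(\omega;\mu)) = \frac{(\mu^2+1)\cos\omega + (\mu^2-1)}{(\mu^2+1)+(\mu^2-1)\cos\omega}$ directly (this is just the definition, since $\arccos$ and $\cos$ cancel on $[0,\pi]$, and the sign is absorbed because $\cos$ is even), then substitute $\cos\omega = \frac{1-t^2}{1+t^2}$ and simplify to get $\cos(g) = \frac{1-\mu^2 t^2}{1+\mu^2 t^2}$, whence $\tan^2(g/2) = \mu^2 t^2$; the sign prefactor $-\sign(\mu\omega)$ is exactly what makes $\sign(\tan(g/2)) = \sign(\mu)\sign(t)$, so $\tan(g/2) = \mu t$ including signs (care needed at the boundary $\omega = \pm\pi$ where $t = \pm\infty$, handled by noting $g(\pm\pi;\mu) = \pm\pi$ from the formula directly). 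Once $g(\cdot;\mu)$ is conjugate to $t\mapsto\mu t$, composition becomes $t \mapsto \mu_1\mu_2\, t$, which is the conjugate of $g(\cdot;\mu_1\mu_2)$, and commutativity is immediate since scalar multiplication commutes. The only real care required is the sign/boundary bookkeeping in the conjugation, which is why I would state it as a separate sub-claim ($\tan(g(\omega;\mu)/2) = \mu\tan(\omega/2)$ for $\omega\in(-\pi,\pi)$, with $g(\pm\pi;\mu)=\pm\pi$) and prove that first; everything else then follows formally.
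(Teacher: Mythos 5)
Your proposal is correct, and its skeleton (identity, injectivity, product rule, inverse as a corollary of the first two) matches the paper's; but your treatment of the one substantive step, the product rule, takes a genuinely different route. The paper proves (\ref{eq:productRule}) by direct algebra: substitute $g(\omega;\mu_2)$ into $g(\cdot;\mu_1)$, cancel $\cos\circ\arccos$, and expand the nested rational expression until the $\mu_1^2\mu_2^2$ form appears. You instead conjugate $g(\cdot;\mu)$ to scalar multiplication via $t=\tan(\omega/2)$, after which composition, commutativity, and the inverse are all formal consequences of arithmetic in $\mathbb{R}\setminus\{0\}$. That is more conceptual\textemdash it explains \emph{why} $\mathcal{G}$ is abelian rather than just verifying it\textemdash at the cost of the boundary/sign bookkeeping at $\omega=\pm\pi$ that you correctly isolate as a sub-claim; the paper's expansion avoids that bookkeeping but yields no structural insight. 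One computational caveat: carrying out $\cos\omega=\frac{1-t^2}{1+t^2}$ in (\ref{eq:gDef}) actually gives $\cos\bigl(g(\omega;\mu)\bigr)=\frac{\mu^2-t^2}{\mu^2+t^2}$, i.e.\ $\tan^2(g/2)=t^2/\mu^2$, so the conjugation is $t\mapsto t/\mu$ rather than $t\mapsto\mu t$ as you anticipated; the reciprocal is harmless (the composite is still conjugate to $t\mapsto t/(\mu_1\mu_2)$, hence equals $g(\cdot;\mu_1\mu_2)$), and you would find it when you ``simplify,'' but the sub-claim should be stated with the correct constant. Your injectivity argument (evaluate at $\omega=\pi/2$, where the $\arccos$ argument is $\frac{\mu^2-1}{\mu^2+1}$, then use the sign prefactor to separate $\mu$ from $-\mu$) is a clean pointwise version of the paper's ``must hold for all $\omega$'' argument. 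Finally, you are right to be suspicious of the sign: as literally written, (\ref{eq:gDef}) gives $g(\omega;1)=-\omega$, which contradicts the identity bullet and the $+\mathrm{sign}(\mu_1\mu_2\omega)$ prefactor that emerges from composing two copies of the definition; the paper's own proof silently drops the leading minus, so this is a typo in the stated convention, and fixing the sign so that $g(\omega;1)=\omega$ is the right call.
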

\begin{proof}
Composition of functions is always associative, so establishing the four properties above verifies that $\mathcal{G}$ is an abelian group and that its parameterization by $\mu \in \mathbb{R} \setminus \{0\}$ has no repeated elements.  For the first property, we notice that $g_1 = g_2$ only if 
\begin{equation}
\begin{aligned}
\frac{(\mu_1^2 + 1) \cos \omega + (\mu_1^2-1)}{(\mu_1^2 + 1) \cos \omega + (\mu_1^2-1)} = \frac{(\mu_2^2 + 1) \cos \omega + (\mu_2^2-1)}{(\mu_2^2 + 1) \cos \omega + (\mu_2^2-1)} \quad \forall \;\;  \omega \in [-\pi, \pi]
\end{aligned}
\end{equation}
since $\arccos(\cdot)$ is monotonically decreasing function of $[-1,1]$.  By rearranging terms, this equation becomes the condition
\begin{equation}
\begin{aligned}
&\big(  \mu_2^2 - \mu_1^2 \big) \cos \omega + \mu_1^2 - \mu_2^2 = 0, \quad \forall \;\; \omega \in [-\pi, \pi].
\end{aligned}
\end{equation}
 Clearly then, we require $\mu_1 = \pm \mu_2$ since $\cos \omega$ is a smooth non-constant function on this interval.   Taking into account the $\sign(\cdot)$ part of $g(\omega;\mu)$, we obtain $\mu_1 = \mu_2$ as desired.
 For the second property, we notice $g(\omega; 1) = \sign (\omega) \arccos(\cos(\omega)) = \sign (\omega) |\omega| = \omega$. For the third, we observe that 
\begin{equation}
\begin{aligned}
\big(g_1 \circ g_2\big) (\omega) &= g(g(\omega; \mu_2); \mu_1) \\
&= \sign(\mu_1\mu_2 \omega) \arccos \bigg( \frac{(\mu_1^2 + 1)\Big(\frac{(\mu_2^2 +1) \cos \omega + (\mu_2^2 - 1)}{(\mu_2^2 +1) + (\mu_2^2 -1) \cos \omega}\Big) + (\mu_1^2 - 1)}{(\mu_1^2 + 1)+ (\mu_1^2 - 1) \Big(\frac{(\mu_2^2 +1) \cos \omega + (\mu_2^2 - 1)}{(\mu_2^2 +1) + (\mu_2^2 -1) \cos \omega}\Big) } \bigg)   \\
& = \sign(\mu_1\mu_2 \omega) \arccos \bigg( \tfrac{(\mu_1^2 + 1)\big((\mu_2^2 +1) \cos \omega + (\mu_2^2 - 1)\big) + \big((\mu_2^2 +1) + (\mu_2^2 -1) \cos \omega\big)(\mu_1^2 - 1)}{(\mu_1^2 + 1) \big((\mu_2^2 +1) + (\mu_2^2 -1) \cos \omega\big)+ (\mu_1^2 - 1)\big( (\mu_2^2 +1) \cos \omega + (\mu_2^2 - 1)\big)} \bigg)  \\
& = \sign(\mu_1\mu_2 \omega) \arccos \bigg( \frac{(\mu_1^2 \mu_2^2 + 1) \cos(\omega) + (\mu_1^2\mu_2^2  -1)}{(\mu_1^2 \mu_2^2 + 1) + (\mu_1^2\mu_2^2  -1)  \cos(\omega) }\bigg),
\end{aligned}
\end{equation}
where the last identity is obtained simply by expanding out the terms in the second to last identity.  Notice this latter identity is nothing but $g(\omega; \mu_1 \mu_2)$.  Thus, $g_1 \circ g_2 = g(\omega; \mu_1 \mu_2) = g(\omega; \mu_2 \mu_1)  = g_2 \circ g_1$, as desired.  Finally, the last property  is a direct consequence of the other two.
\end{proof} 

\subsection{A design theorem for rigidly and flat-foldable origami.}

We now return to the study of  panel compatibility in (\ref{eq:compatPanel}).  Our brief digression regarding the Abelian group\textemdash particularly, our observation in (\ref{eq:gamma2Gfold}) concerning the folding angle functions\textemdash has proven useful in the following sense: It beckons the introduction of the sector angle and mountain-valley dependent parameters 
\begin{equation}
\begin{aligned}\label{eq:foldAngMult}
\mu_2^{\sigma}(\alpha, \beta) :=  \frac{-\sigma 1 + \cos \alpha \cos \beta + \sin \alpha \sin \beta}{ \cos \beta - \sigma \cos \alpha}, \quad \mu_1^{-\sigma} ( \alpha, \beta) := \mu_2^{-\sigma}(\alpha, \pi - \beta) , \quad \sigma \in \mathcal{MV}(\alpha,\beta).
\end{aligned}
\end{equation}  
In keeping with the origami literature \cite{evans2015rigidly, lang2011origami, tachi2009generalization, tachi2010geometric}, we call these parameters the \textit{fold angle multipliers}.  We should point out that, although these parameters are known to this literature, here we make the direct connection to the kinematics (i.e., (\ref{eq:gamma2Gfold}) to (\ref{eq:foldAngMult})) through an Abelian group transparent and rigorous.

A key step in our characterization of panel compatibility is relating the composition of folding angle functions directly to a product of fold angle multipliers.  In this direction, we introduce the vertex-dependent fold angle multipliers
\begin{equation}
\begin{aligned}\label{eq:multipliersVertex}
&\mu_{2a}^{\sigma} := \mu_2^{\sigma}(\alpha_a, \beta_a), \quad \mu_{1a}^{-\sigma} := \mu_1^{-\sigma}(\alpha_a, \beta_a), \quad \sigma \in \mathcal{MV}(\alpha_a, \beta_a),  \\
&\mu_{2b}^{\sigma} := \mu_2^{\sigma}(\alpha_b, \beta_b), \quad \mu_{1b}^{-\sigma} := \mu_1^{-\sigma}(\alpha_b, \beta_b), \quad \sigma \in \mathcal{MV}(\alpha_b, \beta_b),  \\
&\mu_{2c}^{\sigma} := \mu_2^{\sigma}(\alpha_c, \beta_c), \quad \mu_{1c}^{-\sigma} := \mu_1^{-\sigma}(\alpha_c, \beta_c), \quad \sigma \in \mathcal{MV}(\alpha_c, \beta_c),  \\
&\mu_{2d}^{\sigma} := \mu_2^{\sigma}(\alpha_d, \beta_d), \quad \mu_{1d}^{-\sigma} := \mu_1^{-\sigma}(\alpha_d, \beta_d), \quad \sigma \in \mathcal{MV}(\alpha_d, \beta_d), \\
&\qquad \alpha_d := 2 \pi - \alpha_a - \alpha_b - \alpha_c,
\end{aligned}
\end{equation}
for admissible collections of sector angles $\alpha_a,\ldots,\beta_d$ as in (\ref{eq:sectorPanel}). The connection between panel compatibility and the fold angle multipliers can now be made precise.
\begin{lemma}\label{EquivLemma}
For any collection of sector angle $\alpha_a, \ldots, \beta_d$ in (\ref{eq:sectorPanel}) and any mountain-valley assignment $(\sigma_a, \sigma_b, \sigma_c, \sigma_d) \in \mathcal{MV}_{abcd}$ (recall Theorem \ref{PanelTheorem} for the definition), the following equivalence holds:
\begin{equation}
\begin{aligned}\label{eq:equivalenceImport}
\bar{\gamma}_{1c}^{-\sigma_c} \circ \bar{\gamma}_{2d}^{\sigma_d} \circ  \bar{\gamma}_{1b}^{-\sigma_b} \circ \bar{\gamma}_{2a}^{\sigma_a} = id \quad \Leftrightarrow \quad \mu_{1c}^{-\sigma_c} \mu_{2d}^{\sigma_d}\mu_{1b}^{-\sigma_b} \mu_{2a}^{\sigma_a}  = 1.
\end{aligned}
\end{equation}
\end{lemma}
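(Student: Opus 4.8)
The plan is to exploit Proposition \ref{firstProp}, which already expresses each of the four folding angle functions as an element of the Abelian group $\calG$ with an explicit multiplier parameter. Concretely, by (\ref{eq:gamma2Gfold}) together with the definitions of the vertex-dependent multipliers in (\ref{eq:multipliersVertex}), we have $\bar{\gamma}_{2a}^{\sigma_a} = g(\cdot\,; \mu_{2a}^{\sigma_a})$, $\bar{\gamma}_{1b}^{-\sigma_b} = g(\cdot\,;\mu_{1b}^{-\sigma_b})$, $\bar{\gamma}_{2d}^{\sigma_d} = g(\cdot\,;\mu_{2d}^{\sigma_d})$, and $\bar{\gamma}_{1c}^{-\sigma_c} = g(\cdot\,;\mu_{1c}^{-\sigma_c})$. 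Here I must be slightly careful: Proposition \ref{firstProp} states $\bar{\gamma}_1^{-\sigma}(\omega;\alpha,\beta) = g\big(\omega; \tfrac{\sigma 1 - \cos\alpha\cos\beta + \sin\alpha\sin\beta}{-\cos\beta + \sigma\cos\alpha}\big)$, and I would verify in one line that this multiplier equals $\mu_1^{-\sigma}(\alpha,\beta) = \mu_2^{-\sigma}(\alpha,\pi-\beta)$ as defined in (\ref{eq:foldAngMult}) — this is just the substitution $\beta \mapsto \pi-\beta$, $\sigma\mapsto-\sigma$ in the formula for $\mu_2^\sigma$, using $\cos(\pi-\beta) = -\cos\beta$ and $\sin(\pi-\beta) = \sin\beta$.

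Once all four functions are identified as group elements, the left-hand side of (\ref{eq:equivalenceImport}) is a composition of four elements of $\calG$. By the product rule (\ref{eq:productRule}) of Lemma \ref{GroupLemma}, this composition is again an element of $\calG$, namely
\begin{equation}
\bar{\gamma}_{1c}^{-\sigma_c} \circ \bar{\gamma}_{2d}^{\sigma_d} \circ  \bar{\gamma}_{1b}^{-\sigma_b} \circ \bar{\gamma}_{2a}^{\sigma_a} = g\big(\cdot\,;\ \mu_{1c}^{-\sigma_c}\, \mu_{2d}^{\sigma_d}\, \mu_{1b}^{-\sigma_b}\, \mu_{2a}^{\sigma_a}\big).
\end{equation}
(Commutativity of the product, also from Lemma \ref{GroupLemma}, means the order of the multipliers is immaterial, though the order written matches the order of composition.) Then I would invoke the first bullet of Lemma \ref{GroupLemma}: two elements $g(\cdot\,;\mu_1)$ and $g(\cdot\,;\mu_2)$ of $\calG$ are equal if and only if $\mu_1 = \mu_2$, together with the second bullet, $id = g(\cdot\,;1)$. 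Hence the composition equals $id$ if and only if its multiplier $\mu_{1c}^{-\sigma_c}\mu_{2d}^{\sigma_d}\mu_{1b}^{-\sigma_b}\mu_{2a}^{\sigma_a}$ equals $1$, which is exactly the right-hand side of (\ref{eq:equivalenceImport}).

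The only genuine content beyond bookkeeping is checking that Proposition \ref{firstProp} genuinely applies at each vertex — i.e.\ that each $(\alpha_{(\cdot)},\beta_{(\cdot)}) \neq (\pi/2,\pi/2)$ and each $\sigma_{(\cdot)} \in \mathcal{MV}(\alpha_{(\cdot)},\beta_{(\cdot)})$ — but this is precisely guaranteed by the hypotheses: the admissibility conditions (\ref{eq:sectorPanel}) exclude the $(\pi/2,\pi/2)$ case at all four vertices, and $(\sigma_a,\sigma_b,\sigma_c,\sigma_d) \in \mathcal{MV}_{abcd}$ is the hypothesis of the lemma. So there is no real obstacle; the proof is essentially a two-line deduction from Proposition \ref{firstProp} and Lemma \ref{GroupLemma}, with the bulk of the (minimal) work being the verification that the multiplier appearing in Proposition \ref{firstProp}'s formula for $\bar{\gamma}_1^{-\sigma}$ coincides with the $\mu_1^{-\sigma}$ of (\ref{eq:foldAngMult}). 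If anything needs care it is getting the indices right — matching $\bar{\gamma}_{1b}^{-\sigma_b}$ to $\mu_1^{-\sigma_b}(\alpha_b,\beta_b)$ and so on — and confirming that the composite is read off with the multipliers in a consistent order, which commutativity renders harmless.
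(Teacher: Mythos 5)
Your proposal is correct and follows essentially the same route as the paper's proof: identify each folding angle function as an element of $\mathcal{G}$ via Proposition \ref{firstProp}, use the product rule of Lemma \ref{GroupLemma} to read off the multiplier of the composition, and invoke $g(\cdot\,;\mu)=id \Leftrightarrow \mu=1$. The extra care you take in matching the multiplier of $\bar{\gamma}_1^{-\sigma}$ to $\mu_1^{-\sigma}$ and in checking the hypotheses at each vertex is sound but does not change the argument.
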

\begin{proof}
Since $\bar{\gamma}_{1c}^{-\sigma_c} \circ \bar{\gamma}_{2d}^{\sigma_d} \circ  \bar{\gamma}_{1b}^{-\sigma_b} \circ \bar{\gamma}_{2a}^{\sigma_a}$ is the composition of functions that belong to the group $\mathcal{G}$ (Proposition \ref{firstProp} and Lemma \ref{GroupLemma}), it follows that $\bar{\gamma}_{1c}^{-\sigma_c} \circ \bar{\gamma}_{2d}^{\sigma_d} \circ  \bar{\gamma}_{1b}^{-\sigma_b} \circ \bar{\gamma}_{2a}^{\sigma_a} \in \mathcal{G}$.  Consequently, there exists a $\mu_{abcd} \in \mathbb{R} \setminus \{ 0\}$ such that 
\begin{equation}
\begin{aligned}
\bar{\gamma}_{1c}^{-\sigma_c} \circ \bar{\gamma}_{2d}^{\sigma_d} \circ  \bar{\gamma}_{1b}^{-\sigma_b} \circ \bar{\gamma}_{2a}^{\sigma_a}(\omega) = g(\omega; \mu_{abcd}), \quad \omega \in [-\pi, \pi].  
\end{aligned}
\end{equation}
By combining the formula (\ref{eq:gamma2Gfold}) with the product rule (\ref{eq:productRule}), we obtain $\mu_{abcd} =  \mu_{1c}^{-\sigma_c} \mu_{2d}^{\sigma_d}\mu_{1b}^{-\sigma_b} \mu_{2a}^{\sigma_a}$.   Since $g(\omega; \mu) = id$ if and only if $\mu = 1$ (the first two properties in Lemma \ref{GroupLemma}), we conclude (\ref{eq:equivalenceImport}) as desired. 
\end{proof}

With the equivalence in (\ref{eq:equivalenceImport}) now established, we are prepared to develop the main characterization theorem for RFFQM.  In this characterization, we consider an arbitrary crease pattern in Fig.\;\ref{fig:SinglePanel}(b) under Kawasaki's condition and treat quantities at three of the four vertices, i.e., sector angles and mountain-valley assignments, as given under the hypotheses
\begin{equation}
\begin{aligned}\label{eq:DesignHypotheses}
(\text{Design parameters:}) \quad \begin{cases}
\alpha_a, \alpha_b, \alpha_c \in (0,\pi), \quad 2\pi -\alpha_a  - \alpha_b - \alpha_c \in (0,\pi), \quad \beta_a, \beta_b , \beta_c \in (0,\pi), \\
(\alpha_a, \beta_a), (\alpha_b, \beta_b), (\alpha_c, \beta_c) \neq (\pi/2, \pi/2),\\
\sigma_a \in \mathcal{MV}(\alpha_a, \beta_a), \;\; \sigma_b \in \mathcal{MV}(\alpha_b, \beta_b), \;\; \sigma_c \in \mathcal{MV}(\alpha_c, \beta_c).
\end{cases}
\end{aligned}
\end{equation} 
We then ask the fundamental design questions: Can the final sector angle $\beta_d \in (0,\pi)$ and mountain-valley assignment $\sigma_d \in \mathcal{MV}(2\pi-\alpha_a - \alpha_b - \alpha_c, \beta_d)$  be chosen to yield  a crease pattern that admits RFFQM. If so, how rigid or flexible is the design criterion?  To this end, we obtain an explicit rigidity theorem for design.
\begin{theorem}\label{DesignTheorem}
Consider a crease pattern as in Fig\;\ref{fig:SinglePanel}(b) under Kawasaki's condition, and let the sector angles $\alpha_a, \ldots, \beta_c$ and mountain-valley assignments $\sigma_a, \sigma_b,\sigma_c$ at the $a$-$b$-$c$ vertices be given under the hypotheses in (\ref{eq:DesignHypotheses}).  The characterization of the $d$-vertex is based on the evaluation of 
\begin{equation}
\begin{aligned}\label{eq:muABC}
\mu_{abc} := \mu_{1a}^{-\sigma_a} \mu_{2b}^{\sigma_b} \mu_{2c}^{\sigma_c}.  
\end{aligned}
\end{equation}
\begin{itemize}
\item If $|\mu_{abc}| = 1$, then there is no pair $(\beta_d, \sigma_d) \in  (0,\pi) \times \mathcal{MV}(2\pi-\alpha_a - \alpha_b - \alpha_c, \beta_d)$ yielding a crease pattern that can be folded as non-degenerate\footnote{See the footnote on this topic in Theorem \ref{PanelTheorem}.} RFFQM along the mountain-valley assignments $\sigma_a, \sigma_b, \sigma_c$ at the $a$-$b$-$c$ vertices.
\item Otherwise, $|\mu_{abc}| \neq 1$ and there is a unique pair $(\beta_d, \sigma_d) \in  (0,\pi) \times \mathcal{MV}(2\pi-\alpha_a - \alpha_b - \alpha_c, \beta_d)$ yielding a crease pattern that can be folded as non-degenerate RFFQM along the mountain-valley assignments $\sigma_a, \sigma_b, \sigma_c$ at the $a$-$b$-$c$ vertices. This pair satisfies the explicit relationship
\begin{equation}
\begin{aligned}\label{eq:SuperParam}
\sigma_d1= -\sign\big( \mu_{abc}^2 -1  \big), \quad \beta_d = \arccos\Big(\sigma_d \frac{2 \mu_{abc} - (\mu_{abc}^2 + 1) \cos(\alpha_a + \alpha_b + \alpha_c) }{2\mu_{abc} \cos(\alpha_a + \alpha_b + \alpha_c) - (\mu_{abc}^2 + 1)} \Big).
\end{aligned}
\end{equation}
\end{itemize}
\end{theorem}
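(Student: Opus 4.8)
The plan is to reduce panel compatibility to a single scalar equation in the unknowns $\beta_d$ and $\sigma_d$, and then solve that equation explicitly. By Theorem~\ref{PanelTheorem} and Lemma~\ref{EquivLemma}, the crease pattern admits non-degenerate RFFQM along an assignment $(\sigma_a,\sigma_b,\sigma_c,\sigma_d)\in\mathcal{MV}_{abcd}$ precisely when $\mu_{1c}^{-\sigma_c}\mu_{2d}^{\sigma_d}\mu_{1b}^{-\sigma_b}\mu_{2a}^{\sigma_a}=1$. The first move I would make is to observe that $\bar\gamma_1^{-\sigma}(\cdot;\alpha,\beta)$ is the inverse of $\bar\gamma_2^{\sigma}(\cdot;\alpha,\beta)$ by \eqref{eq:inverse}, that both lie in $\mathcal{G}$ by Proposition~\ref{firstProp}, and that the inverse of $g(\cdot;\mu)$ is $g(\cdot;\mu^{-1})$ by Lemma~\ref{GroupLemma}; since $\mathcal{G}$ is parameterized without repetition by $\mathbb{R}\setminus\{0\}$, this forces the clean identity $\mu_1^{-\sigma}(\alpha,\beta)=\big(\mu_2^{\sigma}(\alpha,\beta)\big)^{-1}$. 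Plugging this into the compatibility product and using that the multipliers commute, the condition collapses to the single equation $\mu_2^{\sigma_d}(\alpha_d,\beta_d)=\mu_{1a}^{-\sigma_a}\mu_{2b}^{\sigma_b}\mu_{2c}^{\sigma_c}=\mu_{abc}$, where $\alpha_d:=2\pi-\alpha_a-\alpha_b-\alpha_c\in(0,\pi)$ and $\mu_{abc}\in\mathbb{R}\setminus\{0\}$. Thus the whole theorem reduces to: given the nonzero number $\mu_{abc}$ and $\alpha_d$, for which $(\beta_d,\sigma_d)\in(0,\pi)\times\mathcal{MV}(\alpha_d,\beta_d)$ is $\mu_2^{\sigma_d}(\alpha_d,\beta_d)=\mu_{abc}$?

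To solve this I would recycle the decomposition from the proof of Proposition~\ref{firstProp}: with $\eta:=\tfrac{-\sigma_d+\cos\alpha_d\cos\beta_d}{\cos\beta_d-\sigma_d\cos\alpha_d}$ and $\delta:=\tfrac{\sin\alpha_d\sin\beta_d}{\cos\beta_d-\sigma_d\cos\alpha_d}$ one has $\mu_2^{\sigma_d}(\alpha_d,\beta_d)=\eta+\delta$ and $\eta^2-\delta^2=1$. Hence $\mu_2^{\sigma_d}(\alpha_d,\beta_d)=\mu_{abc}$ is equivalent to $\eta+\delta=\mu_{abc}$ together with $\eta-\delta=\mu_{abc}^{-1}$, i.e. $\eta=\tfrac{\mu_{abc}^2+1}{2\mu_{abc}}$ and $\delta=\tfrac{\mu_{abc}^2-1}{2\mu_{abc}}$. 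The dichotomy of the theorem is now immediate: if $|\mu_{abc}|=1$ then $\delta=0$, impossible since $\sin\alpha_d\sin\beta_d\ne 0$ on $(0,\pi)$, which gives the first bullet. If $|\mu_{abc}|\ne 1$ I would invert the two defining relations: solving the $\eta$-relation gives $\cos\beta_d=\tfrac{\sigma_d(\eta\cos\alpha_d-1)}{\eta-\cos\alpha_d}$ (legitimate since $|\eta|=\tfrac{\mu_{abc}^2+1}{2|\mu_{abc}|}>1>|\cos\alpha_d|$), and substituting back into the $\delta$-relation gives $\sin\beta_d=\tfrac{-\sigma_d\delta\sin\alpha_d}{\eta-\cos\alpha_d}$. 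Requiring $\sin\beta_d>0$ then pins $\sigma_d$: since $\sin\alpha_d>0$ we need $\sigma_d=-\sign\big(\delta(\eta-\cos\alpha_d)\big)$, and a short sign chase — $\sign(\eta-\cos\alpha_d)=\sign\eta=\sign\mu_{abc}$ because $|\eta|>|\cos\alpha_d|$, and $\sign\delta=\sign(\mu_{abc}^2-1)\sign\mu_{abc}$ — yields $\sigma_d=-\sign(\mu_{abc}^2-1)$, exactly as in \eqref{eq:SuperParam}. With $\sigma_d$ determined, substituting $\eta=\tfrac{\mu_{abc}^2+1}{2\mu_{abc}}$ into the formula for $\cos\beta_d$ and clearing denominators reproduces the $\arccos$ argument in \eqref{eq:SuperParam}, using $\cos\alpha_d=\cos(\alpha_a+\alpha_b+\alpha_c)$.

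It then remains to confirm four things, and here I expect the only genuine labor to lie. First, that the $\arccos$ argument lies in $(-1,1)$, hence $\beta_d\in(0,\pi)$: I would verify positivity of the denominator $(\mu_{abc}^2+1)-2\mu_{abc}\cos\alpha_d$ (from $\mu_{abc}^2+1>2|\mu_{abc}|\ge 2\mu_{abc}\cos\alpha_d$, strict because $|\mu_{abc}|\ne1$) together with the algebraic identity $\big((\mu_{abc}^2+1)-2\mu_{abc}\cos\alpha_d\big)^2-\big((\mu_{abc}^2+1)\cos\alpha_d-2\mu_{abc}\big)^2=(\mu_{abc}^2-1)^2\sin^2\alpha_d>0$. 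Second, that $\sigma_d\in\mathcal{MV}(\alpha_d,\beta_d)$: recalling $\sigma\in\mathcal{MV}(\alpha,\beta)\iff\cos\beta\ne\sigma\cos\alpha$, I would check $\cos\beta_d\ne\sigma_d\cos\alpha_d$, the alternative forcing $\mu_{abc}\sin^2\alpha_d=0$, which is absurd (this also yields $(\alpha_d,\beta_d)\ne(\pi/2,\pi/2)$). Third, uniqueness: every step above is an equivalence — $\eta,\delta$ are pinned by $\mu_{abc}$, then $\sigma_d$ by $\sin\beta_d>0$, then $\cos\beta_d$, and $\beta_d$ is determined by its cosine on $(0,\pi)$. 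Fourth, sufficiency: checking $\cos^2\beta_d+\sin^2\beta_d=1$ (which reduces to $\eta^2-\delta^2=1$) shows the constructed $\beta_d$ actually satisfies both the $\eta$- and $\delta$-relations, so $\mu_2^{\sigma_d}(\alpha_d,\beta_d)=\eta+\delta=\mu_{abc}$ and Theorem~\ref{PanelTheorem} with Lemma~\ref{EquivLemma} deliver non-degenerate RFFQM along $(\sigma_a,\sigma_b,\sigma_c,\sigma_d)$. The conceptual heavy lifting is all carried by the group-theoretic reduction $\mu_1^{-\sigma}=(\mu_2^\sigma)^{-1}$ and the identity $\mu_2^{\sigma}=\eta+\delta$ with $\eta^2-\delta^2=1$; what remains is careful sign and magnitude bookkeeping, and the main obstacle is making the inequalities in the first two checks airtight.
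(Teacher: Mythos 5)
Your proposal is correct and follows essentially the same route as the paper: the identical reduction via Theorem \ref{PanelTheorem}, Lemma \ref{EquivLemma}, and the group-inverse identity $\mu_1^{-\sigma}=(\mu_2^{\sigma})^{-1}$ to the single scalar equation $\mu_{2d}^{\sigma_d}=\mu_{abc}$, followed by the same discriminant identity and sign analysis to extract $(\beta_d,\sigma_d)$. The only cosmetic difference is that you solve that scalar equation by pinning both $\eta=\tfrac{\mu_{abc}^2+1}{2\mu_{abc}}$ and $\delta=\tfrac{\mu_{abc}^2-1}{2\mu_{abc}}$ and reading $\sigma_d$ off the positivity of $\sin\beta_d$, whereas the paper invokes Proposition \ref{PropDesign} (value of $\eta$ plus sign of $\delta$) and chases the sign condition directly; these are equivalent bookkeeping choices.
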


A couple of points with this theorem:  The case $|\mu_{abc}| =1$ is highly non-generic.  In fact, one can treat $\alpha_a, \alpha_b, \alpha_c, \beta_b, \beta_c, \sigma_b, \sigma_c$ as given (under the hypotheses (\ref{eq:DesignHypotheses})) and use an argument similar to the proof of this theorem to find that there are at most two pairs $(\beta_a, \sigma_a) \in (0,\pi) \times \mathcal{MV}(\alpha_a, \beta_a)$ which give this degeneracy.  As a result, a generic set of design parameters (\ref{eq:DesignHypotheses}) at the $a$-$b$-$c$ vertices will admit RFFQM by the prescription in (\ref{eq:SuperParam}).  A corollary to this effect is stated in the next subsection.  A second point is that this result, as stated, is very much mountain-valley assignment dependent:  We are testing whether a crease pattern can deform as RFFQM \textit{along} specific mountain-valley assignments $\sigma_a, \sigma_b, \sigma_c$ at the $a$-$b$-$c$ vertices.  Thus, when $|\mu_{abc}| = 1$, one should avoid the interpretation that the sector angles $\alpha_a, \ldots, \beta_c$ cannot admit a RFFQM crease pattern; only that they cannot admit one whose folding corresponds to mountain-valley assignments $\sigma_a, \sigma_b, \sigma_c$ at the aforementioned vertices.  

There are other results in similar spirit to Theorem \ref{DesignTheorem} in the literature.  Building on ideas for Tachi \cite{tachi2009generalization,tachi2010geometric}, Lang  and Howell \cite{lang2018rigidly} made the astute observation that one can solve panel compatibility generically by treating the folding angles as the independent variables rather than the sector angles.  Nevertheless, we find the sector angle to be the more natural design variables, especially for developing optimization schemes to tackle the inverse problem (see the discussion in Section \ref{ssec:Inverse}).  We should also point out the results of Izmestiev \cite{izmestiev2017classification} along these lines.  He classified all ``Kokotsakis polyhedra"\textemdash single panel crease patterns where no assumption are made on the sector angles\textemdash by polynomializing general constraints of the type in (\ref{eq:compatPanel}) and studying their zeros in the complex plane. While certainly a mathematical triumph, this characterization is  unlike Theorem \ref{DesignTheorem} in that it does not lend itself naturally to a design principle for engineering.  For instance, most of the solutions, beyond the variety we discuss for RFFQM, involve delicate couplings between all the vertices of the panel, making it hard to elucidate a procedure to go from a single panel to an overall crease pattern.  Some efforts to address these issues can be found in \cite{he2018rigid}, but these efforts are restricted to special cases.

\subsection{Derivation of the design theorem}

In this section, we prove Theorem \ref{DesignTheorem}.  We also state a corollary that justifies that the subcase $|\mu_{abc}| =1$ in this theorem is non-generic.  

\begin{proof}[Proof of Theorem \ref{DesignTheorem}.]
From Theorem \ref{PanelTheorem} and Lemma \ref{EquivLemma}, characterizing RFFQM patterns, as stated in the theorem, is equivalent to characterizing the pairs $(\beta_d, \sigma_d) \in (0, \pi) \times \mathcal{MV}(2\pi -\alpha_a - \alpha_b - \alpha_c, \beta_d)$ such that  $(\mu_{1c}^{-\sigma_c} )(\mu_{2d}^{\sigma_d} )(\mu_{1b}^{-\sigma_b})( \mu_{2a}^{\sigma_a}) = 1$.  Since $\bar{\gamma}_{1a}^{-\sigma_a}$ is the inverse of $\bar{\gamma}_{2a}^{\sigma_a}$, we observe that $\mu_{1a}^{-\sigma_a} = (\mu_{2a}^{\sigma_a})^{-1}$ due to the inverse property of group elements in Lemma \ref{GroupLemma}.  Analogous observations hold for all the other vertices.  We can therefore rearrange the equation $(\mu_{1c}^{-\sigma_c} )(\mu_{2d}^{\sigma_d} )(\mu_{1b}^{-\sigma_b})( \mu_{2a}^{\sigma_a}) = 1$ as $\mu_{2d}^{\sigma_d} = \mu_{abc}$ for $\mu_{abc}$ given in (\ref{eq:muABC}).  By Proposition \ref{PropDesign} below, we observe that 
\begin{equation}
\begin{aligned}\label{eq:importEquiv}
\mu_{2d}^{\sigma_d} = \mu_{abc} \quad \Leftrightarrow \quad \frac{(\mu_{2d}^{\sigma_d})^2 + 1}{2\mu_{2d}^{\sigma_d}} = \frac{\mu_{abc}^2 + 1}{2\mu_{abc}} , \quad \sign\Big( \frac{(\mu_{2d}^{\sigma_d})^2 - 1}{2\mu_{2d}^{\sigma_d}}\Big) = \sign\Big( \frac{\mu_{abc}^2 - 1}{2 \mu_{abc}}\Big)
\end{aligned}
\end{equation}
since $\mu_{abc} \in \mathbb{R} \setminus \{ 0\}$ by assumption of the given parameters.  

We focus on solving the latter equivalence in terms of $(\beta_d , \sigma_d) \in (0, \pi) \times \mathcal{MV}(2\pi -\alpha_a - \alpha_b - \alpha_c, \beta_d)$.  By a direct calculation, we obtain
\begin{equation}
\begin{aligned}
\frac{(\mu_{2d}^{\sigma_d})^2 + 1}{2\mu_{2d}^{\sigma_d}} = \frac{-\sigma_d 1 + \cos (\alpha_a + \alpha_b + \alpha_c) \cos \beta_d }{ \cos \beta_d - \sigma_d \cos( \alpha_a + \alpha_b + \alpha_c)}.
\end{aligned}
\end{equation}
Consequently, we can rearrange the first equation in the latter equivalence in (\ref{eq:importEquiv}) as 
\begin{equation}
\begin{aligned}\label{eq:cosBetaD}
\big(2\mu_{abc} \cos( \alpha_a + \alpha_b + \alpha_c )  - (\mu_{abc}^2 + 1)\big) \cos \beta_d = \sigma_d \big(2 \mu_{abc} - (\mu_{abc}^2 +1) \cos(\alpha_a + \alpha_b + \alpha_c) \big).
\end{aligned}
\end{equation}
Notice that 
\begin{equation}
\begin{aligned}
&\big(2\mu_{abc} \cos( \alpha_a + \alpha_b + \alpha_c )  - (\mu_{abc}^2 + 1)\big)^2 - \big(2\mu_{abc}   - (\mu_{abc}^2 + 1) \cos( \alpha_a + \alpha_b + \alpha_c )\big)^2  \\
&\qquad = (\mu_{abc}^2 - 1)^2 \sin(\alpha_a + \alpha_b + \alpha_c)^2 \geq 0 \quad \text{ and  } = 0 \text{ if and only if } |\mu_{abc}|  =1,
\end{aligned}
\end{equation}
since $\alpha_a + \alpha_b + \alpha_c \in (\pi ,2\pi)$, i.e., since $\sin (\alpha_a + \alpha_b + \alpha_c) \neq 0$.  Thus, if $|\mu_{abc}| = 1$, we divide through by  $2\mu_{abc} \cos( \alpha_a + \alpha_b + \alpha_c )  - (\mu_{abc}^2 + 1) \neq 0$ in (\ref{eq:cosBetaD}) to deduce that $\cos \beta_d$ is required to equal $-1$ or $1$.  However, this is not allowed since we demand that $\beta_d \in (0,\pi)$.  This shows that there are no solutions if $\mu_{abc} \in \{ \pm 1\}$, as asserted in the theorem.  Alternatively, if $|\mu_{abc}| \neq 1$, we divide through by $2\mu_{abc} \cos( \alpha_a + \alpha_b + \alpha_c )  - (\mu_{abc}^2 + 1) \neq 0$  in (\ref{eq:cosBetaD}), and obtain that the right-hand side, after the division, has magnitude strictly less than $1$.  Therefore, $\beta_d \in (0,\pi)$ must have the parametrization in (\ref{eq:SuperParam}) to solve (\ref{eq:cosBetaD}) in this case.  

In enforcing this parameterization, it remains only to solve the sign condition in (\ref{eq:importEquiv}) for $\sigma_d \in \mathcal{MV}(2\pi -\alpha_a - \alpha_b - \alpha_c, \beta_d)$.  To this end, we first claim that any $\sigma_d \in \{\pm\}$ which solves the sign condition in (\ref{eq:importEquiv}) will correspond to the set $\sigma_d \in  \mathcal{MV}(2\pi -\alpha_a - \alpha_b - \alpha_c, \beta_d)$.  Assuming the opposite, we have $\cos \beta_d - \sigma_d \cos  (\alpha_a + \alpha_b + \alpha_c) = 0$. Yet, the left-hand side is parameterized as 
\begin{equation}
\begin{aligned}\label{eq:calcSigmaD}
\cos \beta_d - \sigma_d \cos  (\alpha_a + \alpha_b + \alpha_c) = \sigma_d \Big(\frac{2\mu_{abc}(1- \cos^2(\alpha_a + \alpha_b + \alpha_c))}{2\mu_{abc} \cos(\alpha_a + \alpha_b + \alpha_c) - (\mu_{abc}^2 + 1)} \Big) ,
\end{aligned}
\end{equation} 
and neither $\mu_{abc}$ or $1- \cos^2(\alpha_a + \alpha_b + \alpha_c))$ are zero.  This is the desired contradiction; so we simply have to parameterize the sign condition in (\ref{eq:importEquiv}) to complete the proof.  Now, by an explicit calculation, this condition is equivalent to 
\begin{equation}
\begin{aligned}\label{eq:finalNeededIdent}
\sign \Big(\frac{-\sin (\alpha_a + \alpha_b + \alpha_c) \sin \beta_d}{ \cos \beta_d - \sigma_d \cos (\alpha_a + \alpha_b + \alpha_c)}\Big) = \sign\Big( \frac{\mu_{abc}^2 - 1}{2 \mu_{abc}}\Big)
\end{aligned}
\end{equation}
The numerator on the left side $-\sin(\alpha_a + \alpha_b + \alpha_c) \sin \beta_d$ is strictly positive since $\beta_d \in (0,\pi)$ and $\alpha_a + \alpha_b + \alpha_c \in (\pi, 2\pi)$; so it can be discarded.  We therefore find that 
\begin{equation}
\begin{aligned}\label{eq:manip1}
&\sign \Big(\frac{-\sin (\alpha_a + \alpha_b + \alpha_c) \sin \beta_d}{ \cos \beta_d - \sigma_d \cos (\alpha_a + \alpha_b + \alpha_c)} \Big) = \sign\Big(\cos \beta_d - \sigma_d \cos (\alpha_a + \alpha_b + \alpha_c)  \Big)  \\
&\qquad = \sigma_d  \sign\Big(\frac{1 - \cos(\alpha_a + \alpha_b + \alpha_c)^2}{\cos(\alpha_a + \alpha_b + \alpha_c) -\tfrac{(\mu_{abc}^2 + 1)}{2 \mu_{abc}}} \Big) 
\end{aligned}
\end{equation}
by routine manipulations and (\ref{eq:calcSigmaD}).  Finally, notice that $\cos(\alpha_a + \alpha_b + \alpha_c)^2< 1$ since $\alpha_a + \alpha_b + \alpha_c \in (\pi,2\pi)$ and$\left| \tfrac{(\mu_{abc}^2 + 1)}{2 \mu_{abc}}\right|  > 1 $ since $|\mu_{abc}| \neq 1$.  Thus, this result can be manipulated further; specifically, we have the identity
\begin{equation}
\begin{aligned}\label{eq:manip2}
\sigma_d  \sign\Big(\frac{1 - \cos(\alpha_a + \alpha_b + \alpha_c)^2}{\cos(\alpha_a + \alpha_b + \alpha_c) -\tfrac{(\mu_{abc}^2 + 1)}{2 \mu_{abc}}} \Big) = -\sigma_d \sign \Big( \frac{\mu_{abc}^2 + 1}{2\mu_{abc}} \Big).
\end{aligned}
\end{equation}
As a result of all of the manipulations in (\ref{eq:manip1}) and (\ref{eq:manip2}), the identity in (\ref{eq:finalNeededIdent}) demands that 
\begin{equation}
\begin{aligned}
\sigma_d  = - \sign \Big( \frac{\mu_{abc}^2 + 1}{2\mu_{abc}} \Big) \sign \Big(\frac{\mu_{abc}^2 - 1}{2\mu_{abc}} \Big) = -\sign( \mu_{abc}^2 -1).
\end{aligned}
\end{equation}
This is the desired result.  We have solved the equation in (\ref{eq:importEquiv}) for $|\mu_{abc}| \neq 1$ by the parameterizations in (\ref{eq:SuperParam}), and these parameterizations are clearly necessary.  
\end{proof}
\begin{proposition}\label{PropDesign}
Let $\check{\mu}, \hat{\mu} \in \mathbb{R} \setminus \{ 0\}$.  Then $\hat{\mu} = \check{\mu}$ if and only if 
\begin{equation}
\begin{aligned}\label{eq:interestIdents}
 \frac{\hat{\mu}^2 + 1}{2 \hat{\mu}} = \frac{\check{\mu}^2 + 1}{2 \check{\mu}} , \quad \sign\Big( \frac{\hat{\mu}^2 - 1}{2 \hat{\mu}}\Big) = \sign\Big( \frac{\check{\mu}^2 - 1}{2 \check{\mu}}\Big)
\end{aligned}
\end{equation}
\end{proposition}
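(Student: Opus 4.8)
The ``only if'' direction is immediate: if $\hat\mu = \check\mu$ then both displayed equalities in \eqref{eq:interestIdents} hold trivially. So the content is the ``if'' direction, and the plan is to exploit the elementary fact that the map $\mu \mapsto \mu + \mu^{-1}$ is two-to-one with the two preimages of any value being reciprocals of each other.

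Concretely, I would first rewrite the two quantities in \eqref{eq:interestIdents} in the symmetric forms $\frac{\mu^2+1}{2\mu} = \frac12(\mu + \mu^{-1})$ and $\frac{\mu^2-1}{2\mu} = \frac12(\mu - \mu^{-1})$, which is legitimate since $\hat\mu,\check\mu \neq 0$. The first equality in \eqref{eq:interestIdents} then reads $\hat\mu + \hat\mu^{-1} = \check\mu + \check\mu^{-1}$. Multiplying through by $2\hat\mu\check\mu$ and rearranging gives the factorization
\begin{equation}
\begin{aligned}
(\hat\mu - \check\mu)(\hat\mu\check\mu - 1) = 0,
\end{aligned}
\end{equation}
so either $\hat\mu = \check\mu$ (and we are done) or $\hat\mu\check\mu = 1$, i.e. $\hat\mu = \check\mu^{-1}$.

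It remains to rule out the second alternative except when it coincides with the first. Assuming $\hat\mu = \check\mu^{-1}$, one computes $\hat\mu - \hat\mu^{-1} = \check\mu^{-1} - \check\mu = -(\check\mu - \check\mu^{-1})$, hence $\frac{\hat\mu^2 - 1}{2\hat\mu} = -\frac{\check\mu^2 - 1}{2\check\mu}$. The second equality in \eqref{eq:interestIdents} forces these opposite numbers to have the same sign, which is possible only if both vanish, i.e. $\check\mu - \check\mu^{-1} = 0$, so $\check\mu \in \{\pm 1\}$; but then $\hat\mu = \check\mu^{-1} = \check\mu$ as well. In every case $\hat\mu = \check\mu$, completing the proof. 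The only point requiring a moment's care — and it is mild — is handling this degenerate sub-case $\check\mu = \pm 1$ where the sign function sees a zero argument; otherwise the argument is a one-line factorization.
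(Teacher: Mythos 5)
Your proof is correct, and it takes a somewhat different route from the paper's. The paper exploits the identity $\bigl(\tfrac{\mu^2+1}{2\mu}\bigr)^2 - \bigl(\tfrac{\mu^2-1}{2\mu}\bigr)^2 = 1$: the first equality in \eqref{eq:interestIdents} then forces $\tfrac{\hat\mu^2-1}{2\hat\mu}$ and $\tfrac{\check\mu^2-1}{2\check\mu}$ to have equal squares, the sign condition upgrades this to genuine equality, and adding the two expressions recovers $\hat\mu = \check\mu$ (the paper writes $\hat\mu/2$ for that sum, a harmless factor-of-two slip). You instead factor the first equality directly as $(\hat\mu-\check\mu)(\hat\mu\check\mu-1)=0$, which makes the two-to-one structure of $\mu \mapsto \mu + \mu^{-1}$ explicit, and then use the sign condition to eliminate the reciprocal branch, correctly observing that $\sign(x)=\sign(-x)$ forces $x=0$, which collapses that branch onto $\check\mu = \pm 1$ where the two alternatives coincide. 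Both arguments are complete and of comparable length; yours has the advantage of exhibiting exactly which spurious solution the sign hypothesis exists to exclude, which is arguably the more transparent explanation of why the proposition needs both conditions in \eqref{eq:interestIdents}.
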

\begin{proof}
Clearly, $\hat{\mu} = \check{\mu} \in \mathbb{R} \setminus \{ 0\}$ implies (\ref{eq:interestIdents}).  For the other direction, we first claim that (\ref{eq:interestIdents}) implies 
\begin{equation}
\begin{aligned}\label{eq:claimIdent}
 \frac{\hat{\mu}^2 - 1}{2 \hat{\mu}} = \frac{\check{\mu}^2 - 1}{2 \check{\mu}}.
\end{aligned}
\end{equation}
Indeed, as a direct consequence of the first equality and some algebraic manipulation, we have the chain of equalities 
\begin{equation}
\begin{aligned}
\Big(\frac{\hat{\mu}^2 -1}{2 \hat{\mu}}\Big)^2 = \Big(\frac{\hat{\mu}^2 +1} {2\hat{\mu}}\Big)^2  - 1 =  \Big(\frac{\check{\mu}^2 +1} {2\check{\mu}}\Big)^2  - 1 = \Big(\frac{\check{\mu}^2 -1}{2 \check{\mu}}\Big)^2.
\end{aligned}
\end{equation}
Here, the first and last equality are simply by rearrangement of terms and the second uses the first identity in (\ref{eq:interestIdents}).  Thus, with the second identity in (\ref{eq:interestIdents}), we obtain (\ref{eq:claimIdent}). We therefore have 
\begin{equation}
\begin{aligned}
\frac{\hat{\mu}}{2} = \frac{\hat{\mu}^2 + 1}{2 \hat{\mu}}  + \frac{\hat{\mu}^2 - 1}{2 \hat{\mu}}  = \frac{\check{\mu}^2 + 1}{2 \check{\mu}}  + \frac{\check{\mu}^2 - 1}{2 \check{\mu}}  = \frac{\check{\mu}}{2}
\end{aligned}
\end{equation}
using the first identity in (\ref{eq:interestIdents}) and the identity in (\ref{eq:claimIdent}). This completes the proof. 
\end{proof}

\begin{corollary}
Let $\alpha_a, \alpha_b, \alpha_c \in (0,\pi)$ with $2\pi - \alpha_a -\alpha_b - \alpha_c \in (0,\pi)$; let $\beta_b, \beta_c \in (0,\pi)$;  let $\sigma_b \in \mathcal{MV}(\alpha_b, \beta_b),\sigma_c \in \mathcal{MV}(\alpha_c, \beta_c)$; define $\mu_{bc} := \mu_{2b}^{\sigma_b} \mu_{2c}^{\sigma_c}$.   
\begin{itemize}
\item If $|\mu_{bc}| = 1$, then there are no pairs $(\beta_a, \sigma_a) \in (0,\pi) \times \mathcal{MV}(\alpha_a, \beta_a)$ such that $|\mu_{abc}| = 1$.  
\item Otherwise, $|\mu_{bc}| \neq 1$ and there are exactly two pairs $ (\beta^{\pm}_a, \sigma_a) \in (0,\pi) \times \mathcal{MV}(\alpha_a, \beta_a^{\pm})$ such that $|\mu_{abc}|  = 1$.  These pairs satisfy 
\begin{equation}
\begin{aligned}
\sigma_a1= -\sign\big( \mu_{bc}^2 -1  \big), \quad \beta^{\pm}_a = \arccos\Big(\sigma_a \frac{\pm 2 \mu_{bc} - (\mu_{bc}^2 + 1) \cos \alpha_a }{\pm2\mu_{bc} \cos\alpha_a  - (\mu_{bc}^2 + 1)} \Big).
\end{aligned}
\end{equation}
\end{itemize}
\end{corollary}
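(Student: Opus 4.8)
The plan is to reduce the statement to a question about the single fold angle multiplier $\mu_{2a}^{\sigma_a}$ and then to rerun the argument from the proof of Theorem~\ref{DesignTheorem} with one extra sign of freedom. Since $\bar\gamma_{1a}^{-\sigma_a}$ is the inverse of $\bar\gamma_{2a}^{\sigma_a}$ in the group $\mathcal{G}$, Lemma~\ref{GroupLemma} gives $\mu_{1a}^{-\sigma_a}=(\mu_{2a}^{\sigma_a})^{-1}$, hence $\mu_{abc}=\mu_{1a}^{-\sigma_a}\mu_{2b}^{\sigma_b}\mu_{2c}^{\sigma_c}=\mu_{bc}/\mu_{2a}^{\sigma_a}$. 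Therefore $|\mu_{abc}|=1$ if and only if $\mu_{2a}^{\sigma_a}=\epsilon\,\mu_{bc}$ for some $\epsilon\in\{+1,-1\}$, and I would count the pairs $(\beta_a,\sigma_a)$ with $\sigma_a\in\mathcal{MV}(\alpha_a,\beta_a)$ that solve $\mu_{2a}^{\sigma_a}=\epsilon\mu_{bc}$ for some such $\epsilon$. Fixing $\epsilon$, Proposition~\ref{PropDesign} (applied with $\hat\mu=\mu_{2a}^{\sigma_a}$ and $\check\mu=\epsilon\mu_{bc}$, using $\epsilon^{-1}=\epsilon$) turns this into the two conditions $\frac{(\mu_{2a}^{\sigma_a})^2+1}{2\mu_{2a}^{\sigma_a}}=\epsilon\frac{\mu_{bc}^2+1}{2\mu_{bc}}$ and $\sign\!\big(\tfrac{(\mu_{2a}^{\sigma_a})^2-1}{2\mu_{2a}^{\sigma_a}}\big)=\epsilon\,\sign\!\big(\tfrac{\mu_{bc}^2-1}{2\mu_{bc}}\big)$.

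Next I would handle the first condition together with the degenerate case. Using $\frac{(\mu_{2a}^{\sigma_a})^2+1}{2\mu_{2a}^{\sigma_a}}=\frac{-\sigma_a+\cos\alpha_a\cos\beta_a}{\cos\beta_a-\sigma_a\cos\alpha_a}$ (read off from the proof of Proposition~\ref{firstProp}), the first condition rearranges to the linear equation $\big(2\mu_{bc}\cos\alpha_a-\epsilon(\mu_{bc}^2+1)\big)\cos\beta_a=\sigma_a\big(2\mu_{bc}-\epsilon(\mu_{bc}^2+1)\cos\alpha_a\big)$. The identity $\big(2\mu_{bc}\cos\alpha_a-\epsilon(\mu_{bc}^2+1)\big)^2-\big(2\mu_{bc}-\epsilon(\mu_{bc}^2+1)\cos\alpha_a\big)^2=(\mu_{bc}^2-1)^2\sin^2\alpha_a$, which is $\geq0$ and vanishes exactly when $|\mu_{bc}|=1$ (since $\alpha_a\in(0,\pi)$), then settles both cases: if $|\mu_{bc}|=1$ the coefficient of $\cos\beta_a$ is nonzero and the equation forces $\cos\beta_a=\pm1$, excluded by $\beta_a\in(0,\pi)$, so there is no solution; if $|\mu_{bc}|\neq1$ the coefficient is again nonzero and the right-hand side over it has magnitude $<1$, so $\beta_a$ is uniquely fixed by $\sigma_a$ (and $\epsilon$), and clearing the common factor $\epsilon$ from numerator and denominator yields exactly the claimed expression $\beta_a^{\epsilon}$.

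Finally, for $|\mu_{bc}|\neq1$, the second (sign) condition must pin down $\sigma_a$. From the proof of Proposition~\ref{firstProp}, $\frac{(\mu_{2a}^{\sigma_a})^2-1}{2\mu_{2a}^{\sigma_a}}=\frac{\sin\alpha_a\sin\beta_a}{\cos\beta_a-\sigma_a\cos\alpha_a}$ with positive numerator, so the condition reads $\sign(\cos\beta_a-\sigma_a\cos\alpha_a)=\epsilon\,\sign\!\big(\tfrac{\mu_{bc}^2-1}{2\mu_{bc}}\big)$. Substituting the formula for $\cos\beta_a$ gives $\cos\beta_a-\sigma_a\cos\alpha_a=\sigma_a\frac{2\mu_{bc}\sin^2\alpha_a}{2\mu_{bc}\cos\alpha_a-\epsilon(\mu_{bc}^2+1)}$, and since $|\mu_{bc}|\neq1$ forces $2|\mu_{bc}|<\mu_{bc}^2+1$ while $|\cos\alpha_a|\leq1$, the denominator has sign $-\epsilon$; hence the left side has sign $-\sigma_a\epsilon\,\sign(\mu_{bc})$ and, dividing by $\epsilon\,\sign(\mu_{bc})$, the condition collapses to $\sigma_a=-\sign(\mu_{bc}^2-1)$, independently of $\epsilon$. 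The same fraction has nonzero numerator and denominator, so $\cos\beta_a-\sigma_a\cos\alpha_a\neq0$, which is precisely $\sigma_a\in\mathcal{MV}(\alpha_a,\beta_a)$; and with this $\sigma_a$ both conditions hold, so $\mu_{2a}^{\sigma_a}=\epsilon\mu_{bc}$ and $|\mu_{abc}|=1$. To finish I would check $\beta_a^{+}\neq\beta_a^{-}$: equating the expressions for $\cos\beta_a^{\pm}$ and cross-multiplying reduces to $\mu_{bc}(\mu_{bc}^2+1)\sin^2\alpha_a=0$, impossible; so when $|\mu_{bc}|\neq1$ there are exactly the two pairs claimed. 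The one genuinely delicate point is this sign bookkeeping — nailing the sign of $2\mu_{bc}\cos\alpha_a-\epsilon(\mu_{bc}^2+1)$ via the strict inequality $2|\mu_{bc}|<\mu_{bc}^2+1$ (valid precisely because $|\mu_{bc}|\neq1$) and then confirming that the forced $\sigma_a$ lies in $\mathcal{MV}(\alpha_a,\beta_a^{\pm})$; everything else transcribes the computations in the proof of Theorem~\ref{DesignTheorem}.
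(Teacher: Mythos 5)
Your proposal is correct and is essentially the argument the paper intends: the corollary is stated without proof, with the remark that one should "use an argument similar to the proof of [Theorem \ref{DesignTheorem}]," and your reduction of $|\mu_{abc}|=1$ to $\mu_{2a}^{\sigma_a}=\epsilon\,\mu_{bc}$, followed by Proposition \ref{PropDesign} and the same sign bookkeeping, is exactly that adaptation. The details check out, including the nonvanishing of the coefficient of $\cos\beta_a$ in both cases and the verification that $\beta_a^{+}\neq\beta_a^{-}$.
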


\section{Applications}\label{sec:Applications}
The design theorem for compatibility at a single panel (Theorem \ref{DesignTheorem}) lends itself naturally to engineering design principles for RFFQM. Particularly, we show how these results enable a \textit{marching algorithm} for explicitly and efficiently  exploring of the configuration space of \textit{all} RFFQM.  We then discuss how this algorithm can be incorporated into an optimization scheme to address the inverse problem: 
\begin{itemize}
\item Can we achieve a targeted three dimensional surface by the deployment of a RFFQM crease pattern?
\end{itemize}
Here we formulate the inverse problem and give some examples, mainly as a means to demonstrate the utility of our theoretical results.  In a forthcoming paper \cite{dang_inverse}, we address the inverse problem in a comprehensive manner, with a view towards broad engineering appeal and potential applications.


\begin{figure}
\centering
\includegraphics[width = 7in]{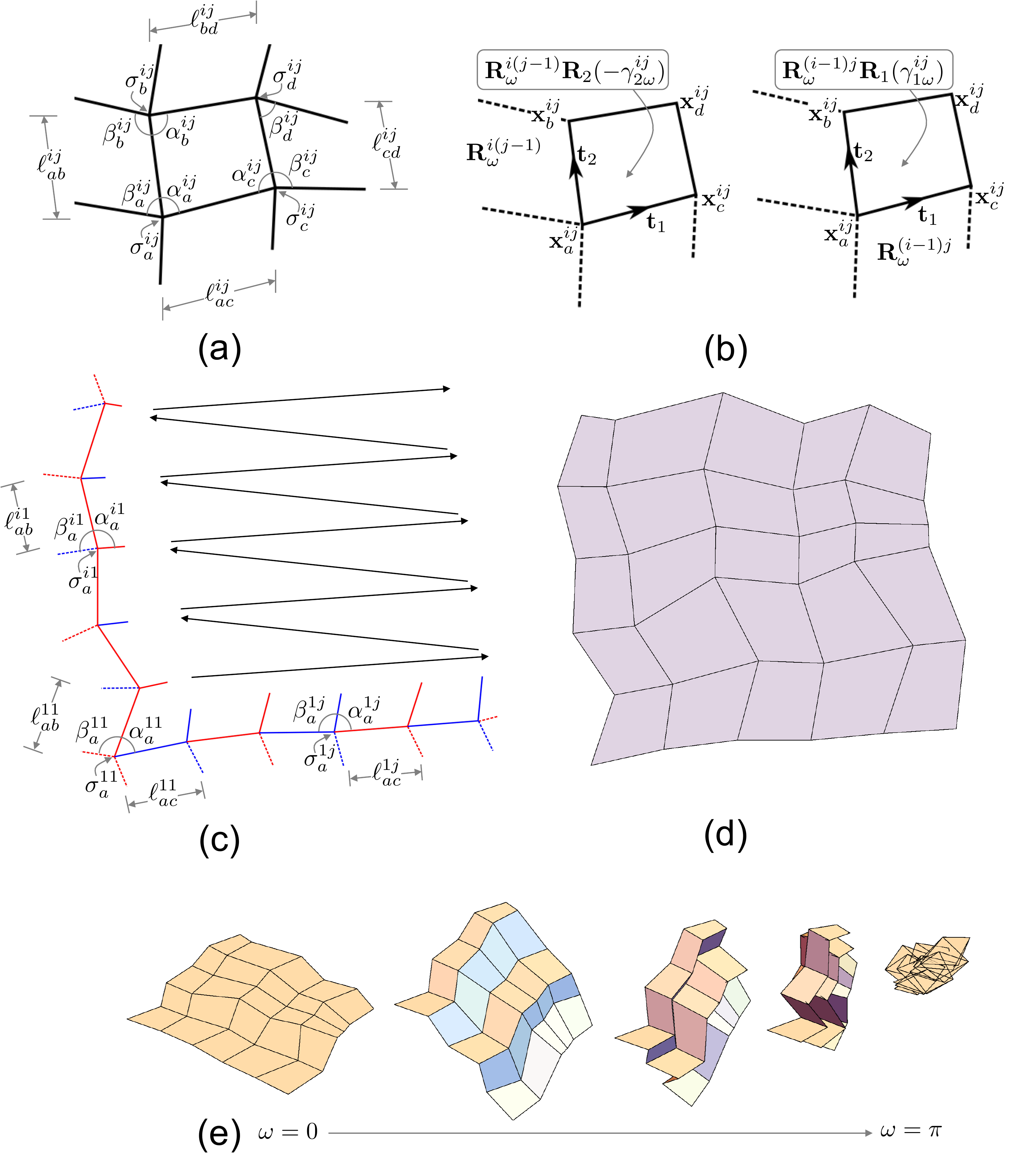}
\caption{A diagram describing the marching procedure.  (a) Notation for the geometry of the crease pattern at each panel.  (b) Notation for the deformation gradients at each panel.  (c). The input data for the marching algorithm used to obtain the crease pattern (d) and its kinematics (e). }
\label{fig:Marching}
\end{figure}

\subsection{A marching algorithm}
We introduce the marching algorithm by following the schematic in Fig.\;\ref{fig:Marching}.  Recall that the design theorem (Theorem \ref{DesignTheorem}) essentially states that, if we know the sector angles and mountain-valley assignments at three of four vertices of a panel, then for RFFQM of the panel's crease pattern: 1)\;there is a unique solution for the fourth vertex or 2)\;there is no solution.  With this in mind, suppose we are interested in a RFFQM crease pattern with $N$ rows of panels and $M$ columns of panels, so that the total number of panels for the quad-mesh crease pattern is $N\times M$.    We then have a design problem for a RFFQM crease pattern involving $N \times M$ applications of Theorem \ref{DesignTheorem}, which results in marching procedure to obtain the crease pattern and kinematics as a function of certain boundary sector angles and mountain-valley assignments.  Note, Lang and Howell \cite{lang2018rigidly} derived a formally similar marching algorithm, except that it involves treating folding angles, not sector angles, as the independent variables.  The advantages of a sector angle parameterization of marching will become apparent in our  discussion of the inverse problem (Section \ref{ssec:Inverse}).

A couple of additional points of emphasis are: 
\begin{itemize} 
\item The marching algorithm for the crease pattern and its kinematics is computationally efficient.  Precisely, the algorithm computes the design and kinematics of an $N \times M$ RFFQM crease pattern with a total computational time of
\begin{equation}
\begin{aligned}
\text{Computational Time} \sim O(N) \times O(M) \times O(\# \text{ of computed deformation}).
\end{aligned}
\end{equation}
Recall that each RFFQM has a one parameter family of origami deformations from flat to folded flat.  So the ``\# of computed deformations" above is simply the discretization of this one parameter family.
\item The marching algorithm for the crease pattern and its kinematics is also completely explicit.  In particular, every quantity of interest is obtained by a direct formula, not an optimization.   We emphasize this point below by explicitly enumerating all the formulas that are necessary and sufficient for the marching procedures (i.e., with (\ref{eq:marchingSector})-(\ref{eq:inputMarching}) for the crease pattern and (\ref{eq:creaseInputs})-(\ref{eq:finalCalc}) for the kinematics).  No detail, large or small, is suppressed.  As such, these formula can be directly implemented into any standard computational solver to obtain the designs and deformations of RFFQM. 
\end{itemize}

To explain the procedure, we will always suppose the scheme below marches from panel to panel using the ordering $(i,j) = (1,1), \ldots, (1,M), (2,1), \ldots , (N,M)$, consistent with the sketch in Fig.\;\ref{fig:Marching}(c). 

\bigskip

\textbf{Marching to obtain the crease pattern.} We populate the local sector angles, mountain-valley assignments  and side lengths for this design problem.  The notation here follows Fig.\;\ref{fig:Marching}(a). 

We consider the $(i,j)$-panel for $i, j > 1$, and let us assume that the ``marching" (by our to be determined scheme) has succeeded at every step prior to this panel.  Then, most of the sector angles, mountain-valley assignments and side lengths at the $(i,j)$-panel are informed by the previously determined panels. Specifically, the sector angles at this panel satisfy
\begin{equation}
\begin{aligned}\label{eq:marchingSector}
&\alpha_{a}^{ij} = \beta_c^{i(j-1)} , \quad \beta_a^{ij} = \alpha_{c}^{i(j-1)}, \quad \alpha_{b}^{ij} = \beta_d^{i(j-1)} , \quad \beta_b^{ij} = 2\pi - \alpha_{a}^{i(j-1)} - \alpha_b^{i(j-1)}  -\alpha_c^{i(j-1)} , \\
&\qquad \alpha_c^{ij} = \pi - \beta_d^{(i-1)j}, \quad  \beta_{c}^{ij} = -\pi+ \alpha_{a}^{(i-1)j} + \alpha_b^{(i-1)j}  +\alpha_c^{(i-1)j}
\end{aligned}
\end{equation}
due to Kawasaki's conditions, the mountain-valley assignments satisfy 
\begin{equation}
\begin{aligned}\label{eq:marchingMV}
\sigma_a^{ij} = \sigma_c^{i(j-1)}, \quad \sigma_b^{ij} = \sigma_d^{i(j-1)}, \quad \sigma_c^{ij} = \sigma_d^{(i-1)j},
\end{aligned}
\end{equation}
and the side length satisfy 
\begin{equation}
\begin{aligned}\label{eq:marchingSides}
\ell_{ab}^{ij} = \ell_{cd}^{i(j-1)}, \quad \ell_{ac}^{ij} = \ell_{bd}^{(i-1)j}.
\end{aligned}
\end{equation}
In contrast, the sector angle $\beta_d^{ij}$, mountain-valley assignment $\sigma_d^{ij}$, and the side lengths $\ell_{bd}^{ij}$ and $\ell_{cd}^{ij}$ are not determined from the panel's neighbors; instead, they are constrained by foldability and consistency considerations.   

Particularly, the pair $(\beta_d^{ij}, \sigma_d^{ij})$ at the panel is evaluated based on a direct application of Theorem \ref{DesignTheorem} to enforce a RFFQM crease pattern.  Explicitly, we check that 
\begin{equation}
\begin{aligned}\label{eq:firstCheck}
(\text{first check:}) \quad 2 \pi - \alpha^{ij}_a - \alpha^{ij}_b - \alpha^{ij}_c \in  (0,\pi),
\end{aligned}
\end{equation} 
as this is required for the interior angles of the $(i,j)$-panel sum to $2\pi$.  If successful, we then proceed to define the folding angle multiplier
\begin{equation}
\begin{aligned}
\mu_{abc}^{ij} :=  \mu_{1}^{-\sigma^{ij}_a}(\alpha_a^{ij}, \beta_a^{ij}) \mu_{2}^{\sigma^{ij}_b}(\alpha_b^{ij}, \beta_b^{ij}) \mu_{2}^{\sigma^{ij}_c}(\alpha_c^{ij}, \beta_c^{ij}),
\end{aligned}
\end{equation}
and check that 
\begin{equation}
\begin{aligned}\label{eq:secondCheck}
(\text{second check:}) \quad  |\mu_{abc}^{ij}| \neq 1.
\end{aligned}
\end{equation}
Assuming success on the latter, we then determine the pair $(\beta_d^{ij}, \sigma_d^{ij}) \in (0,\pi) \times \mathcal{MV}(\alpha_d^{ij},  2 \pi - \alpha^{ij}_a - \alpha^{ij}_b - \alpha^{ij}_c)$ uniquely and explicitly by the design formulas 
\begin{equation}
\begin{aligned}\label{eq:marchingFinal}
\sigma^{ij}_d1= -\sign\big( (\mu^{ij}_{abc})^2 -1  \big), \quad \beta^{ij}_d = \arccos\Big(\sigma^{ij}_d \frac{2 \mu^{ij}_{abc} - ((\mu^{ij}_{abc})^2 + 1) \cos(\alpha^{ij}_a + \alpha^{ij}_b + \alpha^{ij}_c) }{2\mu^{ij}_{abc} \cos(\alpha^{ij}_a + \alpha^{ij}_b + \alpha^{ij}_c) - ((\mu^{ij}_{abc})^2 + 1)} \Big).
\end{aligned}
\end{equation}
Finally, we turn our attention to the side lengths  $\ell_{bd}^{ij}$ and $\ell_{cd}^{ij}$.  Since all the interior angles of the $(i,j)$-panel are known,  these angles, combined with the left and bottom side lengths in (\ref{eq:marchingSides}), determine $\ell_{bd}^{ij}$ and $\ell_{cd}^{ij}$.  Precisely,
\begin{equation}
\begin{aligned}\label{eq:getSides}
\left(\begin{array}{c} \ell_{cd}^{ij} \\ \ell_{bd}^{ij} \end{array}\right) = \mathbf{L}(\alpha_{a}^{ij}, \alpha_b^{ij}, \alpha_c^{ij}) \left(\begin{array}{c} \ell_{ab}^{ij} \\ \ell_{ac}^{ij} \end{array}\right), \quad  \mathbf{L}(\alpha_{a}, \alpha_b, \alpha_c) := \left(\begin{array}{cc}  \tfrac{-\sin \alpha_b}{\sin(\alpha_a + \alpha_b + \alpha_c)}   &  \tfrac{\sin(\alpha_a + \alpha_b)}{\sin(\alpha_a + \alpha_b + \alpha_c)} \\ 
\tfrac{\sin(\alpha_a + \alpha_c)}{\sin(\alpha_a + \alpha_b + \alpha_c)} & \tfrac{-\sin \alpha_c}{\sin(\alpha_a + \alpha_b + \alpha_c)} \end{array}\right).
\end{aligned}
\end{equation}
As these are ``side lengths", this calculation is of course subject to the constraints
\begin{equation}
\begin{aligned}\label{eq:finalCheck}
(\text{final check:}) \quad \ell_{cd}^{ij} > 0 ,\quad \ell_{bd}^{ij} > 0.
\end{aligned}
\end{equation}
In total, the collective calculations in (\ref{eq:firstCheck})-(\ref{eq:finalCheck}) reveal an explicit procedure to obtain the unique crease pattern at the $(i,j)$-panel that admits RFFQM for the pre-determined sector angles (\ref{eq:marchingSector}), mountain-valley assignments (\ref{eq:marchingMV}) and side lengths (\ref{eq:marchingSides}).

Clearly then, with the outline above, we are building the core structure of a marching algorithm to obtain an overall $N \times M$ RFFQM crease pattern. There are, however, some finer points to address:   The first point is the initial input.  Notice that, for $i$ or $j =1$, some of quantities in  (\ref{eq:marchingSector}), (\ref{eq:marchingMV})  and (\ref{eq:marchingSides}) cannot be prescribed by their neighboring panels (as there is no such corresponding neighboring panel).   Instead, we treat  these parameters, i.e., 
\begin{equation}
\begin{aligned}\label{eq:inputMarching}
(\text{input data:}) \quad \begin{cases}
(\alpha_{a}^{11}, \beta_a^{11}), \ldots, (\alpha_a^{(N+1)1}, \beta_a^{(N+1)1}) \in (0,\pi)^2  \text{ and } \neq (\pi/2,\pi/2),\\
(\alpha_{a}^{12}, \beta_a^{12}), \ldots, (\alpha_a^{1(M+1)}, \beta_a^{1(M+1)}) \in (0,\pi)^2 \text{ and } \neq (\pi/2,\pi/2), \\
\sigma_a^{11} \in \mathcal{MV}(\alpha_{a}^{11}, \beta_a^{11}), \ldots, \sigma_{a}^{(N+1)1} \in \mathcal{MV}(\alpha_a^{(N+1)1}, \beta_a^{(N+1)1}), \\
\sigma_a^{12} \in \mathcal{MV}(\alpha_{a}^{12}, \beta_a^{12}), \ldots, \sigma_{a}^{1(M+1)} \in \mathcal{MV}(\alpha_a^{1(M+1)}, \beta_a^{1(M+1)}), \\
\ell_{ab}^{11}, \ldots, \ell_{ab}^{N1} >0, \\
\ell_{ac}^{11}, \ldots, \ell_{ac}^{1M} > 0,
\end{cases}
\end{aligned}
\end{equation}
 as input data to the marching algorithm.   Then, by applying all consistency conditions with these parameters (e.g., $\alpha_b^{i1} = \alpha_a^{(i+1)1}, \ldots)$, we can initialize the procedure in (\ref{eq:marchingSector})-(\ref{eq:finalCheck}) at the $(1,1)$-panel and march without any inconsistency.  The general physical picture is sketched in Fig.\;\ref{fig:Marching}(c).  The input data in (\ref{eq:inputMarching}) serves to parameterize the left and bottom boundaries of the crease pattern.  By marching, we simply discover the overall RFFQM crease pattern consistent with this data (Fig.\;\ref{fig:Marching}(d)).

To the last point, we can now properly interpret the checks (\ref{eq:firstCheck}), (\ref{eq:secondCheck}) and (\ref{eq:finalCheck}).  Given the input parameters (\ref{eq:inputMarching}) on the left and bottom boundaries and following through on the reasoning in the design theorem and its application to marching, there are only two possibilities:
\begin{itemize}
\item There is a unique $N\times M$ RFFQM crease pattern consistent with the prescribed geometry and mountain-valley assignments on the left and bottom boundary.\
\item Or there is no such crease pattern consistent ``\ldots" boundary.
\end{itemize}
The former statement is simply the result of marching, via (\ref{eq:marchingSector})-(\ref{eq:finalCheck}), without failure for the input.  Alternatively, the latter is deduced when one of the checks in  (\ref{eq:firstCheck}), (\ref{eq:secondCheck}) and (\ref{eq:finalCheck})  is violated during the process of marching.   In this case, the algorithm cannot continue, and we are left with the fact that the chosen input data is incompatible with RFFQM. 

\bigskip

\textbf{A linear program for the side lengths.}  The aforementioned incompatibility is somewhat unsatisfying; really, it calls into question of robustness of the algorithm when dealing with  crease patterns that have a large number of panels.  There are, however, some convenient tools to address incompatibility and robustness. Along this line, we build on an idea of Lang and Howell \cite{lang2018rigidly}. The basic heuristic is that this incompatibility is unfortunately extremely sensitive to the side length conditions (\ref{eq:finalCheck}).   To this point, notice that the calculation of sector angle and mountain-valley assignments  (\ref{eq:firstCheck})-(\ref{eq:marchingFinal}) does not require the lengths.  We may therefore compute by marching all the sector angles $\boldsymbol{\alpha}_{Tot} := (\alpha_{a}^{11}, \alpha_b^{11}, \ldots, \alpha_b^{NM}, \alpha_c^{NM})$ and all the mountain-valley assignments $\boldsymbol{\sigma} := (\sigma_a^{11}, \ldots, \sigma_a^{NM})$ prior to dealing with the lengths.   Then, rather than treat the side lengths  $\ell_{ab}^{11}, \ldots, \ell_{ab}^{N1} >0$ and 
$\ell_{ac}^{11}, \ldots, \ell_{ac}^{1M} > 0$ as input, we can attempt to optimize these subject to the sensitive criterion in (\ref{eq:finalCheck}) that all the lengths of the overall crease pattern are positive.

This optimization can conveniently be formulated as a standard linear program.  In this direction, we introduce the vector $\boldsymbol{\ell} := ( \ell_{ab}^{11}, \ldots, \ell_{ab}^{N1}, \ell_{ac}^{11}, \ldots, \ell_{ac}^{1M})$.  It then follows from (\ref{eq:getSides}) that there exists a linear transformation $\mathbf{L}_{\boldsymbol{\alpha}_{Tot}}^{ij}$, depending only on the sector angles $\boldsymbol{\alpha}_{Tot}$,  such that 
\begin{equation}
( \ell_{cd}^{ij}, ~ \ell_{bd}^{ij} ) = \mathbf{L}_{\boldsymbol{\alpha}_{Tot}}^{ij} \boldsymbol{\ell}.
\end{equation}
By also introducing the vectors $\mathbf{1} = (1, \ldots, 1)$ and $\boldsymbol{\ell}_{\varepsilon}$ with the same dimension as $\boldsymbol{\ell}$, we obtain the standard linear program\footnote{In keeping with the literature on linear programming, the inequalities here represent an element-wise constraint.} in $\boldsymbol{\ell}_{\varepsilon}$,
\begin{equation}
\begin{aligned}
\begin{cases}\label{eq:linearP}
\min  \boldsymbol{\ell}_{\varepsilon}\mathbf{1}^T  \\
 \text{subject to } \mathbf{L}_{\boldsymbol{\alpha}_{Tot}}^{ij}\boldsymbol{\ell}_{\varepsilon}  \geq  \varepsilon (1,1) \text{ for }  i \in \{ 1, \ldots, N\}, j \in \{1, \ldots, M\}, \\
 \text{and } \boldsymbol{\ell}_{\varepsilon} \geq 0 
 \end{cases}
\end{aligned}
\end{equation}
for $\varepsilon \geq 0$.
 As is well-known, standard and computationally efficient algorithms can quickly evaluate the feasibility of the domain of the inequalities above and, if feasible, compute a solution to the optimization.  In short, this linear program provides a robust and efficient framework for ensuring a valid crease pattern or declaring that such a pattern is infeasible.  Particularly, in assuming a case $\varepsilon > 0$ is feasible and has solution $\boldsymbol{\ell}^{\star}_{\varepsilon}$ to (\ref{eq:linearP}), we can set  
\begin{equation}
\begin{aligned}
 ( \ell_{ab}^{11}, \ldots, \ell_{ab}^{N1}, \ell_{ac}^{11}, \ldots, \ell_{ac}^{1M})  = \boldsymbol{\ell}^{\star}_{\varepsilon} + \delta \mathbf{1}.
\end{aligned}
\end{equation}
By continuity and using that $\boldsymbol{\ell}^{\star}_{\varepsilon}$ obeys the constraints in (\ref{eq:linearP}), there is a $\delta_{\varepsilon} > 0$ such that  for all $\delta \in (0, \delta_{\varepsilon})$, the side lengths $(\ell_{cd}^{ij}, \ell_{bd}^{ij}) := \mathbf{L}_{\boldsymbol{\alpha}_{Tot}}^{ij}(\boldsymbol{\ell}^{\star}_{\varepsilon} + \delta \mathbf{1})$ are strictly positive for all $i,j$.  So by choosing $\delta$ accordingly, we obtain a valid crease pattern by this procedure for a given set of sector angles $\boldsymbol{\alpha}_{Tot}$.  On the other hand, if this linear program fails the  feasibility test for the choice $\varepsilon = 0$, then we can be assured that the sector angles $\boldsymbol{\alpha}_{Tot}$, obtained by marching, cannot produce a valid crease pattern.

In practice, while capable of quickly assessing the feasibility of inputed sector angles $\boldsymbol{\alpha}_{Tot}$, this program is no substitute for sound engineering judgment or an ``Origamist" intuition.  In fact, from our exploration of the configuration space by this method, it appears that most feasible sector angles will result in a crease pattern with panels of  disparate aspect ratios. To illustrate this point, consider the example in Fig.\;\ref{fig:Random}.  Here, we input boundary sector angles in (\ref{eq:inputMarching}) by a uniform distribution whose average gives the classical Miura-Ori and  whose support is $\pm0.05 \pi$ to the left and right of this average.  We also input the mountain-valley assignments consistent with a Miura-Ori.  We then march, as above, to obtain all the sector angles $\boldsymbol{\alpha}_{Tot}$ and mountain-valley assignments $\boldsymbol{\sigma}$, and we use the sector angles to compute the side lengths via the optimization (\ref{eq:linearP}).   One such result of this procedure\textemdash and a seemingly generic one at that\textemdash is the crease pattern in Fig.\;\ref{fig:Random}.  Notice that some of the side lengths emerging from the optimization are driven close to zero, while others take on rather large values.  This is hardly in keeping with manufacturability considerations. 

A natural next step might be to replace the (relatively arbitrary) linear objective function in (\ref{eq:linearP}) with a non-linear one that attempts to keep the aspect ratios of the panels as uniform as possible, while optimizing over the same feasible set.  However, this generalization would be a significant departure from the well-developed and efficient algorithms used for the linear program.  So we leave it a topic of future research.  In any case, one can always start from a well-known crease pattern (e.g., the Miura-Ori), and build a new pattern by slight perturbations that do not necessarily require implementation of the linear program for robustness.  In fact, we follow this approach\textemdash with marked success\textemdash when discussing the inverse problem in the next section.

\begin{figure}
\centering
\includegraphics[width = 6.5in]{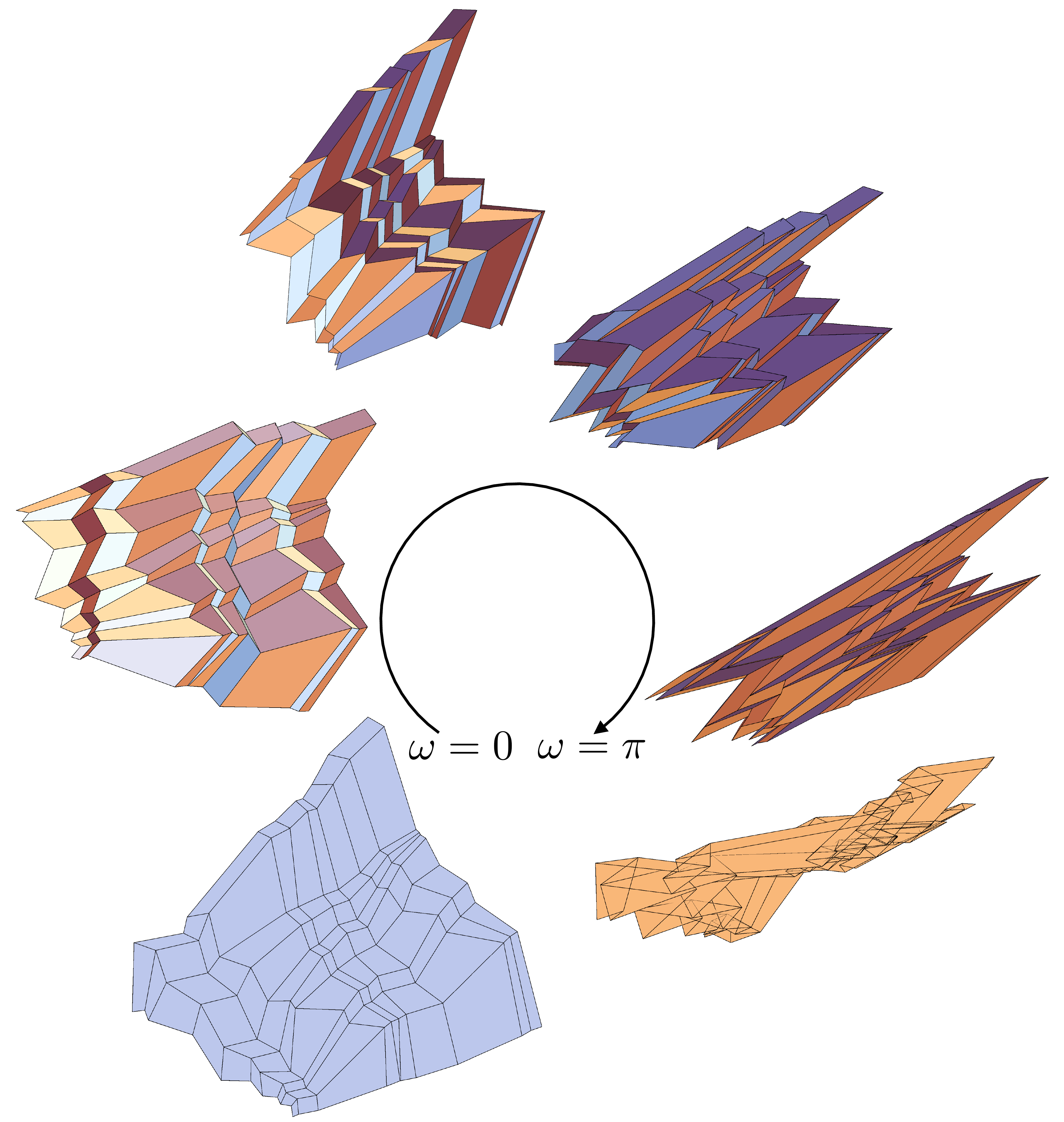}
\caption{A ``generic" RFFQM crease pattern obtained the marching algorithm.  The input sector angles are chosen from a uniform distribution that perturbs around a classical Miura-Ori,  the side lengths are determined by optimization using the linear program, and the mountain-valley assignment is consistent with the Miura-Ori.}
\label{fig:Random}
\end{figure}

\bigskip

\textbf{Marching to obtain the kinematics.}  Let us assume we have completed the marching procedure above\textemdash either by incorporating the linear program or not\textemdash and obtained all sectors angels, mountain-valley assignments and side lengths at every $(i,j)$-panel in the $N \times M$ crease pattern.   It follows that every vertex of the crease pattern is determined up to rigid motion.  For an explicit calculation, we introduce the rotation $\mathbf{R}_{\mathbf{e}_3}(\alpha) := \mathbf{e}_3 \otimes \mathbf{e}_3 + \cos \alpha (\mathbf{I} - \mathbf{e}_3 \otimes \mathbf{e}_3) + \sin \alpha (\mathbf{e}_2 \otimes \mathbf{e}_1 - \mathbf{e}_1 \otimes \mathbf{e}_2)$  and prescribe the $(1,1)$-panel by the vertices on $\mathbb{R}^3$,  
\begin{equation}
\begin{aligned}\label{eq:creaseInputs}
\mathbf{x}_a^{11} = \mathbf{0}, \quad \mathbf{x}_c^{11} = \ell_{ac}^{11} \mathbf{e}_1, \quad \mathbf{x}_b^{11} =\tfrac{\ell_{ab}^{11}}{\ell_{ac}^{11}} \mathbf{R}_{\mathbf{e}_3}(\alpha_{a}^{11}) (\mathbf{x}_c^{11} - \mathbf{x}_a^{11}), \quad \mathbf{x}_d^{11} = \mathbf{x}_b +  \tfrac{\ell_{bd}^{11}}{\ell_{ab}^{11}}  \mathbf{R}_{\mathbf{e}_3}(\alpha_{b}^{11}) (\mathbf{x}_a^{11} - \mathbf{x}_b^{11})
\end{aligned}
\end{equation}
(see Fig.\;\ref{fig:Marching}(a-b) for the vertex notation).
This then fixes the rigid body motion and enables a marching procedure for all the other vertices.  In particular, the remaining panels along the first row are described by the vertices
\begin{equation}
\begin{aligned}
&\mathbf{x}_a^{1j} = \mathbf{x}_c^{1(j-1)}, \quad \mathbf{x}_b^{1j} = \mathbf{x}_d^{1(j-1)}, \quad \mathbf{x}_c^{1j} = \mathbf{x}_a^{1j} + \tfrac{\ell_{ac}^{1j}}{\ell_{ab}^{1j}} \mathbf{R}_{\mathbf{e}_3}( -\alpha^{1j}_a)( \mathbf{x}_b^{1j} - \mathbf{x}_a^{1j}) , \\
&\qquad  \mathbf{x}_d^{1j} = \mathbf{x}_b^{1j}  + \tfrac{\ell_{bd}^{1j}}{\ell_{ab}^{1j}} \mathbf{R}_{\mathbf{e}_3}(\alpha_b^{1j}) (\mathbf{x}_a^{1j} - \mathbf{x}_b^{1j}),
\end{aligned}
\end{equation}
and the panels for the rest of the rows are described by the vertices
\begin{equation}
\begin{aligned}
&\mathbf{x}_a^{ij} = \mathbf{x}_b^{(i-1)j}, \quad \mathbf{x}_c^{ij} = \mathbf{x}_d^{(i-1)j}, \quad \mathbf{x}_b^{ij} = \mathbf{x}^{ij}_a +  \tfrac{\ell^{ij}_{ab}}{\ell^{ij}_{ac}} \mathbf{R}_{\mathbf{e}_3}(\alpha_a^{ij}) (\mathbf{x}_c^{ij} - \mathbf{x}_a^{ij}),\\
&\qquad  \mathbf{x}_d^{ij} = \mathbf{x}_b^{ij} + \tfrac{\ell^{ij}_{bd}}{\ell^{ij}_{ab}} \mathbf{R}_{\mathbf{e}_3}(\alpha_b^{ij})(\mathbf{x}_a^{ij} - \mathbf{x}_b^{ij}), \quad i  >1.
\end{aligned}
\end{equation}

From the vertices, we can build the full kinematics of the RFFQM by making repeated use of the local description of the origami in Theorem \ref{VertexTheorem} and following the schematic in Fig.\;\ref{fig:Marching}(b).  Basically at every $\mathbf{x}_a^{ij}$ vertex, we can define the local tangents as
\begin{equation}
\begin{aligned}\label{eq:tansLocal}
\mathbf{t}_1 := \frac{\mathbf{x}_c^{ij}  -\mathbf{x}_a^{ij}}{|\mathbf{x}_b^{ij}  -\mathbf{x}_a^{ij}|}, \quad \mathbf{t}_2 := \frac{\mathbf{x}_b^{ij}  -\mathbf{x}_a^{ij}}{|\mathbf{x}_b^{ij}  -\mathbf{x}_a^{ij}|}, \quad \mathbf{t}_3 := \mathbf{R}_{\mathbf{e}_3}(\beta_a^{ij}) \mathbf{t}_2, \quad \mathbf{t}_4 :=  \mathbf{R}_{\mathbf{e}_3}(\beta_a^{ij}- \pi) \mathbf{t}_1.
\end{aligned}
\end{equation}  
(Here, we do not label them with an $ij$ since there is never a need to store these quantities.)  We can then consider the four folding angles along the creases associated to these tangent, i.e., $\gamma_1^{ij}$, $\gamma_2^{ij}$, $\gamma_3^{ij}$ and $\gamma_4^{ij}$, and observe that these angles have to be consistent with the local  description in the theorem and with each other when marching from neighbor to neighbor.  On the last point, notice that $\gamma_{3}^{ij}$ must coincide with $\gamma_{1}^{i(j-1)}$ since they both describe the same crease. Likewise $\gamma_4^{ij} = \gamma_2^{(i-1)j}$.  This leads to an explicit procedure.  We define 
\begin{equation}
\begin{aligned}\label{eq:foldInitial}
\gamma_{1\omega}^{11}  =\omega, \quad \gamma_{2\omega}^{11} = \bar{\gamma}_2^{\sigma_a^{11}}(\gamma_{1\omega}^{11}; \alpha_a^{11}, \beta_a^{11}), \quad \gamma_{3\omega}^{11} = -\sigma_a^{11} \gamma_{1\omega}^{11}, \quad \gamma_{4\omega}^{11} = \sigma_a^{11} \gamma_{2\omega}^{11}
\end{aligned}
\end{equation}
at the $\mathbf{x}_a^{11}$ vertex to initialize a procedure for these kinematics for some $\omega \in [-\pi, \pi]$.   The remaining folding angles at the bottom boundary are then determined by
\begin{equation}
\begin{aligned}\label{eq:foldMiddle}
\gamma_{3\omega}^{1j} = \gamma_{1\omega}^{1(j-1)}, \quad \gamma_{1\omega}^{1j} = -\sigma_a^{1j} \gamma_{3\omega}^{1j}, \quad  \gamma_{2\omega}^{1j} = \bar{\gamma}_2^{\sigma_a^{1j}}(\gamma_{1\omega}^{1j}; \alpha_a^{1j}, \beta_a^{1j}), \quad  \gamma_{4\omega}^{1j} = \sigma_a^{1j} \gamma_{2\omega}^{1j},
\end{aligned}
\end{equation}
and the rest of the folding angles needed to build the kinematics  are  completely determined by 
\begin{equation}
\begin{aligned}\label{eq:foldFinal}
\gamma_{4\omega}^{ij} = \gamma_{2\omega}^{(i-1)j}, \quad \gamma_{2\omega}^{ij} = \sigma_a^{ij} \gamma_{4\omega}^{ij}, \quad \gamma_{1\omega}^{ij} = \bar{\gamma}_1^{-\sigma_a^{ij}}( \gamma_{2\omega}^{ij}; \alpha_a^{ij}, \beta_a^{ij}), \quad \gamma_{3\omega}^{ij} = -\sigma_a^{ij} \gamma_{1\omega}^{ij}, \quad i > 1.
\end{aligned}
\end{equation}

From the explicit description of the folding angles, we can compute the deformation gradients by a marching procedure.  To this end, let us focus on the sketch in Fig.\;\ref{fig:Marching}(b) for an $(i,j)$-panel that is not the $(1,1)$-panel.  We have access to either the deformation gradient $\mathbf{R}_\omega^{i(j-1)}$ of the left neighbor or  the deformation gradient $\mathbf{R}_\omega^{(i-1)j}$ of the bottom neighbor (or both), as depicted.  We also know that the deformation gradients must, up to an overall rigid rotation, be as indicated in Fig.\;\ref{fig:SingleVertex}(b) for this vertices folding angles. Given the local tangents in (\ref{eq:tansLocal}), the rotations in (\ref{eq:localRots}), and  folding angles (\ref{eq:foldInitial})-(\ref{eq:foldFinal}), we therefore work out that 
\begin{equation}
\begin{aligned}\label{eq:defGrads}
\mathbf{R}_{\omega}^{ij} = \begin{cases}
\mathbf{R}_{\omega}^{i(j-1)} \mathbf{R}_2(-\gamma_{2\omega}^{ij}) & \text{ if $\mathbf{R}_{\omega}^{i(j-1)}$  is known,} \\
\mathbf{R}_{\omega}^{(i-1)j} \mathbf{R}_1(\gamma_{1\omega}^{ij}) & \text{ if $\mathbf{R}_{\omega}^{(i-1)j}$  is known.}
\end{cases}
\end{aligned}
\end{equation}
Note, when both deformation gradients are known, i.e., $i,j>1$, the two prescriptions for $\mathbf{R}_{\omega}^{ij}$ above will coincide since the sector angles and mountain valley assignments have been chosen to satisfy panel compatibility.  We therefore set $\mathbf{R}_\omega^{11} = \mathbf{I}$ to eliminate the degeneracy of a rigid rotation and use this to initialize the formulas in (\ref{eq:defGrads}) to compute the deformation gradients at all panels.

With the deformation gradients, we are finally able to compute the deformations.  Explicitly, we take the first panel to be undeformed, i.e., 
\begin{equation}
\begin{aligned}
\mathbf{y}_{a\omega}^{11} = \mathbf{x}_{a}^{11}, \quad \mathbf{y}_{b\omega}^{11} = \mathbf{x}_b^{11}, \quad \mathbf{y}_{c\omega}^{11} = \mathbf{x}_c^{11} , \quad \mathbf{y}_{d\omega}^{11} = \mathbf{x}_d^{11},
\end{aligned}
\end{equation}
as $\mathbf{R}_\omega^{11} = \mathbf{I}$.  
Then, every other panel along the first row has deformed vertices 
\begin{equation}
\begin{aligned}
&\mathbf{y}_{a\omega}^{1j} = \mathbf{y}_{c\omega}^{1(j-1)}, \quad \mathbf{y}_{b\omega}^{1j} = \mathbf{y}_{a\omega}^{1j} + \mathbf{R}_\omega^{1j}(\mathbf{x}^{1j}_b - \mathbf{x}^{1j}_a), \quad \mathbf{y}_{c\omega}^{1j} = \mathbf{y}_{a\omega}^{1j} + \mathbf{R}_\omega^{1j}(\mathbf{x}^{1j}_c - \mathbf{x}^{1j}_a),  \\
&\qquad  \mathbf{y}_{d\omega}^{1j} = \mathbf{y}_{a\omega}^{1j} + \mathbf{R}_\omega^{1j}(\mathbf{x}^{1j}_d - \mathbf{x}^{1j}_a).   
\end{aligned}
\end{equation}
Finally, every remaining panel has deformed vertices 
\begin{equation}
\begin{aligned}\label{eq:finalCalc}
&\mathbf{y}_{a\omega}^{ij} = \mathbf{y}_{b\omega}^{(i-1)j}, \quad \mathbf{y}_{b\omega}^{ij} = \mathbf{y}_{a\omega}^{ij} + \mathbf{R}_\omega^{ij}(\mathbf{x}^{ij}_b - \mathbf{x}^{ij}_a), \quad \mathbf{y}_{c\omega}^{ij} = \mathbf{y}_{a\omega}^{ij} + \mathbf{R}_\omega^{ij}(\mathbf{x}^{ij}_c - \mathbf{x}^{ij}_a),  \\
&\qquad  \mathbf{y}_{d\omega}^{ij} = \mathbf{y}_{a\omega}^{ij} + \mathbf{R}_\omega^{ij}(\mathbf{x}^{ij}_d - \mathbf{x}^{ij}_a).
\end{aligned}
\end{equation}

This completes the marching procedure to obtain the kinematics for a given $\omega \in [-\pi,\pi]$.  By repeating the procedure monotonically from $\omega= 0$ to $\omega = \pi$, we obtain the complete kinematical description of the rigidly and flat-foldable motion of the crease pattern (e.g., Fig.\;\ref{fig:Marching}(e)).

\subsection{Discussion on the inverse problem}\label{ssec:Inverse}

Beginning with the pioneering work in \cite{klein2007shaping} on non-Euclidean plates, the design of shape-morphing structures and materials has blossomed into an active topic of research.  In this context, the forward problem refers to computing the shapes achievable given an explicit design of the material or structure.  This is essentially what we have outlined thus far in the context of RFFQM.  Alternatively, the inverse problem refers to designing a material or structure to achieve a targeted shape. This is a far more difficult task. Nevertheless, some results are beginning to emerge in this direction for stimuli-responsive materials \cite{aharoni2018universal, griniasty2019curved, plucinsky2016programming, van2017growth}, kirigami \cite{choi2019programming}, and origami \cite{demaine2017origamizer, dudte2016programming, lang1996computational}.


We focus here on a brief introduction of the inverse problem for RFFQM, with more comprehensive exposition to follow \cite{dang_inverse}.  In contrast to these previous works, the shape-morphing achieved by RFFQM has two novel properties:
\begin{itemize}
\item For a computed design, the full (ideal) kinematics along a prescribed mountain-valley assignments are known; apply the procedure outlined in the previous section.
\item The structure is deployable; that is, it can deform from flat to folded flat as rigidly foldable origami.
\end{itemize}
These aspects are expected to be important in many aforementioned applications \cite{arya2016packaging,arya2016ultralight, kim2018printing,kuribayashi2006self,miura1985method, pellegrino2014deployable}, as they enable predictable deployment from both an the easy-to-manufacture flat state and a fully-folded compact state.  

\begin{figure}
\centering
\includegraphics[width = 6in]{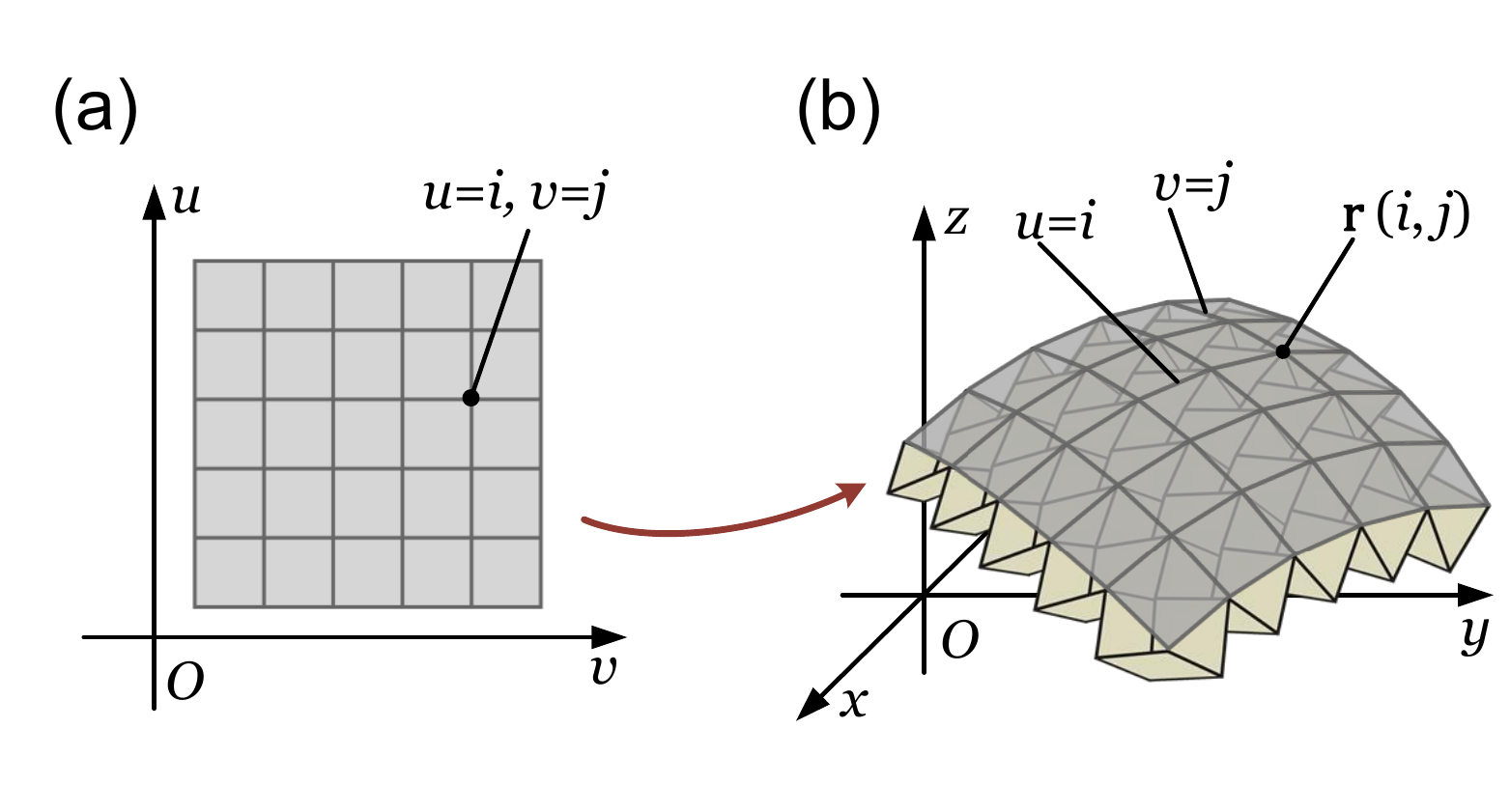}
\caption{Sketch of the parameterizations in the inverse design problem. (a) The targeted surface is discretized by a parameterization on rectangular subset of the $\mathbb{Z}^2$ lattice. (b) Likewise the origami is discretized on this lattice, but with the corner vertices of $2 \times 2$ sets of panels forming the image points of the lattice.}
\label{fig:finite_diff}
\end{figure}

We start simple to build some understanding of feasibility in the inverse problem. Recall that the configuration space for RFFQM depends on the sector angles, side lengths and mountain-valley assignments at the left and bottom boundary (equation (\ref{eq:inputMarching}) and Fig.\;\ref{fig:Marching}(c)). Recall also that the algorithm to explore this configuration space may not admit a valid crease pattern, or, even when it does admit one, there is an entire one-parameter family of deformations associated to this crease pattern.  In other words, the configuration space is simultaneously massive and fraught with robustness issues.  To alleviate these concerns, we restrict our focus here  to patterns consisting of two properties:
\begin{enumerate}
\item[(i)] We study slight perturbations of the boundary sector angles and lengths of a basic Miura-Ori. 
\item[(ii)] We fix the boundary mountain-valley assignments to be consistent with a basic Miura-Ori.
\end{enumerate}
These assumptions are built on the heuristic that, while a crease pattern can look qualitatively similar to a basic Miura-Ori in the flat state, it can be dramatically different by comparison in its partially folded states due to the highly non-linear nature of origami kinematics.  We show that this ``dramatic difference" can be harnessed to achieve a targeted shape.

\begin{figure}[t!]
\centering
\includegraphics[width =\textwidth]{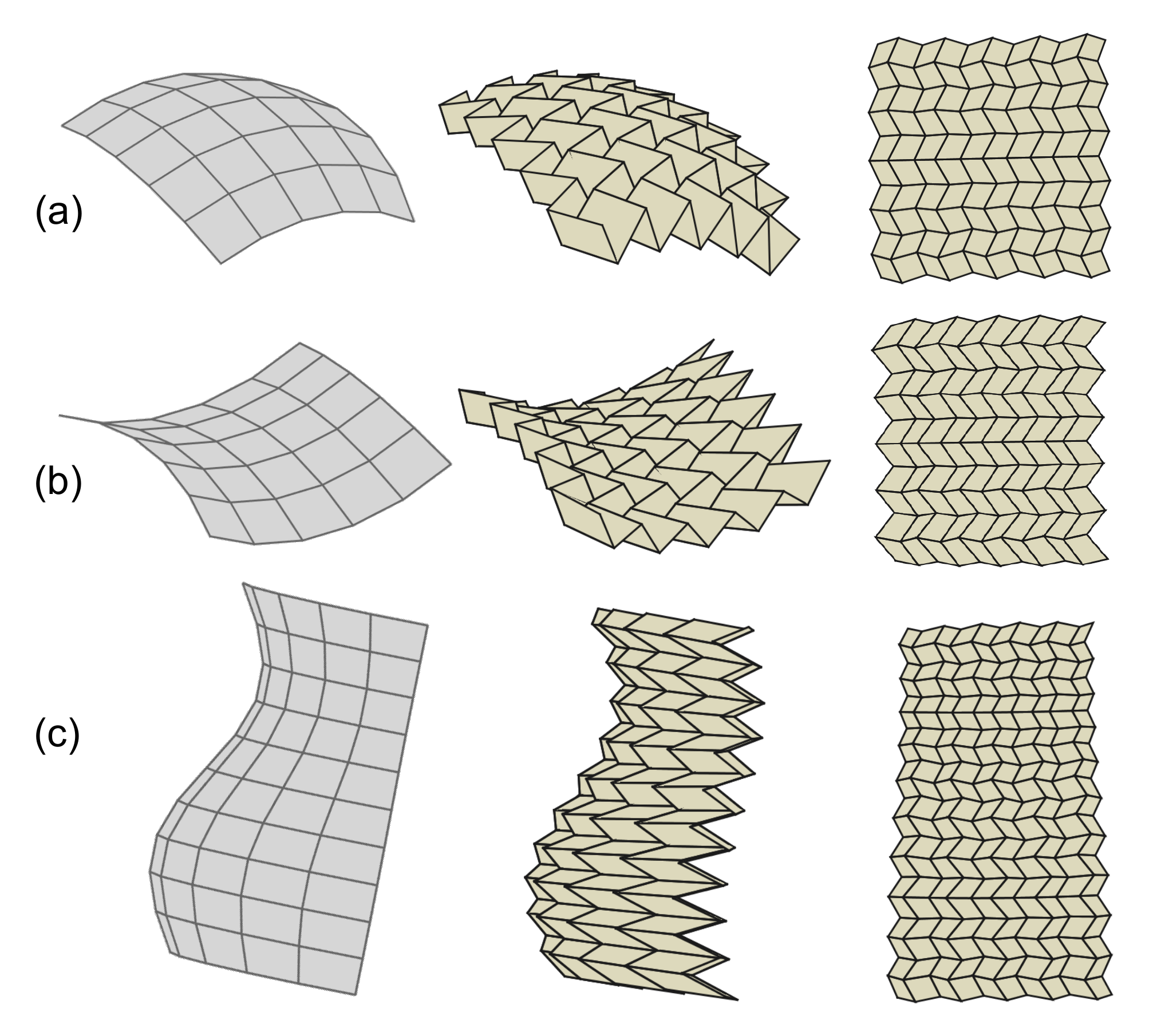}
\caption{Examples of RFFQM to achieve a Spherical Cap (a), a Hyperboloid (b) and a Quarter Vase (c). (Left:) The smooth targeted surface and its discrete mesh. (Middle:) A RFFQM in it optimized partially folded state. (Right:) The flat crease pattern that is slightly perturbed from a Miura-Ori pattern and rigidly folded to achieve the middle state.}
\label{fig:InverseExamples}
\end{figure}

Our basic formulation for the inverse design problem follows the sketch in Fig.\;\ref{fig:finite_diff}. Consider a targeted surface as the image of a sufficiently smooth map $\hat{\mathbf{r}} \colon (0,N) \times (0,M) \rightarrow \mathbb{R}^3$, as depicted in grey.  At every interior pair of integers, we compute, by a finite difference scheme, discrete versions of quantities describing the shape of target surface (e.g., the first and second fundamental forms), which we collect in a vector denoted $\hat{\mathbf{S}}_{i,j}$.  We also study $2N \times 2M$ origami tessellations obtained by the marching procedure above, and we index the vertices of these partially folded origami as $\mathbf{y}_{\boldsymbol{\alpha}, \boldsymbol{\beta}, \boldsymbol{\ell}, \omega}^{ij}$ for $\boldsymbol{\alpha}, \boldsymbol{\beta}, \boldsymbol{\ell}$ parameterizing the relevant boundary data in\footnote{replacing $N$ by $2N$ and $M$ by $2M$ for consistency, of course} (\ref{eq:inputMarching}) and $\omega$ parameterizing the folding parameter.  We then introduce the reparameterization of the origami, $\mathbf{r}_{\boldsymbol{\alpha}, \boldsymbol{\beta}, \boldsymbol{\ell}, \omega} \colon \{ 0,1,\ldots, N\} \times \{ 0,1, \ldots, M\} \rightarrow \mathbb{R}^3$, such that 
\begin{equation}
\begin{aligned}
\mathbf{r}_{\boldsymbol{\alpha}, \boldsymbol{\beta}, \boldsymbol{\ell}, \omega} (i,j) := \mathbf{y}_{\boldsymbol{\alpha}, \boldsymbol{\beta}, \boldsymbol{\ell}, \omega}^{(2i)(2j)}.
\end{aligned}
\end{equation}
Importantly, this new parameterization serves to smooth out some of the  oscillatory behavior of the origami by taking $2 \times 2$ sets of panels as the fundamental meshing. As a result, we have a reasonable comparison between the vector $\hat{\mathbf{S}}_{i,j}$ of our targeted surface and the analogous vector $\mathbf{S}_{i,j}(\boldsymbol{\alpha}, \boldsymbol{\beta}, \boldsymbol{\ell}, \omega)$ obtained by computing the same finite difference quantities for $\mathbf{r}_{\boldsymbol{\alpha}, \boldsymbol{\beta}, \boldsymbol{\ell}, \omega}$.   In this comparison, we treat our objective function as
\begin{equation}
\begin{aligned}\label{eq:optInverse}
Loss( \boldsymbol{\alpha}, \boldsymbol{\beta}, \boldsymbol{\ell}, \omega) \sim  \sum_{i = 0, \ldots, M} \sum_{j = 0, \ldots, N} |\mathbf{S}_{i,j}(\boldsymbol{\alpha}, \boldsymbol{\beta}, \boldsymbol{\ell}, \omega) - \hat{\mathbf{S}}_{i,j}|^2,
\end{aligned}
\end{equation}
up to some numerical considerations not developed here (see \cite{dang_inverse}).  Importantly, $Loss(\boldsymbol{\alpha}, \boldsymbol{\beta}, \boldsymbol{\ell}, \omega) = 0$ provides two discretizations $\hat{\mathbf{r}}$ and $\mathbf{r}_{\boldsymbol{\alpha}, \boldsymbol{\beta}, \boldsymbol{\ell}, \omega}$ that are, essentially, the same up to rigid body motion. As a consequence,  the global minimizers to (\ref{eq:optInverse})  represent a reasonable notion of \textit{solutions to the inverse problem}.  While global minimality is desirable, we aim for the more modest goal of computing local minimizers to (\ref{eq:optInverse});  particularly, for the small subset RFFQM crease patterns that are slight perturbations of the basic Miura-Ori as stated in (i-ii).  This setting allows us to investigate the inverse problem by means of  standard optimization tools.

It is well-known that smooth isometric embeddings are extremely rigid and incapable of adeptly approximating surfaces with non-zero Gaussian curvature (Gauss's {\it Theorema Egregium}). However, this rigidity breaks down when the isometric condition or smoothness is relaxed, as Nash's famed embedding theorem \cite{nash1954c1} shows. To achieve non-trivial Gaussian curvature, one can exploit non-isometric deformation induced by active materials \cite{mostajeran2016encoding, kowalski2018curvature}, or non-smooth isometric embeddings induced by origami/kirigami constructions \cite{CALLENS2018241}.
RFFQM is a version of the latter approach, and so there is no apparent mathematical obstacle to their utility as approximations of arbitrary surfaces.  Yet, approximation theorems involving origami  maps\textemdash of which the most general can be found in  \cite{conti2008confining}\textemdash often involve significant refinement, making their application ill-suited to practical engineering design.  We are therefore motivated to study whether structured flat-foldable origami, as exhibited by the assumption (i-ii), can serve as good approximations to surfaces of varied and non-zero Gaussian curvature. 

In this direction, we test the design methodology outlined above on a series of simple examples that illustrate the range of Gaussian curvatures achievable: a Spherical Cap to achieve positive Gaussian curvature, a Hyperboloid to achieve negative Gaussian curvature, and finally a Quarter Vase to exhibit changes in Gaussian curvature.  The results of our optimization are provided in Fig.\;\ref{fig:InverseExamples}.  We find striking qualitative agreement in these examples, especially given the fact that,  with properties (i-ii), we have confined ourselves to only a small fraction of the configuration space of RFFQM.  

To this point, natural extensions to the underlying assumptions (i-ii) can be made by incorporating ideas of \textit{objective origami}, which we first explored in \cite{feng2019phase} and are developing in more detail in \cite{soorOri}.  By incorporating these extensions and tools from data-driven engineering, we are in the process of gaining a more complete picture of the configuration space.  The results given here suggest this is a promising framework for shape-morphing based on inverse design principles.

\section*{Acknowledgement}

F.F., R.D.J., and P.P. gratefully acknowledge the support of the Air Force Office of Scientific Research through the MURI grant no. FA9550-16-1-0566. The authors would like to thank Robert J. Lang for many helpful discussions and useful comments on this manuscript.

\bibliographystyle{abbrv}

\bibliography{OriBib}

\end{document}